\newcommand{\pagenumbaa}{1}
\theoremstyle{plain}
\newtheorem{theorem}{Theorem}
\newtheorem{lemma}[theorem]{Lemma}
\theoremstyle{definition}
\newcommand{\beq}{\begin{equation}}
\newcommand{\eeq}{\end{equation}}
\newcommand{\beqa}{\begin{eqnarray}}
\newcommand{\eeqa}{\end{eqnarray}}
\newcommand{\bal}{\begin{align}}
\newcommand{\eal}{\end{align}}
\newcommand{\bsp}{\begin{equation}\begin{split}}
\newcommand{\esp}{\end{split}\end{equation}}
\newcommand{\bit}{\begin{itemize}}
\newcommand{\eit}{\end{itemize}}
\newcommand{\ben}{\begin{enumerate}}
\newcommand{\een}{\end{enumerate}}
\newcommand{\nn}{\nonumber}
\newcommand{\SPAN}{\text{span}}
\newcommand{\schweif}[1]{\mathcal{#1}}
\newcommand{\Eig}[2]{\text{Eig}(#1,#2)}
\renewcommand{\sp}[2]{\langle #1,#2 \rangle}
\newcommand{\AR}{\mathbb{R}}
\newcommand{\EW}[1]{\mathbb{E} \left[  #1 \right] }
\newcommand{\avEW}[1]{\hat{\mathbb{E}}_{I} \left[  #1 \right] }
\newcommand*{\ket}[1]{| #1 \rangle}
\newcommand*{\bra}[1]{\langle #1 |}
\newcommand{\HR}{\mathcal{H}}
\newcommand{\deltafunc}[1]{\delta_{\mathbf{#1}}}
\newcommand{\distH}[2]{\text{dist}_{\HR}(\mathbf{#1}, \mathbf{#2})  }
\newcommand{\distHJK}[2]{\text{dist}_{\HR}^{(J,K)}(\mathbf{#1}, \mathbf{#2})  }
\newcommand{\supp}[1]{\mathrm{supp}\{ #1 \} }
\newcommand{\dirac}[1]{\delta_{\mathbf{#1}} }
\newcommand{\Image}[1]{\text{Im}(#1)}
\begin{document}


\title{Localization of Toric Code Defects}


\author{Cyril Stark}

\affiliation
{Theoretische Physik, ETH Zurich, CH-8093 Zurich, Switzerland}

\author{Lode Pollet}

\affiliation
{Theoretische Physik, ETH Zurich, CH-8093 Zurich, Switzerland}

\author{Ata\c{c} Imamo\u{g}lu}

\affiliation
{Institute for Quantum Electronics, ETH Zurich, CH-8093 Zurich, Switzerland}

\author{Renato Renner}

\affiliation
{Theoretische Physik, ETH Zurich, CH-8093 Zurich, Switzerland}


\begin{abstract}

We explore the possibility of passive error correction in the toric code model. We first show that even coherent dynamics, stemming from spin interactions or the coupling to an external magnetic field,  lead to logical errors. We then argue that Anderson localization of the defects, arising from unavoidable fluctuations of the coupling constants, provides a remedy. This protection is demonstrated using general analytical arguments that are complemented with numerical results which demonstrate that self-correcting memory can in principle be achieved in the limit of a nonzero density of identical defects.

\end{abstract}

\pacs{03.67.-a, 72.15.Rn, 05.30.Pr}

\maketitle

\setcounter{page}{\pagenumbaa}
\thispagestyle{plain}


Classical information can be reliably  stored by encoding it in
long-living metastable states of a many-particle system (e.g., the
magnetic surface of a disk). The code states are typically states
carrying different values of an order parameter, separated by an
energy barrier that grows (in the optimal case linearly) with the
system size $N$~\cite{Alickietal07,Pastawskietal10}. Consequently,
the probability that an error occurs decreases exponentially in $N$,
rendering a high robustness of (even very small) memory devices.
Crucially, the stability of such devices is due to the intrinsic
(local) interactions between the particles, and no active error
correction is required during storage. They are thus
often called 'passive' or 'self-correcting' memories.

The situation appears far more challenging when  it comes to the
storage of quantum information. While quantum information can in
principle~\footnote{The positive results on error correction are
almost exclusively based on models that assume limited local (but
possibly correlated) errors.}  be protected against disturbances
using active error-correction~\cite{Shor95,Steane96a,Steane96b}, to
date no realistic many-body system is known to passively (i.e., by
virtue of its natural dynamics) preserve quantum information over
macroscopic timescales. Nevertheless, self-correcting properties may
be obtained by a clever design of the Hamiltonian of the system.  In a
pioneering paper~\cite{Kitaev2003,DKLP02}, Kitaev proposed the toric
code as a topologically protected quantum memory, where
information is stored in degenerate (and locally indistinguishable)
ground states. His proposal prompted an intensive study of the use
and limitations of such systems for information storage (see, e.g., \cite{Bacon08} for an overview). 

In the toric code excitations out of the ground state space can be
removed by active error correction. However, if the corresponding defects have traced out un-contractible loops when
annihilated, this causes an error. In particular, it has been
demonstrated that Kitaev's 2-dimensional toric code cannot provide
robustness against the destructive influence of a thermal
environment or against Hamiltonian perturbations which lead to a
random walk of the thermally excited defects~\cite{AliHor06,CastCham07,Alickietal07,Trebstetal07,NussOrt08,Kay09,Iblisetal09,Vidaletal09,Tupitsynetal10}.
Furthermore, these impossibility results have been generalized to a
wide class of 2-dimensional lattice
systems~\cite{KayCol08,BravTerh09,Pastawskietal10}. On the positive
side, a number of variants of topologically protected systems have been
proposed where the relevant energy barrier grows with the system
size~\cite{Hamma2009, DKLP02, Bacon08, Chesietal10, FabioChesiLoss10}.

In this Letter, we analyze passive error correction in the presence of a large class of spin interactions or the coupling to an external magnetic field, as well as unavoidable or engineered fluctuations in the toric code coupling constants. We demonstrate that coherent defect propagation~\footnote{This must not be confused with propagation caused by
thermal hopping.} would have lead to logical errors, had it not been for the Anderson localization induced by the fluctuations in the coupling constants. In proposed realizations of the toric code as an effective model, e.g., from an
underlying Kitaev's honeycomb model~\cite{Kitaev2006}, perturbations at the `physical level'
(e.g. dipolar interactions in the honeycomb model) induce complicated
perturbations at the effective level of the toric code, which still can be
approximated within the class of interactions considered in this Letter. The conclusion that the stability of the toric code can be improved by randomness was also reached in~\cite{Castelnovo2010} where it was shown that the stability of the topological entanglement entropy is enhanced by the presence of random magnetic fields.

\emph{Error Model---}Consider a square lattice $\Gamma$, with spins sitting on its edges, embedded in an arbitrary two-dimensional manifold. Assume the dynamics is described by the toric code Hamiltonian
\beq\label{Def.H_TC}
	H_{TC} = -J_{m} \sum_{p} \prod_{j \in p} Z_{j} - J_{e} \sum_{s} 	\prod_{j \in s} X_{j} 
\eeq 
($J_{m} > 0$, $J_{e} > 0$, see \cite{Kitaev2003}) where the sums run over all plaquettes $p$ and over all stars $s$, respectively. Eigenvectors $\ket{\psi}$ of $H_{TC}$ with $\prod_{j\in p}Z_{j} \ket{\psi} = - \ket{\psi}$ are said to have a 'magnetic defect' at plaquette $p$. Analogously, an 'electric defect' at star $s$ is detected by $\prod_{j\in s}X_{j}$. Defects of same type are bosons among themselves. Defects of opposite type have mutual exotic braiding phases.  This setup serves as a model for topological quantum memories.  However, it will never be possible to realize the $H_{TC}$-dynamics perfectly in any experiment. In particular, time-independent perturbations will always be present in any experimental realization of the toric code. Two physically meaningful examples of such perturbations are the dipole-dipole interaction
\beq
	H_{\mathrm{d}} = \eta \sum_{i,j, i\neq j} \frac{\vec{\sigma}_{i} \cdot \vec{\sigma}_{j}}{\| \vec{r_{ij}} \|^3} - 3 \frac{  \left( \vec{\sigma}_{i} \cdot \vec{r}_{ij} \right) \left( \vec{\sigma}_{j} \cdot \vec{r}_{ij} \right)  }{\| \vec{r}_{ij} \|^5}
\eeq
($\eta$ collects constant factors) with $\vec{r}_{ij} :=  \vec{r}_{i} -\vec{r}_{j} $, and the effect of an external homogeneous magnetic field
\beq\label{Spin.Hamiltonian.for.external.magnetic.field}
	H_{\mathrm{magnetic}} = \tilde{\eta} \sum_{j} \vec{\sigma}_{j} \cdot \vec{B},
\eeq
where $\vec{\sigma}_{j} = (X_{j},Y_{j},Z_{j})$ denotes the vector of Pauli matrices at spin/edge $j$ and $\tilde{\eta}$ collects constant factors. In this Letter we consider the effect of translation invariant perturbations of the form 
\beq\label{Def.H.I}
	H_{I} = \sum_{\substack{i_{1},...,i_{m},\\  \alpha_{1}, ..., \alpha_{m} \in \{x,y,z\}}}     \xi(i_{1},\alpha_{1},...,i_{m},\alpha_{m}) \vec{\sigma}_{i_{1},\alpha_{1}} \cdots \vec{\sigma}_{i_{m},\alpha_{m}}.
\eeq
Here we sum over all collections of $m$ distinct spins/edges $i_{1}, ..., i_{m}$ and all possible choices of corresponding Pauli operators $\alpha_{1}, ..., \alpha_{m} \in \{ x,y,z \}$. The function $\xi$ has a spatial cutoff, such that $\xi(i_{1}, ..., i_{m}) = 0$ whenever $\max\{ \| i_{k} - i_{l} \|  :  k,l \in \{ 1,...,m \}  \} > R$  ($R < \infty$). The perturbations of the form~\eqref{Def.H.I} can be seen as simple models for interactions of the dipolar type, but also include the perturbation $H_{\mathrm{magnetic}}$ in Eq.~\eqref{Spin.Hamiltonian.for.external.magnetic.field}. The perturbed Hamiltonian reads $H = H_{TC} + H_{I}$. It has been shown~\cite{BravyiHasitings2010} that perturbations of the form~\eqref{Def.H.I} lead to a splitting of the ground state degeneracy that is exponentially small in the system size. 

In lowest order degenerate perturbation theory one approximates the Hamiltonian $H$ by the operator $\sum_{n_{e}, n_{m}} P_{n_{e}n_{m}} H P_{n_{e}n_{m}}$, where $P_{n_{e}, n_{m}}$ is the projector onto the unperturbed eigenspace carrying $n_{e}$ pairs of electric defects and $n_{m}$ pairs of magnetic defects. The pairwise orthogonality of the projectors $P_{n_{e}n_{m}}$ allows for separate treatments of the dynamics that is induced by each of the operators $P_{n_{e}n_{m}} H P_{n_{e}n_{m}}$. Consider the simple special case $n_{e} = 1$, $n_{m} = 0$. To figure out the nature of the dynamics described by $P_{1,0} H_{I} P_{1,0}$ we compute the matrix elements of $P_{1,0} H_{I} P_{1,0}$ with respect to the basis built up by the toric code eigenstates living in the image of $P_{1, 0}$. Terms that either contain an $X_{j}$ or a $Y_{j} = i X_{j}Z_{j}$ give no contribution because they create new magnetic defects (or annihilate existing ones) and consequently map states in the image of $P_{1, 0}$ to its orthogonal complement. For the computation of the matrix elements we are left with the following expression:
\begin{multline}\label{ourdynamics}
	\bra{g_{i}} \prod_{s \in l_{1}} Z_{s}  C \prod_{t \in l_{2}} Z_{t} \ket{g_{j}}
	=	\sum_{i_{1},...,i_{m}: i_{q} \neq i_{t}} \xi(i_{1},z,...,i_{m},z)   \\
		\times \bra{g_{i}} Z_{s_{1}} \cdots Z_{s_{u}} \,    Z_{i_{1}} \cdots Z_{i_{m}} \,  Z_{t_{1}} \cdots Z_{t_{v}}    \ket{g_{j}},
\end{multline}
where
\beq
	C := \sum_{i_{1},...,i_{m}: i_{q} \neq i_{t}} \xi(i_{1},z,...,i_{m},z) \, Z_{i_{1}} \cdots Z_{i_{m}}.
\eeq
and $l_{1} = \{ s_{1}, ..., s_{u} \}$, $l_{2} = \{ t_{1}, ..., t_{v} \}$ are paths on the lattice, and where the vectors $\{ \ket{g_{j}} \}_{j}$ form a basis of the ground state space (the vectors of the form $\prod_{s \in l_{2}} Z_{s} \ket{g_{j}}$ build up a basis in the image of $P_{1,0}$; cf.~\cite{Kitaev2003}). The matrix element within the sum is nonzero if and only if the paths $l_{1}$ and $l_{2}$ differ by the movement of the electric defects that is determined by the action of $Z_{i_{1}} \cdots Z_{i_{m}}$. Note that $P_{1,0} H_{I} P_{1,0}$ does not allow the particles to sit on top of each other (this corresponds to the annihilation of defects). It thus follows that $P_{1,0} H_{I} P_{1,0}$ describes a finite range hopping term with an effective hardcore repulsion. The evolution of the pair of electric defects determined by the spin-Hamiltonian $P_{1,0} H_{I} P_{1,0}$ can equivalently be described by a Hamiltonian $T$ for the evolution of the defects on the lattice $\Gamma$. These observations generalize---up to exotic braiding phases between defects of different type---to arbitrarily many defects~\footnote{Then, only defects of the same type are hardcore repulsive. Our findings are invariant under charge conjugation.}. In case of a homogeneous external magnetic field along the $z$-direction~\footnote{This model may alternatively be analyzed within the low-energy Ising model description~\cite{HammaLidar2008} of the toric code.} we have an intuitive understanding for the nature of the induced dynamics (nearest neighbor hopping of the electric defects with hardcore repulsion): the wave function of a nearest neighbor pair of electric defects will spread arbitrarily during time evolution (cf. the inset of Fig.~\ref{fig:All_in_one}). Note that this leads to the failure of active error correction (fusion of nearest neighbors) at the read out, and the memory will thus become unstable. Next, we will show that this problem will be present for all perturbations of the form~\eqref{Def.H.I}. To apply methods from spectral theory we assume $\Gamma$ to be $\mathbb{Z}^2$.

\emph{Propagation to Infinity---} The goal of this section is to show that there exist initial 2-defect wave functions (e.g., $(n_{e},n_{m}) = (1,0)$) with the property that the two defects travel arbitrarily far away from each other. This eventually leads to an error in the logical qubits. For this we need to determine the Hamiltonians that govern their relative motion. We follow the approach described in~\cite{Graf1997} and~\cite{Albeverio2006} and find that the relative dynamics is generated by a family of hopping Hamiltonians $T(k)_{0} + T(k)_{I}$ acting on the \emph{relative} Hilbert space $l^{2}(\mathbb{Z}^2)$ that is parameterized by the quasi-momentum $k$ ($k \in [0,2\pi)^2$). The operator $T(k)_{0}$ describes translationally invariant, finite-ranged hopping on the relative configuration space $\mathbb{Z}^2$ (i.e., the set of vectors that connect the two defects). The interaction $T(k)_{I}$ on the other hand dictates inhomogeneous hopping and is only supported within a finite neighborhood of the origin in the configuration space of relative motion. In~\cite{EPAPS} we prove that there exist initial states such that the two particles are not found within finite relative distance as time approaches infinity:
\begin{theorem}
	For any quasi-momentum $k$ there exist initial states $\psi_{0}^{(k)} \in l^2(\mathbb{Z}^2)$ for the relative dynamics such that $\lim_{t \rightarrow \infty} \sum_{x \in \Lambda}  \left|   e^{-i  (T(k)_{0} + T(k)_{I} )  t}\psi_{0}^{(k)}(x)   \right|^2 = 0$ for any finite subset $\Lambda$ of the relative configuration space $\mathbb{Z}^2$.
\end{theorem}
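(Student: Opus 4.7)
The plan is to reduce the claim to a spectral property of $T(k):=T(k)_{0}+T(k)_{I}$ on $l^{2}(\mathbb{Z}^{2})$: namely, that $T(k)$ possesses a non-trivial absolutely continuous spectral subspace $\HR_{\mathrm{ac}}(T(k))$. Any initial state drawn from this subspace will then automatically escape every finite region of the relative configuration space, which is exactly what the theorem demands. The argument thus splits into three parts: spectral analysis of the unperturbed $T(k)_{0}$, a Kato--Rosenblum-type stability under the local perturbation $T(k)_{I}$, and a Riemann--Lebesgue/RAGE-type conclusion about escape to infinity.

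First, I would diagonalize $T(k)_{0}$ by Fourier transform to momentum space $L^{2}([0,2\pi)^{2})$, where it becomes multiplication by a trigonometric polynomial $\omega_{k}(p)$ whose degree is controlled by the hopping range inherited from $H_{I}$. Because $\omega_{k}$ is real-analytic and (in the regimes of interest) non-constant on the torus, its level sets have Lebesgue measure zero, so the push-forward of Lebesgue measure under $\omega_{k}$ is absolutely continuous on $\mathbb{R}$. By the spectral theorem this forces $T(k)_{0}$ to have purely absolutely continuous spectrum, i.e.\ $\HR_{\mathrm{ac}}(T(k)_{0})=l^{2}(\mathbb{Z}^{2})$.

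Next, $T(k)_{I}$ is an inhomogeneous hopping term supported only inside a finite neighborhood of the origin of $\mathbb{Z}^{2}$, so it is a finite-rank, self-adjoint, trace-class operator. The Kato--Rosenblum theorem therefore yields complete wave operators $W_{\pm}(T(k),T(k)_{0})$, so the restriction $T(k)\big|_{\HR_{\mathrm{ac}}(T(k))}$ is unitarily equivalent to $T(k)_{0}$; in particular $\HR_{\mathrm{ac}}(T(k))$ is infinite-dimensional. Pick any normalized $\psi_{0}^{(k)}\in\HR_{\mathrm{ac}}(T(k))$. Its spectral measure $\mu_{\psi_{0}^{(k)}}$ is absolutely continuous w.r.t.\ Lebesgue measure, so the Riemann--Lebesgue lemma gives $\langle\phi,e^{-iT(k)t}\psi_{0}^{(k)}\rangle\to 0$ as $t\to\infty$ for every $\phi\in l^{2}(\mathbb{Z}^{2})$. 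Specializing to $\phi=\delta_{x}$ and summing the (finitely many) squared overlaps over $x\in\Lambda$ gives
\[
\lim_{t\to\infty}\sum_{x\in\Lambda}\left|\bigl(e^{-iT(k)t}\psi_{0}^{(k)}\bigr)(x)\right|^{2}=0,
\]
which is the statement of the theorem.

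The main obstacle is the first step: confirming that $\omega_{k}$ is genuinely non-constant on the momentum torus for every quasi-momentum $k$, since this is the point where the microscopic structure of the $H_{I}$-induced hopping actually enters. Degenerate situations in which $\omega_{k}$ is constant (for instance when the effective action in the relative variable happens to be purely diagonal) would collapse the spectral argument and would need to be isolated or excluded from the coefficients $\xi(\cdot)$ propagated through the perturbative derivation of $T(k)_{0}$. Once this spectral input is secured, the Kato--Rosenblum and Riemann--Lebesgue pieces are standard and finish the proof essentially mechanically.
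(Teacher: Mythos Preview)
Your proposal is correct and follows essentially the same route as the paper: Fourier-diagonalize $T(k)_{0}$ to show its spectrum is purely absolutely continuous, invoke Kato--Rosenblum (using that $T(k)_{I}$ is finite-rank and hence trace class) to transfer a non-trivial absolutely continuous subspace to $T(k)$, and then apply RAGE/Riemann--Lebesgue to conclude escape from any finite $\Lambda$. The obstacle you flag---that $\omega_{k}$ might be constant for some $k$---is not addressed in the paper's argument either; both proofs implicitly assume this non-degeneracy of the effective relative hopping.
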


\emph{Effective Random Potential---}Define
\beq\label{Def.H.r}
	H_{r} := - \sum_{p} J_{m}(p) \prod_{j \in p} Z_{j} -  \sum_{s} J_{e}(s) \prod_{j \in s} X_{j}, 
\eeq	
where $J_{m}(p)$ and $J_{e}(s)$ are positive iid random variables described by a bounded and compactly supported probability density. In lowest order degenerate perturbation theory the total Hamiltonian now reads $\sum_{n_{e}, n_{m}} P_{n_{e}n_{m}} \left( E_{n_{e}n_{m}} + \lambda  H_{r} + H_{I} \right) P_{n_{e}n_{m}}$. The terms $P_{n_{e}n_{m}} H_{I} P_{n_{e}n_{m}}$ have been discussed before. We are left with the computation of the matrix elements of the operators $P_{n_{e}n_{m}} H_{r} P_{n_{e}n_{m}}$ with respect to the eigenbasis of the positions of the magnetic and electric defects. Since the vectors in this basis are automatically eigenvectors of $P_{n_{e}n_{m}} H_{r} P_{n_{e}n_{m}}$ we conclude that $P_{n_{e}n_{m}} H_{r} P_{n_{e}n_{m}}$ acts as a multiplication operator on the elements of the position-eigenbasis. The operators $P_{n_{e}n_{m}} H_{r} P_{n_{e}n_{m}}$ thus play the role of potential terms. The potential felt by electric defects moving on $\Gamma$ is specified by iid random variables $V_{s} = 2 J_{e}(s)$ (and analogously for the magnetic defects).

\emph{Localization---}From the work emanating from Anderson's discovery in 1958 we expect that the iid potential that is caused by the randomization of the coupling constants leads to Anderson localization of the two defects. In \cite{AizenmanWarzel2009} Aizenman and Warzel proved dynamical localization of interacting $n$-body systems ($n < \infty$) on $\mathbb{Z}^d$ under the assumption that the interactions are described by interaction potentials with finite range: 
\beq\label{Aizenman.Warzel.Hamiltonian}
	H^{(n)} = \sum_{j=1}^n \left[ -\Delta_{j} + \lambda V(x_{j}) \right] + \mathcal{U}(\mathbf{x};\mathbf{\alpha})
\eeq
($\mathbf{x} = (x_{1},x_{2},...,x_{n}) \in \mathbb{Z}^{dn}$). Here, $\Delta_{j}$, $V(x)$, and $\mathcal{U}(\mathbf{x};\mathbf{\alpha})$ denote the discrete Laplacian, iid random one-particle potential, and interaction potential, respectively (cf.~\cite{AizenmanWarzel2009}). They prove dynamical localization with respect to the Hausdorff pseudo distance measure defined by $\text{dist}_{\mathcal{H}}(\mathbf{x}, \mathbf{y}) := \max \left\{    \max_{1\leq i \leq k} \text{dist}(x_{i}, \{ \mathbf{y} \} ) ,  \max_{1\leq i \leq k} \text{dist}(\{ \mathbf{x} \}, y_{i} )    \right\}$. As discussed in~\cite{AizenmanWarzel2009} this result has the drawback that it still allows a particle to hop from one  tight cloud of particles to another. However, such processes appear rather unphysical, and results in~\cite{Chulaevsky2010A, Chulaevsky2010B} suggest that they are unlikely. Note that the Hamiltonian~\eqref{Aizenman.Warzel.Hamiltonian} is not exactly of the form~\eqref{ourdynamics} because in our setting the finite-range interactions between the two defects are given in terms of inhomogeneous hopping matrix elements. To get localization bounds for our 2-particle system the proof of Aizenman and Warzel needs to be adapted. The details are given in~\cite{EPAPS}. We arrive at the following theorem about dynamical localization. 
\begin{theorem}\label{theorem.about.localization}
	Let $H^{(2)}$ be the random Hamiltonian describing the evolution of the pair of electric defects on the lattice $\mathbb{Z}^2$ that corresponds to the spin-Hamiltonian $P_{1,0} (H_{I} +\lambda H_{r}) P_{1,0}$. For each $m \in \mathbb{N}$ (cf.~\eqref{Def.H.I}) there is a $\lambda_{0} \in \AR_{+}$ with the property that for all $\lambda \geq \lambda_{0}$ there exist $A,\xi < \infty$ such that for all $\mathbf{x}, \mathbf{y} \in \mathbb{Z}^{4}$
	\beq\label{dynamical.localization}
		\mathbb{E}\left[ \sup_{t \in \AR} \large|  \langle  \mathbf{y} | e^{-it H^{(2)}} | \mathbf{x}      \rangle   \large|  \right] \leq A \, e^{-\mathrm{dist}_{\mathcal{H}}(\mathbf{x}, \mathbf{y}) / \xi  }.
	\eeq
\end{theorem}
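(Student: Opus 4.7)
The plan is to adapt the multi-particle fractional moment method of Aizenman and Warzel~\cite{AizenmanWarzel2009} to our Hamiltonian $H^{(2)}$, keeping in mind the essential structural difference that the two-body coupling does not enter as a multiplicative interaction potential $\mathcal{U}(\mathbf{x}; \boldsymbol{\alpha})$, but as an inhomogeneous correction to the hopping matrix elements, localized in a finite neighbourhood of the diagonal $\{x_{1} = x_{2}\} \subset \mathbb{Z}^{4}$. Accordingly, I split $H^{(2)} = H_{0}^{(2)} + W$, where $H_{0}^{(2)} = -\Delta_{1} - \Delta_{2} + \lambda V(x_{1}) + \lambda V(x_{2})$ is a direct sum of two independent single-particle Anderson operators on $\mathbb{Z}^{2}$, and $W$ collects all remaining hopping terms produced by $P_{1,0} H_{I} P_{1,0}$ together with the hardcore constraint. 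The spatial cutoff $R$ on $\xi$ in~\eqref{Def.H.I} ensures that $W$ is bounded, of finite range, and has matrix elements $\langle \mathbf{y}|W|\mathbf{x}\rangle$ supported on pairs $\mathbf{x}, \mathbf{y}$ that both lie within distance $R$ of the diagonal.

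Next I establish, for some $s \in (0,1)$ and $\lambda$ large enough, the Aizenman--Molchanov exponential decay of fractional moments for the uncoupled operator,
\begin{equation}
\mathbb{E}\bigl[\,|\langle \mathbf{y}|(H_{0}^{(2)} - z)^{-1}|\mathbf{x}\rangle|^{s}\bigr] \leq A\, e^{-(|x_{1}-y_{1}| + |x_{2}-y_{2}|)/\xi},
\end{equation}
uniformly in $z \in \mathbb{C} \setminus \mathbb{R}$. This follows from the tensor-product structure of $H_{0}^{(2)}$ and the one-particle fractional moment bound, which is available because the $V_{s} = 2 J_{e}(s)$ are iid with a bounded, compactly supported density. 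The interaction is then reintroduced through iteration of the second resolvent identity, in the spirit of the Aizenman--Warzel cluster expansion. Every path in the resulting expansion must traverse the finite interaction region near the diagonal, and summing over such paths while combining the single-particle decays of the previous step yields a bound governed by the Hausdorff pseudo-distance $\mathrm{dist}_{\mathcal{H}}(\mathbf{x}, \mathbf{y})$ rather than the Euclidean distance on $\mathbb{Z}^{4}$, because the one-particle estimate is insensitive to simultaneous translations of the cluster. An a priori Aizenman--Molchanov bound $\mathbb{E}[|\langle \mathbf{x}|(H^{(2)}-z)^{-1}|\mathbf{x}\rangle|^{s}] \leq C/\lambda^{s}$, plus the finite range of $W$, closes the geometric resolvent inequality and delivers
\begin{equation}
\mathbb{E}\bigl[\,|\langle \mathbf{y}|(H^{(2)} - z)^{-1}|\mathbf{x}\rangle|^{s}\bigr] \leq A'\, e^{-\mathrm{dist}_{\mathcal{H}}(\mathbf{x}, \mathbf{y})/\xi'}.
\end{equation}

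The dynamical bound~\eqref{dynamical.localization} then follows from this resolvent estimate by the standard integral representation of $e^{-itH^{(2)}}$ in terms of boundary values of the resolvent, as in the fractional-moment-to-dynamical-localization step of~\cite{AizenmanWarzel2009}. The principal obstacle is the resolvent expansion just described: in~\cite{AizenmanWarzel2009} the interaction $\mathcal{U}$ is a multiplication operator, so conditioning on the random potentials commutes cleanly with the interaction and the decoupling lemmas used to extract exponential decay proceed transparently. In our setting $W$ is off-diagonal, so the decoupling lemma and the geometric resolvent identities must be re-examined to verify that the cluster expansion still converts single-particle exponential decay into Hausdorff-distance decay, and that the constants $A'$ and $\xi'$ can be chosen uniformly in the cluster position. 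Tracking the resulting dependence of the threshold $\lambda_{0}$ on the interaction parameters $m$ and $R$ is the technical heart of the adaptation.
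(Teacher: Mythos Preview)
Your overall strategy matches the paper's: both adapt the Aizenman--Warzel fractional moment method by re-verifying their technical lemmas (Lemma~4.6, Theorem~5.3, Theorem~6.1, Lemma~6.3) when the interaction is an off-diagonal, finite-range hopping term $W$ rather than a multiplication operator, and then pass to dynamical localization via the eigenfunction-correlator criterion. However, two of your intermediate steps are not as direct as the sketch suggests.

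First, the decoupled bound $\mathbb{E}\bigl[|G_{0}^{(2)}(\mathbf{x},\mathbf{y};z)|^{s}\bigr] \leq A\,e^{-(|x_{1}-y_{1}|+|x_{2}-y_{2}|)/\xi}$ does not follow from the tensor-product structure and the one-particle bound alone: the two particles see the \emph{same} random potential $V$, so their one-particle Green functions are correlated, and even for independent copies the two-body resolvent is not a product of one-body resolvents. What Aizenman--Warzel actually prove (their Theorem~5.2, invoked in the paper's adapted Theorem~5.3) is decay in $\mathrm{dist}_{\mathcal{H}}^{(J,K)}(\mathbf{x},\mathbf{y}) = \max\{|x_{1}-y_{1}|,\,|x_{2}-y_{2}|\}$, which is weaker than the sum but sufficient. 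Second, your ``cluster expansion'' is under-specified: a Neumann series in $W$ does not converge, since $W$ is bounded but not small. The paper instead follows Aizenman--Warzel's recursive scheme: a boundary quantity $B_{s}^{(n)}(L_{k})$ on a geometrically growing sequence of boxes is shown, via the geometric resolvent identity and the $(n{-}1)$-particle input, to satisfy $B_{s}^{(n)}(L_{k+1}) \leq (a/\lambda^{s})\,B_{s}^{(n)}(L_{k})^{2} + A\,L_{k+1}^{2p}\,e^{-2\nu L_{k}}$, and a quadratic-recursion lemma then yields exponential decay once $\lambda$ is large enough to make the initial scale small. This recursion, together with a two-case split on whether $\mathrm{diam}(\mathbf{x})$ is large or small compared to the box scale, is the real content behind what you call ``closing the geometric resolvent inequality''.
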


Neglecting the exotic braiding phases acquired by braiding defects of opposite type the theorem generalizes to finitely many defects. Taking into account such phases is still an open problem. However, if only one defect type is dynamic (e.g., in case of~\eqref{Spin.Hamiltonian.for.external.magnetic.field} with $\vec{B}$ along the $z$-direction) the static defects influence the evolution of the dynamic defects in terms of random vector potentials. First steps towards the proof of localization for such systems have been taken in~\cite{Klopp2003} and~\cite{ErdosHasler2011}.

\emph{Numerics: localization of two electric defects---}Consider a $L \times L$ lattice with periodic boundary conditions. The dynamics is described by nearest neighbor hopping together with a background potential that is uniformly distributed on $[0, \Delta]$. The left part of Fig.~\ref{fig:All_in_one} displays the 2-particle dynamics for a typical potential landscape. At time $t=0$ the 2-particle wave function is a position-eigenstate $\ket{\mathbf{x}}$, $\mathbf{x} = (x_{1},x_{2})$ with particles ``1'' and ``2'' being nearest neighbors. For each $\mathbf{y} = (y_{1},y_{2})$ we record the value $\sup_{t \in \{ 1, ..., t_{\max} \}} | \bra{\mathbf{y}} \exp(-i H t) \ket{\mathbf{x}} |$ and plot it as a function of the relative distance $\| y_{1} - y_{2} \|_{1} - 1$  ($\| \cdot \|_{1}$ denotes the 1-norm; $\| y_{1} - y_{2} \|_{1} = 1$ corresponds to nearest neighbor configurations). Each point in the figure thus corresponds to a specific 2-particle configuration $(y_{1},y_{2})$ with $y_{1}$ and $y_{2}$ on the periodic $L \times L$ lattice under consideration ($L=10$). 
\begin{figure}[tbp]
\centering
\includegraphics[width=0.49\textwidth]{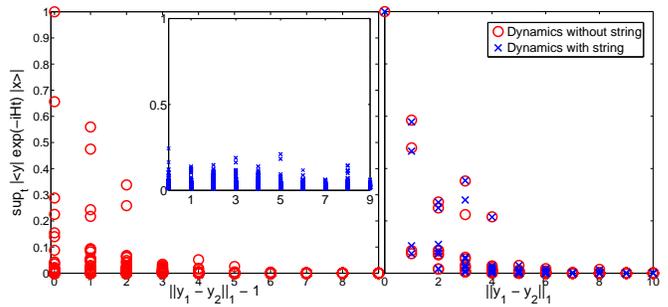}
\caption{(Color online) Left: Anderson localization of the two particles ($\Delta = 50$, $L = 10$, $t_{\max} = 60$). Inset: Dynamics of the two particles in absence of a random background potential. Right: Comparison between the 1-particle dynamics with and without the exotic braiding phases ($\Delta = 50$, $L = 40$, $t_{\max} = 100$). The figure suggests that localization is not greatly affected by exotic braiding phases.}
\label{fig:All_in_one}
\end{figure}

\emph{Numerics: influence of the exotic braiding phases---}To investigate the effect of braiding phases on localization, we consider a single electric defect in the presence of two static magnetic defects. The pair of magnetic defects are generated by a path operator $\prod_{j \in l} X_{j}$ applied to a vector in the ground state space, with $l$ being a path on the lattice. The nearest neighbor hopping terms that cross $l$ change sign. In the right half of Fig.~\ref{fig:All_in_one} we compare the localization in presence and absence of such a path along a straight  line of length $L/2$. The localization of the electric defect appears unaffected by the presence of the path, and thus of the exotic braiding phases. 

\emph{Numerics: localization at positive densities---} To numerically estimate the stability of the perturbed toric code at positive densities (i.e., infinitely many defects in the infinite system) we use Quantum Monte Carlo worm-type simulations~\cite{Prokofev98} (here in the implementation of Ref.~\cite{Pollet07}) to determine the phase transition from the superfluid phase to the insulating Bose glass phase (cf. Fig.~\ref{fig:DF.BG.transition}), similar as was done in Ref.~\cite{Pollet09}. The property of being insulating suggests that the two defects of a pair do not travel arbitrarily far away from each other. The details are given in~\cite{EPAPS}.
\begin{figure}[tbp]
\centering
\includegraphics[width=0.6\columnwidth, angle=-90]{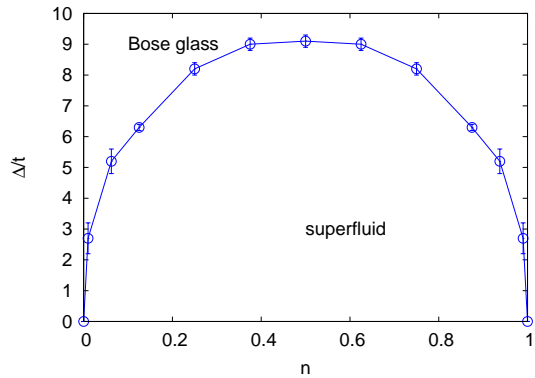}
\caption{(Color online) Plot of the critical disorder $\Delta_{c}$ as a function of the particle density $n$. We assume nearest neighbor hopping (e.g., caused by~\eqref{Spin.Hamiltonian.for.external.magnetic.field}) together with an  iid background potential whose values are uniformly distributed on $[0, 2 \Delta]$. The boundary conditions are periodic and the unit is the hopping $t$.}
\label{fig:DF.BG.transition}
\end{figure}

\emph{Conclusion:} Our results suggest that logical errors caused by interaction-induced propagation of defects can be suppressed by randomness in the toric code Hamiltonian. While the mathematical approach was applicable for a broad class of perturbations and finitely many particles in infinite systems, Quantum Monte Carlo simulations allowed us to draw conclusions for infinite systems with positive defect densities that are exposed to a homogeneous magnetic field along the $z$-axis. Single particle simulations indicate the firmness of our findings against exotic braiding phases.

\emph{Additional Note---}Similar results have been obtained independently in~\cite{WoottonPachos2011} by J. Wootton and J. Pachos.

\emph{Acknowledgment~- }We wish to express our gratitude to Johan {\AA}berg, Charles-Edouard Bardyn, Roger Colbeck, Daniel Egli, J\"urg Fr\"ohlich, Gian Michele Graf and Gang Zhou for many helpful discussions. This research was supported by the Swiss National Science Foundation  through the National Centre of Competence in Research \emph{Quantum Science  and Technology} and under grant PZ00P2-121892. Simulations were performed at the Brutus cluster at ETH Zurich.

\begin{widetext}


\section{Supplementary Material}

\section{Toric Code}\label{Toric.Code}

Consider a square lattice $\Gamma \subset \mathbb{Z}^2$ embedded in an arbitrary two-dimensional manifold with spins sitting on its edges and assume the dynamics being described by the stabilizer Hamiltonian
\beq
	H_{TC} = -J_{m} \sum_{p} \prod_{j \in p} Z_{j} - J_{e} \sum_{s} 	\prod_{j \in s} X_{j} 
\eeq 
($J_{m} > 0$, $J_{e} > 0$) where the sums run over all plaquettes $p$ and over all stars $s$ respectively. This exactly solvable model bears the name Toric Code~\cite{Kitaev2003}. Its groundstate eigenspace 
\beq
	\Eig{H_{TC}}{E_{g}} =  \SPAN\{ \ket{g_{j}} \}_{j = 1,...,2^{2g + h}}
\eeq
is $2^{2g + h}$-times degenerate if the spin-1/2 square lattice is embedded in a manifold with genus $g$ and $h$ holes. The operators
\beqa
	W^{(e)}_{l}	&:=&	\prod_{j \in l} Z_{j}, \nonumber \\
	W^{(m)}_{l^*}	&:=&	\prod_{j \in l^*} X_{j}
\eeqa
(products of $X$- or $Z$-Pauli operators along the paths $l$ on the square lattice $\Gamma$ and $l^*$ on the dual lattice $\Gamma^*$) are sometimes called electric and magnetic path operators. They are convenient to describe excited eigenstates of the Toric Code Hamiltonian $H_{TC}$ because each $H_{TC}$-eigenspace is spanned by vectors that result from the application of some electric and magnetic path operators to vectors in the ground state space:
\beq
	\Eig{H_{TC}}{E_{n_{e},n_{m}}}  = \SPAN \left.\left\{ W^{(e)}_{l_{1}} \cdots W^{(e)}_{l_{n_{e}}} W^{(m)}_{l_{1}^*} \cdots W^{(m)}_{l_{n_{m}}^*} \ket{g_{j}} \right| l_{i}, l_{j}^* \text{ open paths } \right\} .
\eeq
There is a lot of redundancy in this describtion of the eigenspaces because two vectors $W^{(e)}_{l_{1}} \ket{g_{j}}$ and $W^{(e)}_{l_{2}} \ket{g_{j}}$ are equal if (1) the endpoints of $l_{1}$ and $l_{2}$ are equal and (2), if the paths $l_{1}$ and $l_{2}$ are homotopic.  All electric or magnetic path operators along closed paths that can be contracted to one point act as the identity on the ground state space. On the other hand, electric and magnetic path operators that cannot be contracted to one point may cause a non-trivial linear transformation from the ground state space to itself. The energy-observable $H_{TC}$ is everywhere indifferent to the electric and magnetic operator strings except at their endpoints. This leads to the jargon that there are electric charges sitting at the endpoints of electric path operators and magnetic charges sitting at the endpoints of magnetic path operators. The commutation relations Pauli matrices lead to anyonic statistics between charges of different type.

\section{Error correction}\label{Error.correction}

The quantum information that needs to be stored with the help of the toric code is encoded in the degenerate ground state space. If the spin system lives on the torus the ground state space is $2^2$-dimensional and can thus carry 2 qubits. Assume that we have encoded our quantum information into the ground state space at time $t=0$. As time passes the toric code interacts with its environment and the state will acquire support outside of the ground state space due to the interactions with the environment and the imperfect experimental realization of the toric code. Error correction may move the state back to the ground space. A possible algorithm for error correction goes as follows:
\begin{enumerate}
\item Measure $H_{TC}$. This is equivalent to the measurement of all the plaquette observables $-J_{m} \prod_{j \in p} Z_{j}$ and all the star observables $- J_{e}	\prod_{j \in s} X_{j}$. There are magnetic charges associated to the plaquettes $p$ where the measurement of $-J_{m} \prod_{j \in p} Z_{j}$ yields $+J_{m}$ and there are electric charges associated to the stars $s$ wherever the measurement of $- J_{e}	\prod_{j \in s} X_{j}$ yields $+J_{e}$.

\item Pair up all the charges of same type in such a way that the sum of the relative distances between the 2 charges of each pair is minimal. 

\item Fuse the 2 charges associated to each pair by the application of electric path operators $W^{(e)}_{l}$ and magnetic path operators $W^{(m)}_{l^*}$ along shortest paths connecting these 2 charges.
\end{enumerate}

Non-contractable electric and magnetic path operators can form non-trivial maps from the ground state space to itself. Therefore, a possible scenario in case of which the above error correction fails goes as follows: consider an embedding of the toric code into the surface of a torus and assume that sometimes after the initialization of the memory the interactions with the environment causes a pair of magnetic charges. As time passes the imperfect experimental realization of the toric code leads to a movement of the particles in such a way that after a certain time the particles may have moved more than halfway around the torus. Consequently --- after the fusion of the 2 particles in step 3 of error correction --- the ground state vector at $t=0$ and the ground state vector after the error correction differ by the application of a magnetic path operator along a path that cannot be contracted to a point. Such a path operator acts non-trivially on the ground state space and we thus end up with an error in the stored quantum information.

\section{Propagation to Infinity}\label{Propagation.to.Infinity}

The goal of this section is to show that there exist initial 2-defect wave function with the property that these two defects travel arbitrarily far away from each other during time evolution. This will be achieved by showing that the absolutely continuous spectrum of the Hamiltonians generating the relative motion of the two charges is nonempty which --- via the RAGE theorem --- implies the assertion. Let us briefly recall these notions. According to the spectral theorem there exists a spectral measure $\{ P_{\Delta} \}_{\Delta \subseteq \AR}$ for every Hamiltonian $T$ such that 
\beq
	T = \int_{\AR} \lambda dP_{\lambda}.
\eeq
Moreover there exists a decomposition 
\beq
	P_{\Delta} = P^{(pp)}_{\Delta}  +  P^{(ac)}_{\Delta}  +  P^{(sc)}_{\Delta}
\eeq
with the property that the measure
\beq
	\mu_{\phi}(\Delta) := \sp{\phi}{P_{\Delta} \phi}
\eeq
on $\AR$ is pure-point if $\phi \in \Image{P^{(pp)}_{\Delta}}$, absolutely continuous if $\phi \in \Image{P^{(ac)}_{\Delta}}$ and singular continuous if $\phi \in \Image{P^{(sc)}_{\Delta}}$. Define the subspaces $\HR_{pp} = \Image{P^{(pp)}_{\AR}}$, $\HR_{ac} = \Image{P^{(ac)}_{\AR}}$ and $\HR_{sc} = \Image{P^{(sc)}_{\AR}}$. The restrictions $T_{pp} := T|_{\HR_{pp}}$, $T_{ac} := T|_{\HR_{ac}}$ and $T_{sc} := T|_{\HR_{sc}}$ lead to the definitions of the pure-point part $\sigma_{pp}(T) = \sigma(T_{pp})$ of the spectrum, the absolutely continuous  part $\sigma_{ac}(T) = \sigma(T_{ac})$ of the spectrum and singular continuous  part $\sigma_{sc}(T) = \sigma(T_{sc})$ of the spectrum with the property 
\beq
	\sigma(T) = \sigma_{pp}(T) \cup \sigma_{ac}(T) \cup \sigma_{sc}(T).
\eeq
The RAGE theorem implies the following: Let $\psi_{0} \in \HR_{ac}$ and $\Lambda \subset \mathbb{Z}^d$ finite. Then:
\beq\label{RAGE}
	\lim_{t \rightarrow \infty}  \left( \sum_{x \in \Lambda}  \left|   e^{-iTt}\psi_{0}(x)   \right|^2  \right) = 0.
\eeq
In other words --- assuming that $\psi$ is a 1-particle wave function ---  the probability for finding the particle within any finite region $\Lambda \subset \mathbb{Z}^d$ vanishes as time approaches infinity. Hence, if we were able to show that $\HR_{ac} \neq \emptyset$ for the Hamiltonians generating the relative motion between the two charges we would know that there exist initial 2-defect wave functions which will not be found within finite relative distance as time approaches infinity. Consequently, error correction will fail as time approaches infinity. The goal of the remainder of this section is the proof that indeed $\HR_{ac} \neq \emptyset$ for the Hamiltonians generating the relative motion between the two charges.

\subsection{Relative Dynamics}

To show that the absolutely continuous spectrum of the Hamiltonians that generate the relative motion between the two defects is non-empty we first need to determine these Hamiltonians. For systems that live in the continuum $\AR^d$ there is a single Hamiltonian $H_{\mathrm{rel}}$ that describes the relative motion of the two particles with respect to the center-of-mass frame. On the lattice, the situation is a little more delicate. We follow the approach described in~Ref.~\onlinecite{Albeverio2006} that is itself based on Ref.~\onlinecite{Graf1997}: Let $\hat{U}_{s} : l^{2}(\mathbb{Z}^4) \rightarrow l^{2}(\mathbb{Z}^4)$,
\beq
	(\hat{U}_{s} \psi)(n_{1},n_{2}) := \psi(n_{1}-s, n_{2}-s)
\eeq
($s, n_{1}, n_{2} \in \mathbb{Z}^2$) be the representation of the group of simultaneous translations on the 2-particle Hilbert space $l^{2}(\mathbb{Z}^4)$. The discrete Fourier transform $\wedge : L^2(\mathbb{T}^2 \times \mathbb{T}^2) \rightarrow l^{2}(\mathbb{Z}^2 \times \mathbb{Z}^2)$,
\beq
	\hat{\psi}(n_{1},n_{2}) = \frac{1}{(2\pi)^4} \int_{[0,2\pi)^4}d^2k_{1} d^2k_{2} \, \psi(k_{1},k_{2}) e^{-i \mathbf{n} \cdot \mathbf{k}}
\eeq
($\mathbb{T} = [0,2\pi)$) and its inverse  $\vee : l^{2}(\mathbb{Z}^2 \times \mathbb{Z}^2)  \rightarrow L^2(\mathbb{T}^2 \times \mathbb{T}^2)$,
\beq
	\psi(k_{1},k_{2}) = \sum_{\mathbf{n} \in \mathbb{Z}^4}  \hat{\psi}(n_{1},n_{2}) e^{i \mathbf{n} \cdot \mathbf{k}},
\eeq
are unitary. On $L^2(\mathbb{T}^2 \times \mathbb{T}^2)$, the representation of the group of the simultaneous translations of the two particles reads
\beq
	(U_{s}\psi)(k_{1},k_{2}) = \left( \vee \circ \hat{U}_{s} \circ \wedge \psi \right)(k_{1},k_{2}) = e^{i s \cdot (k_{1} + k_{2})} \psi(k_{1},k_{2}) . 
\eeq
We conclude that
\beq
	\left.U_{s}\right|_{L^2(\mathbb{F}_{k})} = e^{i s \cdot k}
\eeq
if 
\beq
	\mathbb{F}_{k} := \{ (k_{1}, k-k_{1}) \in \mathbb{T}^2 \times \mathbb{T}^2 | k_{1} \in \mathbb{T}^2 \} \cong  \mathbb{T}^2
\eeq
for all $k \in  \mathbb{T}^2$. Therefore, $L^2(\mathbb{F}_{k})$ is the isotypical component of the 1-dimensional irreducible representation of the group of simultaneous translations of the two particles with character $e^{i s \cdot k}$, $s \in \mathbb{Z}^2$. The operator $\vee \circ T \circ \wedge$ is decomposable with respect to the fibration 
\beq
	L^2(\mathbb{T}^2 \times \mathbb{T}^2) = \int_{\mathbb{T}^2} \oplus L^2(\mathbb{F}_{k}) dk
\eeq
because $[\vee \circ T \circ \wedge, U_{s}] = 0$ for all $s \in \mathbb{Z}^2$. Therefore, $T$ itself is decomposable with respect to the fibration 
\beq
	l^{2}(\mathbb{Z}^2 \times \mathbb{Z}^2) = \int_{\mathbb{T}^2} \oplus \wedge\left(L^2(\mathbb{F}_{k})\right) dk.
\eeq 
We can express this observation in the form
\beq
	T = \int_{\mathbb{T}^2} \oplus T(k) dk
\eeq
Note that
\beq
	\sigma(T) = \bigcup_{k \in \mathbb{T}^2} \sigma(T(k))
\eeq
Thus, to prove our claim $\sigma_{ac}(T) \neq \emptyset$ from above it suffices to show that $\sigma_{ac}(T(k)) \neq \emptyset$ for some $k \in \mathbb{T}^2$. To figure out the spectra of the operators $T(k)$ we need to compute the operators $T(k)$ more explicitly by the determination of their matrix elements with respect to a family of wave functions that generate the spaces $\wedge\left(L^2(\mathbb{F}_{k})\right) \subset l^{2}(\mathbb{Z}^2 \times \mathbb{Z}^2)$.

\begin{lemma}
	Let $k \in \mathbb{T}^2$ be arbitrary. The set $\hat{\mathcal{B}_{k}} := \left.\left\{  \hat{e}_{j}^{(k)} \right| j \in \mathbb{Z}^2 \right\}$ with
	\beq\label{pf.lemma.ONB.fiber}
		\hat{e}_{j}^{(k)}(n_{1},n_{2}) := \frac{\sqrt{2}}{(2\pi)^2} e^{-i n_{2} \cdot k} \delta_{j, (n_{1} - n_{2})}
	\eeq 
	($n_{1}, n_{2} \in \mathbb{Z}^2$) forms a complete orthonormal basis in $\wedge\left(L^2(\mathbb{F}_{k})\right) \subset l^{2}(\mathbb{Z}^2 \times \mathbb{Z}^2)$.
\end{lemma}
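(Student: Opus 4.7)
My plan is to exhibit a unitary isomorphism $\Phi\colon l^2(\mathbb{Z}^2) \to \wedge(L^2(\mathbb{F}_k))$ that sends the canonical Kronecker basis $\{\delta_j\}_{j \in \mathbb{Z}^2}$ to the candidate vectors $\hat{e}_j^{(k)}$. Orthonormality and completeness of $\hat{\mathcal{B}}_k$ would then reduce to the corresponding well-known properties of $\{\delta_j\}$ in $l^2(\mathbb{Z}^2)$.

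First I would describe $\wedge(L^2(\mathbb{F}_k))$ concretely. Parameterize $\mathbb{F}_k$ by $k_1 \in \mathbb{T}^2$ via $k_1 \mapsto (k_1, k - k_1)$, so that $L^2(\mathbb{F}_k)$ is identified with $L^2(\mathbb{T}^2)$. Embedding $\phi \in L^2(\mathbb{F}_k)$ into the distributions on $\mathbb{T}^2 \times \mathbb{T}^2$ as $\phi(k_1)\,\delta(k_2 - (k - k_1))$ and plugging into the definition of $\wedge$, one obtains
\beq
  (\wedge\phi)(n_1, n_2) = \frac{e^{-i n_2 \cdot k}}{(2\pi)^4} \int_{\mathbb{T}^2} \phi(k_1)\, e^{-i (n_1 - n_2) \cdot k_1} \, d^2 k_1.
\eeq
Hence every element of $\wedge(L^2(\mathbb{F}_k))$ has the form $f(n_1, n_2) = e^{-i n_2 \cdot k}\, g(n_1 - n_2)$, where $g$ ranges over the Fourier-coefficient sequences of $L^2(\mathbb{T}^2)$, i.e., over $l^2(\mathbb{Z}^2)$.

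Next I would set $\Phi(g)(n_1, n_2) := \frac{\sqrt{2}}{(2\pi)^2}\, e^{-i n_2 \cdot k}\, g(n_1 - n_2)$. By the previous step $\Phi$ is a bijection onto $\wedge(L^2(\mathbb{F}_k))$, and manifestly $\Phi(\delta_j) = \hat{e}_j^{(k)}$. It then suffices to check that $\Phi$ is an isometry when the codomain carries the Hilbert structure inherited from $L^2(\mathbb{F}_k)$ via the fibration $L^2(\mathbb{T}^2 \times \mathbb{T}^2) = \int_{\mathbb{T}^2} \oplus L^2(\mathbb{F}_k)\, dk$. This reduces to the standard Parseval identity relating $\|\phi\|_{L^2(\mathbb{T}^2)}$ to $\|g\|_{l^2(\mathbb{Z}^2)}$; the specific constant $\sqrt{2}/(2\pi)^2$ is then pinned down by the Jacobian of the change of variables $(k_1, k_2) \mapsto (k_1, k_1 + k_2)$ implementing the direct-integral decomposition, together with the $(2\pi)^{-4}$ prefactor in the definition of $\wedge$.

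The main obstacle is interpretational rather than computational. Since $\sum_{n_1, n_2} |\hat{e}_j^{(k)}(n_1, n_2)|^2$ diverges, the $\hat{e}_j^{(k)}$ are not strictly elements of $l^2(\mathbb{Z}^4)$, so the inclusion $\wedge(L^2(\mathbb{F}_k)) \subset l^{2}(\mathbb{Z}^2 \times \mathbb{Z}^2)$ in the lemma must be read as identifying a fiber of a direct-integral decomposition (in the generalized-eigenfunction spirit familiar from spectral theory). Orthonormality of $\hat{\mathcal{B}}_k$ is accordingly to be understood with respect to the fiber inner product rather than the ambient $l^2$ one; once this interpretation is granted, the isometry argument above closes the proof. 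A secondary nuisance is tracking the exact prefactor $\sqrt{2}/(2\pi)^2$, which requires careful bookkeeping of the measure on $\mathbb{F}_k$ induced by the fibration together with the normalization convention chosen for $\wedge$.
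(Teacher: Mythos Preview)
Your proposal is correct and follows essentially the same route as the paper: both identify $L^2(\mathbb{F}_k)$ with $L^2(\mathbb{T}^2)$ via the parametrization $k_1\mapsto(k_1,k-k_1)$ (the paper packages this as the unitary $(u_k\psi)(q):=\sqrt{2}\,\psi(q,k-q)$), transport the standard Fourier basis, and then apply $\wedge$. Your observation that $\hat{e}_j^{(k)}\notin l^2(\mathbb{Z}^4)$ in the literal sense, so that the statement must be read at the level of fibers in the direct-integral decomposition, is a valid point the paper leaves implicit.
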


\begin{proof}
	The family $\{ e^{i j_{1} q_{1}} e^{i j_{2} q_{2}} | j_{1},j_{2} \in \mathbb{Z}^2 \}$ of functions ($q_{1}, q_{2} \in \mathbb{T}$) form the standard basis in $L^2(\mathbb{T}^2)$. The map $u_{k} : L^2(\mathbb{F}_{k}) \rightarrow L^2(\mathbb{T}^2)$ defined by
	\beq
		(u_{k} \psi)(q) := \sqrt{2} \, \psi(q, k-q)
	\eeq
	($q \in \mathbb{T}^2$) is unitary, i.e.,
	\beq
		\sp{\psi_{1}}{\psi_{2}}_{L^2(\mathbb{F}_{k})} = \sp{u_{k} \psi_{1}}{u_{k} \psi_{2}}_{L^2(\mathbb{T}^2)}.
	\eeq
	Here, 
	\beq
		\sp{\psi_{1}}{\psi_{2}}_{L^2(\mathbb{F}_{k})} = \int_{[0,2\pi)^4} d^4 \, \mathbf{k} \bar{\psi}_{1}(\mathbf{k}) \chi_{\mathbb{F}_{k}}(\mathbf{k})  \psi_{2}(\mathbf{k}) \chi_{\mathbb{F}_{k}}(\mathbf{k})
	\eeq
	and $\chi_{\mathbb{F}_{k}}(\cdot)$ denotes the characteristic function with respect to $\mathbb{F}_{k} \subset \mathbb{T}^4$. Thus $\{ u_{k}^{-1}  e^{i j_{1} (\cdot)} e^{i j_{2} (\cdot)} | j_{1},j_{2} \in \mathbb{Z}^2  \}$ is a complete orthonormal basis for $L^2(\mathbb{F}_{k})$. These observations allow for the explicit computation of the wanted basis
	\beq
		\hat{\mathcal{B}_{k}} := \left.\left\{          \left( \wedge \circ u_{k}^{-1}  e^{i j_{1} (\cdot) } e^{i j_{2} (\cdot)} \right) (n_{1},n_{2})       \right| j \in \mathbb{Z}^2 \right\}.
	\eeq
	The calculation yields the functions in Eq.~\eqref{pf.lemma.ONB.fiber}.
\end{proof}

We continue with the computation of the matrix elements
\beq
	T(k)(l,j) = \sp{\hat{e}_{l}^{(k)}}{T(k) \hat{e}_{j}^{(k)}} = \sp{\hat{e}_{l}^{(k)}}{T \hat{e}_{j}^{(k)}}
\eeq
($l,j \in \mathbb{Z}^2, k \in \mathbb{T}^2$) of $T(k)$ with respect to the basis $\hat{\mathcal{B}_{k}}$ in $\wedge\left(L^2(\mathbb{F}_{k})\right) \subset l^{2}(\mathbb{Z}^2 \times \mathbb{Z}^2)$. Note that 
\beq\label{support.basis.for.relative.dynamics}
	\supp{\hat{e}_{j}^{(k)}(\cdot, \cdot)} = \{  (n_{1},n_{2}) \in \mathbb{Z}^2 \times \mathbb{Z}^2 | j = n_{1} - n_{2})  \}.
\eeq
From the perturbation theory stems the constraint that the particles are not allowed to sit on top of each other because this would lead to an annihilation of the two particles resulting in a departure of the state vector from the unperturbed eigenspace. We can realize this boundary condition dynamically by setting all matrix elements of $T$ that would lead to double occupations of lattice sites to zero. From~\eqref{support.basis.for.relative.dynamics} we conclude that the forbidden subspace is spanned by $\{ \hat{e}_{0}^{(k)} |  k \in \mathbb{T}^2 \}$. To dynamically realize the boundary condition we thus set
\beq
	T(k)(l,j) = 0
\eeq
whenever $l=0$ or $j=0$. We are thus left with the computation of the matrix elements
\beq
	T(k)(l,j) = \sp{\hat{e}_{l}^{(k)}}{T \hat{e}_{j}^{(k)}}
\eeq
for $l \neq 0$ and $j \neq 0$. Let $\tilde{T}_{0}$ be the translationally invariant and self-adjoint operator (arbitrary translations in $\mathbb{Z}^4$) on $l^{2}(\mathbb{Z}^2 \times \mathbb{Z}^2)$ with the property
\beq
	\tilde{T}_{0}(n_{1},n_{2},m_{1},m_{2}) = T(n_{1},n_{2},m_{1},m_{2})
\eeq
($n_{1},n_{2},m_{1},m_{2} \in \mathbb{Z}^2$) whenever $\| n_{1} - n_{2} \|_{2} \geq R+m+1$ ($R$ denotes the spatial cutoff of the spin-spin interaction; recall that $\max\{ \| i_{k} - i_{l} \| \, : \, k,l \in \{ 1,...,m \}  \} > R$ implies zero interaction). Note that $\tilde{T}_{0}$ is well-defined because there are no interactions (between the two particles) within the constraining domain  
\beq\label{Constraining.domain}
	\{ (n_{1},n_{2}) \in \mathbb{Z}^4 \, | \,  \| n_{1} - n_{2} \|_{2} \geq R+m+1  \}.
\eeq
Define
\beq
	\tilde{T}_{I} := T - \tilde{T}_{0}.
\eeq
We conclude that $\tilde{T}_{I}(\cdot.\cdot)$ is only supported on a finite neighborhood of the origin in $\mathbb{Z}^4$ and that $\tilde{T}_{0}$ and $\tilde{T}_{I}$ are of the form
\begin{align}
	\left(\tilde{T}_{0}  \hat{e}_{j}^{(k)}\right)(\mathbf{n}) &= \sum_{\mathbf{p} \in \mathcal{P}} \xi^{(0)}_{\mathbf{p}} \, \hat{e}_{j}^{(k)}(\mathbf{n} - \mathbf{p}) \nn \\
	\left(\tilde{T}_{I}  \hat{e}_{j}^{(k)}\right)(\mathbf{n}) &= \sum_{\mathbf{p} \in \mathcal{P}} \xi^{(I)}_{\mathbf{p}}(n_{1} - n_{2}) \, \hat{e}_{j}^{(k)}(\mathbf{n} - \mathbf{p})
\end{align}
($\xi^{(0)}$ is by construction translationally invariant while $\xi^{(I)}_{\mathbf{p}}(n_{1} - n_{2})$ is not). The explicit form of the constraining domain~\eqref{Constraining.domain} implies
\beq\label{where.xi.I.vanishes}
	\xi^{I}_{\mathbf{p}}(n_{1} - n_{2}) = 0
\eeq
whenever $\| n_{1} - n_{2} \|_{2} \geq R+m+1$. The set $\mathcal{P} \subset \mathbb{Z}^4$ is contained in $\{ \mathbf{a} \in \mathbb{Z}^4 | \| \mathbf{a} \|_{\infty} \leq m +1 \}$ and the coefficients $\xi_{\mathbf{p}}^{(0)} + \xi_{\mathbf{p}}^{(I)}$ are equal to the sum of those coupling constants $\xi(i_{1},z,...,i_{m},z)$ with the property that the difference between the endpoint and the start-point of the path that corresponds to $\{ i_{1},...,i_{m} \}$ equals $\mathbf{p} \in \mathbb{Z}^4$. Thus,
\begin{align}
	T(k)(l,j)	
	=&	\sp{\hat{e}_{l}^{(k)}}{T \hat{e}_{j}^{(k)}} \nn \\
	=&	\sp{\hat{e}_{l}^{(k)}}{(\tilde{T}_{0} + \tilde{T}_{I}) \hat{e}_{j}^{(k)}} \nn \\
	=&	\sum_{(n_{1},n_{2}) \in \mathbb{Z}^4}	\overline{\hat{e}_{l}^{(k)}((n_{1},n_{2}))} \left( \sum_{\mathbf{p} \in \mathcal{P}} \left(\xi^{(0)}_{\mathbf{p}} + \xi^{I}_{\mathbf{p}}(n_{1} - n_{2})\right) \, \hat{e}_{j}^{(k)}((n_{1},n_{2}) - \mathbf{p}) \right) \nn \\
	=&	\frac{2}{(2\pi)^4} \sum_{(n_{1},n_{2}) \in \mathbb{Z}^4}  e^{i n_{2} \cdot k} \delta_{l, (n_{1} - n_{2})}   \left( \sum_{\mathbf{p} \in \mathcal{P}} \left(\xi^{(0)}_{\mathbf{p}} + \xi^{I}_{\mathbf{p}}(n_{1} - n_{2})\right) \, e^{-i (n_{2} - p_{2}) \cdot k} \delta_{j, (n_{1} - n_{2} + p_{2} -p_{1})} \right).
\end{align}
The substitutions $n_{r} := n_{1} - n_{2}$ and $n_{a} := n_{2}$ lead to
\begin{align}
	T(k)(l,j)	
	=&	\frac{2}{(2\pi)^4} \sum_{(n_{r},n_{a}) \in \mathbb{Z}^4}  e^{i n_{a} \cdot k} \delta_{l, n_{r}}   \left( \sum_{\mathbf{p} \in \mathcal{P}} \left(\xi^{(0)}_{\mathbf{p}} + \xi^{I}_{\mathbf{p}}(n_{r})\right) \, e^{-i (n_{a}-p_{2}) \cdot k} \delta_{j, n_{r} + p_{2} - p_{1}} \right) \nn \\
	=&	\frac{2}{(2\pi)^4} \left( \sum_{n_{a} \in \mathbb{Z}^2}  e^{i n_{a} \cdot k} e^{-i n_{a} \cdot k} \right) \left( \sum_{n_{r} \in \mathbb{Z}^2} \sum_{\mathbf{p} \in \mathcal{P}}  \delta_{l, n_{r}}   \left(\xi^{(0)}_{\mathbf{p}} + \xi^{I}_{\mathbf{p}}(n_{r})\right) \, e^{i p_{2} \cdot k}  \delta_{j, n_{r} + p_{2} - p_{1}} \right) \nn \\
	=&	\frac{2}{(2\pi)^4} \left( \sum_{n_{a} \in \mathbb{Z}^2}  e^{i n_{a} \cdot k} e^{-i n_{a} \cdot k} \right) \left( \sum_{\mathbf{p} \in \mathcal{P}}  \left(\xi^{(0)}_{\mathbf{p}} + \xi^{I}_{\mathbf{p}}(l)\right) \, e^{i p_{2} \cdot k}  \delta_{j, l + p_{2} - p_{1}} \right)
\end{align}
The divergent first factor is the scattering part that we ignore by setting it equal to 1 because we are interested in the relative dynamics of the two particles. This divergent part also appears in the two-body problems in the continuum; it corresponds to the free evolution of the center of mass. We continue by reinterpreting the space $\mathbb{Z}^2$ that carries the labels $l$ and $j$ as the configuration space associated to the relative dynamics of the two particles. We do this for each $k \in \mathbb{T}^2$ separately. The relative dynamics is thus generated by the hopping Hamiltonian $T(k)_{0} + T(k)_{I}$ where
\begin{align}
	T(k)_{0}(l,j)	&:=	\frac{2}{(2\pi)^4}  \sum_{\mathbf{p} \in \mathcal{P}}   \xi^{(0)}_{\mathbf{p}}  \, e^{i p_{2} \cdot k}  \delta_{j, l + p_{2} - p_{1}}    \label{Complete.relative.dynamics1} \\
	T(k)_{I}(l,j)	&:=	\frac{2}{(2\pi)^4}  \sum_{\mathbf{p} \in \mathcal{P}}   \xi^{(I)}_{\mathbf{p}}(l)  \, e^{i p_{2} \cdot k}  \delta_{j, l + p_{2} - p_{1}}  \label{Complete.relative.dynamics2}
\end{align}
Note that $T(k)_{I}(l,j)$ vanishes whenever $l \geq R+m+1$ or $j \geq R+m+1$. This is a consequence of~\eqref{where.xi.I.vanishes} and the self-adjointness of $T(k)$. At infinity the relative dynamics is thus generated by translationally invariant and finite-ranged hopping (see~\eqref{Complete.relative.dynamics1}).

\subsection{Existence of Nonempty Absolutely Continuous Spectrum}

In the present subsection we show that the absolutely continuous spectrum $\sigma_{ac}(T(k)_{0} + T(k)_{I})$ of the family of operators describing the relative motion of the pair of defects is nonempty. To that purpose we first show that $\sigma(T(k)_{0}) =\sigma_{ac}(T(k)_{0})$ and use the Kato Rosenblum Theorem (that is stated below) afterwards to infer 
\beq
	\emptyset \neq \sigma_{ac}(T(k)_{0})  \subseteq \sigma_{ac}(T(k)_{0} + T(k)_{I}).
\eeq

\begin{lemma}\label{Existence.of.scattering.states.Lemma0}
	The spectrum of $T(k)_{0}$ is absolutely continuous, i.e.,
	\beq
		\sigma(T(k)_{0}) =\sigma_{ac}(T(k)_{0}).
	\eeq
\end{lemma}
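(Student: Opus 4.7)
My plan is to use the fact that $T(k)_{0}$, by~\eqref{Complete.relative.dynamics1}, has matrix elements $T(k)_{0}(l,j)$ that depend only on the difference $j-l$. Hence $T(k)_{0}$ is a translation invariant (convolution) operator on $l^{2}(\mathbb{Z}^{2})$, and since the set $\mathcal{P}$ is finite it is bounded and finite-ranged. Writing $\tau_{k}(r) := T(k)_{0}(0,r)$, the Fourier transform $\mathcal{F}: l^{2}(\mathbb{Z}^{2}) \to L^{2}(\mathbb{T}^{2})$ unitarily conjugates $T(k)_{0}$ to the operator of multiplication by its Fourier symbol
\beq
	\hat{\tau}_{k}(q) = \sum_{r \in \mathbb{Z}^{2}} \tau_{k}(r) \, e^{-i r \cdot q}, \qquad q \in \mathbb{T}^{2}.
\eeq
Because only finitely many $\tau_{k}(r)$ are nonzero, $\hat{\tau}_{k}$ is a trigonometric polynomial on $\mathbb{T}^{2}$, and self-adjointness of $T(k)_{0}$ forces it to be real-valued.

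Next I would invoke the standard spectral theorem for multiplication operators: for a bounded real-measurable $f$ on $\mathbb{T}^{2}$, the multiplication operator $M_{f}$ on $L^{2}(\mathbb{T}^{2})$ has spectrum equal to the essential range of $f$, and its spectrum is purely absolutely continuous if and only if the level sets $\{ q \in \mathbb{T}^{2} \mid f(q) = c \}$ have Lebesgue measure zero for every $c \in \mathbb{R}$. Since $\hat{\tau}_{k}$ is a nonconstant real-analytic function on the connected manifold $\mathbb{T}^{2}$, every level set $\{ \hat{\tau}_{k} = c \}$ is a proper real-analytic subvariety and therefore has Lebesgue measure zero. This yields $\sigma(M_{\hat{\tau}_{k}}) = \sigma_{ac}(M_{\hat{\tau}_{k}})$, and unitary equivalence transports this to $\sigma(T(k)_{0}) = \sigma_{ac}(T(k)_{0})$.

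The only step that requires a small side remark is the nonconstancy of $\hat{\tau}_{k}$: if every $\xi^{(0)}_{\mathbf{p}}$ vanished the "free" hopping part would be trivial and no propagation would occur in the first place, so the physically relevant case is precisely the one where at least one translation invariant hopping amplitude is nonzero, making $\hat{\tau}_{k}$ a nonconstant trigonometric polynomial for generic (and in particular for physical, e.g.\ nearest-neighbour) choices of the $\xi^{(0)}_{\mathbf{p}}$.

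I expect the main obstacle to be bookkeeping rather than anything deep: one must verify carefully that translation invariance in~\eqref{Complete.relative.dynamics1} really does hold after the Kronecker-delta $\delta_{j, l+p_{2}-p_{1}}$ is unpacked, and that the phase factor $e^{i p_{2}\cdot k}$ only modifies the coefficients $\tau_{k}(r)$ but does not destroy their translation invariance in $l$. Once that is checked, the rest is the standard Fourier-multiplier argument combined with the measure-zero property of real-analytic level sets, which can then be fed into the Kato--Rosenblum theorem in the next step to conclude $\sigma_{ac}(T(k)_{0}+T(k)_{I}) \neq \emptyset$.
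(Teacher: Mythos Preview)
Your approach is correct and shares its core with the paper's: both recognise that $T(k)_{0}$ is a finite-range convolution operator on $l^{2}(\mathbb{Z}^{2})$ and diagonalise it via the Fourier transform, reducing the question to the multiplication operator by the real trigonometric polynomial symbol $\hat{\tau}_{k}$ (the paper's $E_{q}$). The difference is only in how absolute continuity is then extracted. The paper works out, by hand, an $L^{1}$-density $m_{\hat{\psi}}$ for each spectral measure $\mu_{\hat{\psi}}$ by passing to polar coordinates in $q$-space; you instead invoke the abstract multiplication-operator picture. Your route is shorter and more conceptual, and it sidesteps the bookkeeping needed to turn the polar-coordinate integral into an honest spectral density.

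One point does need tightening. The criterion you state, ``$M_{f}$ has purely absolutely continuous spectrum if and only if every level set $f^{-1}(c)$ has Lebesgue measure zero'', is not correct as written: measure-zero level sets rule out eigenvalues (atoms of $f_{*}\mathrm{Leb}$) but do not by themselves exclude a singular continuous part. What you actually need is that the pushforward $f_{*}(\mathrm{Leb}_{\mathbb{T}^{2}})$ is absolutely continuous with respect to Lebesgue measure on $\mathbb{R}$, i.e.\ $\mathrm{Leb}_{\mathbb{T}^{2}}\bigl(f^{-1}(N)\bigr)=0$ for every Lebesgue-null $N\subset\mathbb{R}$. For a nonconstant real-analytic $f$ on $\mathbb{T}^{2}$ this still holds: the critical set $\{\nabla f=0\}$ is a proper analytic subvariety, hence null, and away from it the coarea formula (or the implicit function theorem locally) shows that the pushforward has a density. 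So your conclusion stands; only the stated criterion should be sharpened.
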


\begin{proof}

Fix $k \in [0,2\pi)^2$ and set $H := T(k)_{0}$. According to the spectral theorem there exists a projection valued measure $(P_{\Delta})_{\Delta}$ such that 
\beq\label{temp7348274}
	H = \int_{\AR} r \, dP_{r}. 
\eeq
To prove the assertion we have to show that for every $\hat{\psi} \in l^2(\mathbb{Z}^2)$ there exists a function $m_{\hat{\psi}} \in L^1(\AR)$ with the property
\beq\label{temp.jdisjdiae}
	\mu_{\hat{\psi}}(\Delta) := \sp{\hat{\psi}}{P_{\Delta} \hat{\psi}} = \sp{\hat{\psi}}{\int_{\Delta} dP_{r}(\hat{\psi})} = \int_{\Delta} m_{\hat{\psi}}(r) \, dr
\eeq
($\Delta \subseteq \AR$). To determine $\int_{\Delta} dP_{r}(\hat{\psi})$ we have a closer look at $H \hat{\psi}$:
\begin{align*}
	(H \hat{\psi})(n)
	&=	\left( \frac{1}{2\pi} \right)^2 \int_{[0,2\pi)^2} \psi(q) \left( H e^{-i q \cdot (\cdot)}\right)(n) \, dq \\
	&=	\left( \frac{1}{2\pi} \right)^2 \int_{[0,2\pi)^2} \psi(q) \left( E_{q} e^{-i q \cdot (\cdot)}\right)(n) \, dq \\
	&=	\left( \frac{1}{2\pi} \right)^2 \int_{\{ (r,\phi) | q(r,\phi) \in [0,2\pi)^2 \}} \psi(q(r,\phi))  E_{q(r,\phi)} e^{-i q(r,\phi) \cdot n} \, d\phi \, rdr \\
	&=	\int_{0}^{2\sqrt{2}\pi} dr \, r \int_{\{ \phi(r) \}}	\left( \frac{1}{2\pi} \right)^2  \psi(q(r,\phi))  E_{q(r,\phi)} e^{-i q(r,\phi) \cdot n}  d\phi
\end{align*}
($(r,\phi) \in \AR_{+} \times [0,2\pi)$ are polar coordinates) with
\beq
	E_{q} = \frac{2}{(2\pi)^4}  \sum_{\mathbf{p} \in \mathcal{P}}   \xi^{(0)}_{\mathbf{p}}  \, e^{i p_{2} \cdot k} e^{i q\cdot (p_{2} - p_{1})}
\eeq
because
\begin{align}\label{existence.of.scattering.states.pf.Lemma1.1}
		(T(k)_{0} e^{i q \cdot (\cdot)})(n)	
		&=	\sum_{m \in \mathbb{Z}^2}  T(k)_{0}(n,m) e^{i q \cdot m} \nn \\
		&=	\sum_{m \in \mathbb{Z}^2}  \frac{2}{(2\pi)^4}  \sum_{\mathbf{p} \in \mathcal{P}}   \xi^{(0)}_{\mathbf{p}}  \, e^{i p_{2} \cdot k}  \delta_{m, n + p_{2} - p_{1}} e^{i q\cdot m} \nn \\
		&=	\frac{2}{(2\pi)^4}  \sum_{\mathbf{p} \in \mathcal{P}}   \xi^{(0)}_{\mathbf{p}}  \, e^{i p_{2} \cdot k} e^{i q\cdot (n + p_{2} - p_{1})} \nn \\
		&=	\left( \frac{2}{(2\pi)^4}  \sum_{\mathbf{p} \in \mathcal{P}}   \xi^{(0)}_{\mathbf{p}}  \, e^{i p_{2} \cdot k} e^{i q\cdot (p_{2} - p_{1})} \right)  e^{i q \cdot n}.
\end{align}
From Eq.~\eqref{temp7348274} we thus conclude
\beq
	\int_{\Delta} dP_{r}(\hat{\psi}) =  \int_{\Delta}  \chi_{[0,2\sqrt{2}\pi]}(r)    \int_{\{ \phi(r) \}}	\left( \frac{1}{2\pi} \right)^2  \psi(q(r,\phi))  E_{q(r,\phi)} e^{-i q(r,\phi) \cdot n}  d\phi     dr
\eeq
($\Delta \subseteq \AR$). Now we can go back to Eq.~\eqref{temp.jdisjdiae}:
\begin{align}
	\sp{\hat{\psi}}{\int_{\Delta} dP_{r}(\hat{\psi})}
	&=	\sum_{n \in \mathbb{Z}^2} \overline{\hat{\psi}(n)}  \left( \int_{\Delta} dP_{r}(\hat{\psi})  \right)(n)  \\
	&=	\int_{\Delta}  \chi_{[0,2\sqrt{2} \pi]}(r)    \int_{\{ \phi(r) \}} \left( \frac{1}{2\pi} \right)^2  \psi(q(r,\phi))  E_{q(r,\phi)}      \overline{ \left( \sum_{n \in \mathbb{Z}^2} \hat{\psi}(n)   e^{i q(r,\phi) \cdot n} \right)  }     d\phi     dr .
\end{align}
Consequently,
\beq
	\sp{\hat{\psi}}{\int_{\Delta} dP_{r}(\hat{\psi})} =  \int_{\Delta}  \chi_{[0,2\sqrt{2} \pi]}(r)    \int_{\{ \phi(r) \}} \left( \frac{1}{2\pi} \right)^2    E_{q(r,\phi)}  | \psi(q(r,\phi)) |^2     d\phi     dr 
\eeq
because the bracket equals the inverse discrete Fourier transform. The definition
\beq
	m_{\hat{\psi}}(r) := \chi_{[0,2\sqrt{2} \pi]}(r)    \int_{\{ \phi(r) \}} \left( \frac{1}{2\pi} \right)^2    E_{q(r,\phi)}  | \psi(q(r,\phi)) |^2     d\phi \in L^1(\AR)
\eeq
concludes the proof of the Lemma.

\end{proof}

One says that the \emph{generalized wave operators $\Omega^{\pm}(A,B)$ exist} if the strong limits
\beq
	\Omega^{\pm}(A,B) = \mathrm{s}-\lim_{t\rightarrow \mp \infty} e^{iAt} e^{-iBt} P_{\mathrm{ac}}(B)
\eeq
exist (see~Ref.~\onlinecite{Reed.Simon.III}). The operator $A \upharpoonright \mathrm{Ran} \, \Omega^{\pm}(A,B)$ is unitarily equivalent to $B \upharpoonright [P_{\mathrm{ac}}(B) l^2(\mathbb{Z}^2)]$ if $\Omega^{\pm}(A,B)$ exist (see the proof of part (c) of Proposition~1 in~Ref.~\onlinecite{Reed.Simon.III}). In our case, the Kato~Rosenblum Theorem guarantees the existence of the operators $\Omega^{\pm}(T(k)_{0} + T(k)_{I},T(k)_{0})$ (recall that $T(k)_{I}$ is trace class):

\begin{theorem}[Kato Rosenblum, see Ref.~\onlinecite{Reed.Simon.III}]
	Let $A$ and $B$ be two self-adjoint operators with $A-B$ being trace class. Then, $\Omega^{\pm}(A,B)$ exist and are complete.
\end{theorem}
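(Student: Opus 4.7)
\emph{Proof plan for Kato--Rosenblum.} This is the classical Kato--Rosenblum theorem, and the plan is to follow the route laid out in Reed--Simon Volume III. Because the trace-class hypothesis is symmetric ($A-B$ is trace class iff $B-A$ is), and because completeness of $\Omega^{\pm}(A,B)$ is equivalent to existence of $\Omega^{\pm}(B,A)$, it suffices to prove existence alone and then apply the result a second time with the roles of $A$ and $B$ interchanged. Existence itself I would derive from Kato's abstract smoothness theorem.

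Setting $V := A-B$ and using the polar decomposition, I would factor $V = C^{*} J C$ with $C := |V|^{1/2}$ and $J$ a bounded partial isometry arising from the sign of $V$. Since $V$ is trace class, $C$ is Hilbert--Schmidt with $\|C\|_{\mathrm{HS}}^{2} = \|V\|_{1} < \infty$. Kato's theorem then reduces existence and completeness of $\Omega^{\pm}(A,B)$ to verifying that $C$ is $A$-smooth and $B$-smooth on the absolutely continuous subspace, in the sense
\beq
\sup_{\psi \in \mathcal{H}_{\mathrm{ac}}(H_{0}),\, \|\psi\| = 1} \int_{\mathbb{R}} \|C e^{-iH_{0} t} \psi\|^{2}\, dt < \infty, \qquad H_{0} \in \{A, B\}.
\eeq
This abstract reduction is precisely what allows the trace-class summability to take the place of a naive application of Cook's criterion, which would require $L^{1}_{t}$ control of $\|V e^{-iBt}\psi\|$ and is insufficient here.

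The analytic core is Rosenblum's lemma, which is obtained by applying Plancherel in the spectral representation of $H_{0}$: for any fixed $\chi$ and any $\psi$ whose spectral measure has bounded Lebesgue density $\|d\mu^{H_{0}}_{\psi}/d\lambda\|_{\infty} \leq M$,
\beq
\int_{\mathbb{R}} |\langle \chi, e^{-iH_{0} t} \psi \rangle|^{2}\, dt \leq 2\pi M \|\chi\|^{2}.
\eeq
Summing this estimate over an orthonormal basis of $\overline{\mathrm{Ran}\, C}$ and using Parseval converts $\|C\|_{\mathrm{HS}}^{2} = \|V\|_{1}$ into the desired smoothness bound, first on the dense set of $\psi$ with bounded spectral density and then on all of $\mathcal{H}_{\mathrm{ac}}(H_{0})$ by an approximation/monotone convergence argument. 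The main obstacle, and the reason trace-class rather than merely Hilbert--Schmidt is the correct hypothesis, is this final uniformity step: Rosenblum's lemma gives a pointwise-in-$\psi$ estimate whose constant $M$ does not stay bounded as one ranges over the full unit ball, so upgrading it to Kato's uniform smoothness requires a spectral localization followed by the trace-norm summation, which is precisely the step at which a Hilbert--Schmidt hypothesis would fail (indeed, Weyl--von Neumann counterexamples show the theorem is false for Hilbert--Schmidt perturbations).
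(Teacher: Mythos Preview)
The paper does not supply its own proof of this theorem; it is quoted as a black box from Reed--Simon~III and then immediately applied to the pair $(T(k)_{0}+T(k)_{I},\,T(k)_{0})$. So there is no ``paper's proof'' to compare against---only the cited reference.

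Your plan is the standard textbook route and is sound. The reduction of completeness to existence by interchanging $A$ and $B$ is exactly how Reed--Simon organize it, and the analytic heart you identify---Rosenblum's Plancherel estimate combined with the Hilbert--Schmidt summation coming from $|V|^{1/2}$---is precisely the mechanism that makes the trace-class hypothesis the right one. One small remark: framing the argument as verifying Kato-smoothness of $C=|V|^{1/2}$ is a legitimate and common packaging, but the classical Reed--Simon presentation (Theorem~XI.8, via Pearson's theorem XI.7) proceeds slightly more directly, proving convergence of $e^{iAt}e^{-iBt}P_{\mathrm{ac}}(B)$ by a Cook-type estimate in which the trace-class factorization and Rosenblum's lemma together furnish the required integrability, without invoking the full smooth-perturbation machinery. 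Either route works; your version is arguably cleaner conceptually, while the Reed--Simon version avoids the detour through the abstract smoothness criterion. Your closing comment on why Hilbert--Schmidt is insufficient (Weyl--von~Neumann) is correct and to the point.
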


We conclude that $T(k)_{0} \upharpoonright [P_{\mathrm{ac}}(T(k)_{0})  l^2(\mathbb{Z}^2)] = T(k)_{0}$ (see Lemma~\ref{Existence.of.scattering.states.Lemma0}) is unitarily equivalent to $T(k)_{0} + T(k)_{I} \upharpoonright \mathrm{Ran} \, \Omega^{\pm}(T(k)_{0} + T(k)_{I},T(k)_{0})$ and therefore
\begin{align*}
	\sigma\left(T(k)_{0} + T(k)_{I} \upharpoonright \mathrm{Ran}  \, \Omega^{\pm}(T(k)_{0} + T(k)_{I},T(k)_{0})\right) 
	&=	\sigma_{ac}\left(T(k)_{0} + T(k)_{I} \upharpoonright \mathrm{Ran}  \, \Omega^{\pm}(T(k)_{0} + T(k)_{I},T(k)_{0})\right) \\
	&=	\sigma_{ac}\left(T(k)_{0}\right) \neq \emptyset
\end{align*}
(use Lemma~\ref{Existence.of.scattering.states.Lemma0}). This proves the following Theorem because
\beq
	\sigma_{ac}\left(T(k)_{0} + T(k)_{I} \upharpoonright \mathrm{Ran} \, \Omega^{\pm}(T(k)_{0} + T(k)_{I},T(k)_{0})) \subseteq \sigma_{ac}(T(k)_{0} + T(k)_{I}\right).
\eeq

\begin{theorem}
	The absolutely continuous spectrum of $T(k)_{0} + T(k)_{I}$ is nonempty.
\end{theorem}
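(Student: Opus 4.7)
The plan is to obtain this theorem as a direct corollary of the preceding Lemma combined with the Kato--Rosenblum theorem. Lemma~\ref{Existence.of.scattering.states.Lemma0} already guarantees that $T(k)_{0}$ has purely absolutely continuous spectrum, and this spectrum is certainly nonempty since $T(k)_{0}$ is a bounded self-adjoint operator on $l^{2}(\mathbb{Z}^{2})$. It therefore suffices to transport the absolutely continuous spectrum from the unperturbed hopping to the full operator $T(k)_{0} + T(k)_{I}$ by a scattering-theoretic argument.

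The first step is to verify the hypotheses of Kato--Rosenblum for the pair $(A,B) = (T(k)_{0} + T(k)_{I}, T(k)_{0})$. Self-adjointness and boundedness of both operators follow from the symmetry of the matrix elements in \eqref{Complete.relative.dynamics1}--\eqref{Complete.relative.dynamics2} together with the finiteness of the sum over $\mathbf{p} \in \mathcal{P}$. The crucial point is that the difference $A - B = T(k)_{I}$ must be trace class. For this I would invoke the remark immediately below \eqref{Complete.relative.dynamics2}: $T(k)_{I}(l,j) = 0$ whenever $\|l\|_{2} \geq R+m+1$ or $\|j\|_{2} \geq R+m+1$, which itself follows from \eqref{where.xi.I.vanishes} together with the self-adjointness of $T(k)$. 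Hence the matrix of $T(k)_{I}$ is supported on a finite block of relative-coordinate sites near the origin, so $T(k)_{I}$ is of finite rank, and in particular trace class.

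With the hypotheses verified, Kato--Rosenblum produces complete wave operators $\Omega^{\pm}(T(k)_{0} + T(k)_{I}, T(k)_{0})$. Completeness yields a unitary equivalence between $T(k)_{0} \upharpoonright [P_{\mathrm{ac}}(T(k)_{0}) \, l^{2}(\mathbb{Z}^{2})]$ and $(T(k)_{0} + T(k)_{I}) \upharpoonright \mathrm{Ran}\,\Omega^{\pm}$. By the Lemma, the left-hand operator is simply $T(k)_{0}$ itself, so the full nonempty spectrum $\sigma(T(k)_{0}) = \sigma_{ac}(T(k)_{0})$ is contained in $\sigma_{ac}\bigl( (T(k)_{0} + T(k)_{I}) \upharpoonright \mathrm{Ran}\,\Omega^{\pm} \bigr)$, and hence in $\sigma_{ac}(T(k)_{0} + T(k)_{I})$, which is the claim.

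Since essentially all the substantive work has already been done in the preceding subsection (setting up the decomposition $T = \tilde{T}_{0} + \tilde{T}_{I}$ with $\tilde{T}_{I}$ supported on a finite range) and in the proof of the Lemma (Fourier-side spectral analysis of the free hopping), I do not expect any genuine obstacle here. The only verification requiring even a sentence is the trace-class property of $T(k)_{I}$, which is immediate from its finite support; the rest is bookkeeping of scattering-theoretic unitary equivalences.
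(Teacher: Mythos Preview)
Your proposal is correct and follows essentially the same route as the paper: invoke Lemma~\ref{Existence.of.scattering.states.Lemma0} for pure absolute continuity of $T(k)_{0}$, observe that $T(k)_{I}$ is finite rank (hence trace class) from its finite support near the origin, apply Kato--Rosenblum, and transport the nonempty absolutely continuous spectrum via the resulting unitary equivalence. One small terminological nit: the unitary equivalence between $T(k)_{0}$ on its ac subspace and $T(k)_{0}+T(k)_{I}$ restricted to $\mathrm{Ran}\,\Omega^{\pm}$ follows already from \emph{existence} of the wave operators (the intertwining relation), not from completeness --- but Kato--Rosenblum gives both, so the argument goes through unchanged.
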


\section{One-Particle Localization}\label{One.Particle.Localization}

From the work emanating from Anderson's discovery from 1958 we know the following: Let $H_{0}$ be a one-particle Hamiltonian on the lattice $\mathbb{Z}^d$ with finite-ranged hopping and assume that $\psi$ is some scattering state of $H_{0}$. Assume $H_{0}$ is perturbed by a random potential $\lambda V$. Then --- when turning on $\lambda$ ---  all the eigenvectors and generalized eigenvectors of the unperturbed Hamiltonian that are  associated to some energy interval $I \subset \AR$ turn (under some circumstances) into exponentially decaying eigenvectors i.e., there exist $A_{n}$ such that
\beq
	| \phi_{n}(x) |^2 \leq A_{n} e^{-\mu_{n} | x |}
\eeq
for all eigenvectors $\phi_{n}$ of $H_{0} + \lambda V$ with eigenvalue $E \in I$. This effect is called \emph{spectral localization in $I \subset \AR$}. Another established notion is \emph{dynamical localization in $I \subset \AR$}: Every initially localized wave function with spectral components in $I \cap \sigma(H_{0} + \lambda V)$ will remain exponentially localized at all times. A sufficient condition for dynamical localization is 
\beq\label{dyn.loc}
	\EW{ \sup_{t\in \AR}  | \sp{\delta_{y}}{e^{-iHt} P_{I}(H) \delta_{x}} | } \leq A e^{-|x-y|/\xi}
\eeq
for some $A,\xi \in (0,\infty)$ which depend themselves on $I$ ($P_{I}(H)$ is the spectral projector of $H$ with respect to $I$). The notation $\EW{...}$ denotes averaging with respect to different realizations of the random potential $\lambda V$. Note that this is exactly what we are looking for: The expected probability for measuring the initial condition (i.e., $\delta_{x}$) at $y$ decays --- independent of the time $t$ --- exponentially in the distance between $x$ and $y$. From a naive point of view one could expect that the notions ``dynamical localization'' and ``spectral localization'' are equivalent. In 1995, del Rio et al. showed that this is not true in general: while dynamical localization implies spectral localization the opposite is sometimes violated~\cite{Simon1995}. However in 1994 Aizenman proved a more general version of the following sufficient criterion for dynamical localization (cf. Ref.~\onlinecite{Aizenman1994}).

\begin{theorem}[see Ref.~\onlinecite{Aizenman1994}]\label{1p.FMM}
	Let $H = T + V(x)$ be a Hamiltonian acting on the Hilbert space $l^2(\mathbb{Z}^d)$ with $T$ being a finite-range hopping operator and let $H_{\Omega}$, $\Omega \subset \mathbb{Z}^d$, be obtained from $H$ by setting to zero all the hopping terms starting of and ending at elements outside of $\Omega$. Assume that the values $V(x)$, $x \in \mathbb{Z}^d$, building up the potential are iid random variables with a probability measure given in terms of a compactly supported and bounded density $\rho(v)$. Then,
	\beq\label{aizenman.criterium}
		\EW{| \sp{\delta_{y}}{(H_{\Omega} - E)^{-1} \delta_{x}} |^s} \leq C_{s} \, e^{- \mu | x-y |}
	\eeq
	for all $E \in I$ with some $\Omega$-independent constants $s \in (0,1)$, $\mu > 0$, $C < \infty$ forms a sufficient condition for dynamical localization~\eqref{dyn.loc}.
\end{theorem}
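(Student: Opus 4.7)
\emph{Strategy.} My plan is to prove dynamical localization via the \emph{eigenfunction correlator} technique that underpins Aizenman's fractional moment method. The supremum over $t$ on the left side of \eqref{dyn.loc} is dominated pointwise by a nonnegative spectral object, whose disorder average is in turn controlled by the fractional Green function moments postulated in \eqref{aizenman.criterium}. I would first work in finite volume, establishing the bound with constants independent of $\Omega$, and only then take $\Omega \nearrow \mathbb{Z}^d$ using strong resolvent convergence and Fatou's lemma to transfer the estimate to the full lattice Hamiltonian $H$.

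\emph{Step 1 (finite volume and eigenfunction correlator).} For a finite box $\Omega \subset \mathbb{Z}^d$ the operator $H_\Omega$ is self-adjoint on a finite-dimensional Hilbert space with eigenpairs $(E_n^\Omega, \phi_n^\Omega)$. A direct spectral calculation gives
\beq
  \sup_{t\in\mathbb{R}} \bigl| \langle \delta_y | e^{-itH_\Omega} P_I(H_\Omega) | \delta_x\rangle \bigr|
  \;\leq\; Q_\Omega^I(x,y) := \sum_{n:\, E_n^\Omega \in I} |\phi_n^\Omega(x)|\,|\phi_n^\Omega(y)|,
\eeq
so it suffices to bound $\mathbb{E}[Q_\Omega^I(x,y)]$ uniformly in $\Omega$.

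\emph{Step 2 (key estimate).} The heart of the argument is the bound
\beq\label{proposalkey}
  \mathbb{E}[Q_\Omega^I(x,y)] \;\leq\; K \cdot \sup_{E\in I,\,\eta > 0} \mathbb{E}\bigl[ |\langle \delta_y | (H_\Omega - E - i\eta)^{-1} | \delta_x\rangle|^s \bigr],
\eeq
with a constant $K = K(s,\|\rho\|_\infty,|I|) < \infty$ that is independent of $\Omega$. Its proof combines three ingredients: (a) a Krein / rank-one perturbation identity that isolates the dependence of $G_\Omega(x,y;z)$ on the two random potentials $V(x)$ and $V(y)$; (b) the integrability $\int \rho(v)/|a+v|^s\,dv \leq \|\rho\|_\infty C_s < \infty$, which holds precisely because $s<1$; and (c) a Wegner estimate, following from the boundedness of $\rho$, that controls the averaged density of states uniformly in $\Omega$ and $\eta$ and makes the boundary limit $\eta \to 0^+$ harmless. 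Together (a)--(c) show that the spectral mass that $H_\Omega$ places in $I$ is quantitatively dominated by the fractional Green-function moments at real energies in $I$.

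\emph{Conclusion and main obstacle.} Substituting the hypothesis \eqref{aizenman.criterium} into \eqref{proposalkey} yields $\mathbb{E}[Q_\Omega^I(x,y)] \leq A\,e^{-\mu|x-y|}$ uniformly in $\Omega$, and the passage $\Omega \nearrow \mathbb{Z}^d$ is then routine. The technically delicate step is Step 2: the fractional exponent $s<1$ is essential because the potential-integral in (b) diverges at $s=1$, and uniformity of the constants in both $\Omega$ and in the limit $\eta \to 0^+$ requires a careful a priori analysis of the decoupled rank-one Green function. Once \eqref{proposalkey} is in place with $\Omega$-independent constants, everything else is soft functional analysis.
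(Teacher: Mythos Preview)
The paper does not prove this statement at all: Theorem~\ref{1p.FMM} is stated with the attribution ``see Ref.~\onlinecite{Aizenman1994}'' and is invoked purely as a known result from the literature, with no proof or sketch given. The same is true of its $n$-body analogue, Theorem~\ref{np.FMM}, which is cited from Aizenman--Warzel. The paper's own work begins only where it \emph{adapts} those authors' arguments (Lemma~4.6, Theorem~5.3, Theorem~6.1, Lemma~6.3) to accommodate interactions given by inhomogeneous hopping rather than by potentials.

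That said, your outline is the correct and standard route to Aizenman's 1994 result: bound the propagator by the eigenfunction correlator $Q_\Omega^I$, then control $\mathbb{E}[Q_\Omega^I(x,y)]$ by the fractional Green-function moments via rank-one decoupling and the integrability of $|v|^{-s}$ for $s<1$, and finally pass to infinite volume. One point worth sharpening: the inequality you write as \eqref{proposalkey} is slightly schematic. In the actual argument the eigenfunction correlator is first related to a boundary-value integral $\int_I |G_\Omega(x,y;E+i0)|\,dE$ (via the spectral theorem and Stone's formula), and it is the \emph{average over $V(x),V(y)$} combined with the decoupling identity that converts this into a bound by $\sup_{E\in I}\mathbb{E}[|G_\Omega(x,y;E)|^s]$ times a factor depending on $|I|$ and $\|\rho\|_\infty$. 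The Wegner estimate per se is not the crux; rather it is the a~priori bound on fractional moments (your ingredient~(b)) applied after decoupling. Your identification of the main obstacle---uniformity in $\Omega$ and in $\eta\to 0^+$---is accurate.
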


The method to prove dynamical localization via the criterium~\eqref{aizenman.criterium} is sometimes called Fractional Moment Method (FMM). The criterium~\eqref{aizenman.criterium} is satisfied for a large class of physical systems in any finite dimension including systems with so called ``high disorder'':

\begin{theorem}[see Ref.~\onlinecite{Aizenman1993}]\label{1p.satisfaction.of.FMM}
	Let $H, H_{\Omega}$ be as above. Then there exists a $\lambda_{0} < \infty$ such that for all $\lambda > \lambda_{0}$ and all energies $E$
	\beq
		\EW{| \sp{\delta_{y}}{(H_{\Omega} - E)^{-1} \delta_{x}} |^s}    \leq   C_{s} e^{- \mu |x-y|}
	\eeq
	with $\mu > 0$, $C_{s} < \infty$ and $| ... |$ denotes (for example) the 1-norm on $\mathbb{Z}^d$.
\end{theorem}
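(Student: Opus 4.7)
The plan is to adapt the Aizenman--Molchanov high-disorder argument for the fractional moment method, proceeding in two main steps: a uniform single-site a priori bound, followed by an iterative geometric resolvent identity that converts on-site smallness into exponential spatial decay. For the first step, the rank-one perturbation formula (Schur complement at site $x$) expresses the diagonal Green's function as
\beq
\langle \delta_{x} | (H_{\Omega} - E)^{-1} \delta_{x} \rangle = \frac{1}{\lambda V(x) - E - \Sigma_{x}(E)},
\eeq
where the self-energy $\Sigma_{x}(E)$ depends on the hopping and on $\{V(z)\}_{z \neq x}$ but is independent of $V(x)$. Since the density $\rho$ is bounded with compact support, an elementary integral bound gives, for any $s \in (0,1)$,
\beq
\mathbb{E}_{V(x)} \left[ |\lambda V(x) - E - \Sigma_{x}(E)|^{-s} \right] \leq \frac{K_{s} \|\rho\|_{\infty}}{\lambda^{s}},
\eeq
uniformly in $E$, in $\Sigma_{x}(E)$, and in $\Omega \subseteq \mathbb{Z}^{d}$.

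For the second step, I would iterate the Krein resolvent identity
\beq
\langle \delta_{x} | G_{\Omega}(E) \delta_{y} \rangle = \langle \delta_{x} | G_{\Omega}(E) \delta_{x} \rangle \sum_{w \neq x} t_{xw} \langle \delta_{w} | G_{\Omega \setminus \{x\}}(E) \delta_{y} \rangle,
\eeq
with $t_{xw}$ the finite-range hopping matrix elements of $T$. Raising to the power $s \in (0,1)$, using subadditivity $(\sum u_{i})^{s} \leq \sum u_{i}^{s}$ for $u_{i} \geq 0$ and $s \leq 1$, and taking the expectation over $V(x)$ first while conditioning on all other disorder variables (so that $G_{\Omega \setminus \{x\}}$ is frozen and independent of $V(x)$), the a priori bound yields
\beq
\mathbb{E}\left[ |G_{\Omega}(x,y;E)|^{s} \right] \leq \frac{K_{s} \|\rho\|_{\infty}}{\lambda^{s}} \sum_{w \neq x} |t_{xw}|^{s} \, \mathbb{E}\left[ |G_{\Omega \setminus \{x\}}(w,y;E)|^{s} \right].
\eeq
Iterating along a chain of distinct intermediate sites from $x$ to $y$ produces the exponential bound provided
\beq
\eta := \frac{K_{s} \|\rho\|_{\infty}}{\lambda^{s}} \sum_{w} |t_{0w}|^{s} < 1,
\eeq
which is achievable for $\lambda$ exceeding some $\lambda_{0}$ because the finite-range assumption makes the hopping sum finite. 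The decay rate is then $\mu \sim R_{\mathrm{hop}}^{-1} |\log \eta|$, with $R_{\mathrm{hop}}$ the hopping range, since the chain needs at least $|x-y|/R_{\mathrm{hop}}$ steps.

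The main technical obstacle is the combinatorial bookkeeping in the iteration: each application of the Krein identity removes a site from $\Omega$, so one must work with a family of depleted Hamiltonians on ever-smaller subsets while ensuring the a priori bound remains uniform. This poses no real difficulty because the single-site estimate is proved uniformly in $\Omega$ and in the conditional values $\{V(z)\}_{z \neq x}$, so the iteration closes cleanly; the prefactor $C_{s}$ absorbs the polynomial branching count accumulated along the chain, which is controlled by the finite hopping range.
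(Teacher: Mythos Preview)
The paper does not prove this statement itself; it merely records that it is a special case of Lemma~3.2 in Aizenman--Molchanov~\cite{Aizenman1993}. Your sketch is exactly the high-disorder fractional-moment argument of that reference: the uniform single-site a~priori bound obtained from the Schur-complement representation of the diagonal Green's function, followed by iteration of the rank-one (Krein/Feshbach) decoupling identity, with the contraction factor $\eta$ driven below~$1$ by taking $\lambda$ large. The outline is correct and matches the cited source.

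One small imprecision: the branching count accumulated along the self-avoiding chains is exponential in the number of steps, not polynomial. It is not absorbed into the prefactor $C_{s}$ but into the contraction factor $\eta$ itself, via the hopping sum $\sum_{w}|t_{xw}|^{s}$ that you already wrote down; what remains in $C_{s}$ is only the terminal diagonal a~priori bound and the geometric-series factor $(1-\eta)^{-1}$. This does not affect the validity of the argument.
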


This theorem is a special case of Lemma 3.2 in~Ref.~\onlinecite{Aizenman1993} by Aizenman and Molchanov. Other proofs of localization are based on the so called multiscale analysis (MSA) approach invented by Fr\"ohlich and Spencer~\cite{FrohlichSpencer1983}. We conculde that if the perturbation $\lambda V$ leads to the satisfaction of the criterium \eqref{aizenman.criterium} then the expectation value of the amplitude for measuring a particle (that has been initially compactly supported) outside some ball with radius $R$ goes exponentially to zero with increasing $R$.


\section{Many-Particle Localization}\label{Many.Particle.Localization}

Note that the 2-particle system with state space $\Eig{H_{TC}}{E_{1,0}}$ and dynamics $P_{1,0} H_{I} P_{1,0}$ is an \emph{interacting} two-particle system so that we are actually not allowed to blindly apply the 1-particle theorems from before. Luckily there has been a lot of progress in the investigation of localization in interacting many-body systems. In 2009 Aizenman and Warzel~\cite{AizenmanWarzel2009} proved dynamical localization of interacting $n$-body systems ($n < \infty$) on $\mathbb{Z}^d$ under the assumption that the interactions are described by interaction potentials with finite range. More precisely, they considered $n$-particle Hamiltonians with up to $m$-point interactions ($m < \infty$) of the form
\beq\label{Aizenman.Warzel:Hamiltonian}
	H^{(n)} = \sum_{j=1}^n \left[ -\Delta_{j} + \lambda V(x_{j}) \right] + \mathcal{U}(\mathbf{x};\mathbf{\alpha})
\eeq
($\mathbf{x} = (x_{1},x_{2},...,x_{n}) \in \mathbb{Z}^{dn}$) where $\Delta_{j}$, $V(x)$ and $\mathcal{U}(\mathbf{x};\mathbf{\alpha})$ denote discrete Laplacian, iid random one-particle potential and interaction potential respectively (cf.~Ref.~\onlinecite{AizenmanWarzel2009} for the precise specifications) and prove a more general version of the following theorem.

\begin{theorem}[see Ref.~\onlinecite{AizenmanWarzel2009}]\label{Aizenman2009theorem}
	For each $n \in \mathbb{N}$ and $m \in \{ 1, ..., n \}$ there is an open set $\mathcal{L}_{n}^{(m)} \subset \mathbb{R}_{+} \times \mathbb{R}^m$ which includes regimes of strong disorder and weak interactions (see Ref.~\onlinecite{AizenmanWarzel2009} for the precise characterization of these regimes), for which at some $A,\xi < \infty$ and all $(\lambda, \mathbf{\alpha}) \in \mathcal{L}_{n}^{(m)}$, and all $\mathbf{x}, \mathbf{y} \in \mathbb{Z}^{nd}$
	\beq
		\mathbb{E}\left[ \sup_{t \in \AR} \large|  \langle  \delta_{\mathbf{x}} , e^{-it H^{(n)}}  \delta_{\mathbf{y}}      \rangle   \large|  \right] \leq A \, e^{-\text{dist}_{\mathcal{H}}(\mathbf{x}, \mathbf{y}) / \xi  } ,
	\eeq
	where the so called Hausdorff pseudo distance is defined by
	\beq
		\text{dist}_{\mathcal{H}}(\mathbf{x}, \mathbf{y}) := \max \left\{    \max_{1\leq i \leq k} \text{dist}(x_{i}, \{ \mathbf{y} \} ) ,  \max_{1\leq i \leq k} \text{dist}(\{ \mathbf{x} \}, y_{i} )    \right\}.
	\eeq
\end{theorem}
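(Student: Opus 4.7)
The plan is to adapt the Fractional Moment Method (FMM) underlying the one-particle results Theorems~\ref{1p.FMM} and \ref{1p.satisfaction.of.FMM} to the $n$-body Hamiltonian $H^{(n)}$. Concretely, I would first establish a fractional moment bound for the $n$-particle resolvent of the form
\begin{equation}
\mathbb{E}\bigl[\,|\langle \delta_{\mathbf{x}}, (H^{(n)}-E)^{-1} \delta_{\mathbf{y}}\rangle|^{s}\,\bigr] \leq C\, e^{-\mu\,\distH{x}{y}}
\end{equation}
uniformly in $E\in\mathbb{R}$, for some $s\in(0,1)$ and $C,\mu>0$ depending on $(\lambda,\boldsymbol{\alpha})\in\mathcal{L}_{n}^{(m)}$, then transfer this into the claimed $\sup_{t}$-bound on the unitary matrix elements by the standard spectral-representation argument that converts FMM Green's function decay into dynamical localization (which is how Theorem~\ref{1p.FMM} yields~\eqref{dyn.loc} in the single-particle setting).

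Step one is an \emph{a priori} single-site bound $\mathbb{E}\bigl[|\langle\delta_{\mathbf{x}},(H^{(n)}-E)^{-1}\delta_{\mathbf{y}}\rangle|^{s}\bigr] \leq K(s,\rho)/\lambda^{s}$, uniform in the configurations and the energy, obtained by conditioning on all random variables $V(x_{j})$ except one site $x_{i}$ in the support of $\mathbf{x}$ and exploiting the bounded density $\rho$ of $V$ (Aizenman--Molchanov-type decoupling, cf.\ the proof sketch behind Theorem~\ref{1p.satisfaction.of.FMM}). This supplies the smallness parameter driving the iteration in $\lambda$ and reproduces, at $n=1$, the one-body input.

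Step two is a resolvent expansion combined with a geometric reduction. I would iterate the identity $G = G_{0} + G_{0}\Gamma G$, where $\Gamma$ is the hopping across a separating hypersurface in $\mathbb{Z}^{dn}$ and $G_{0}$ is the Green's function with this hopping switched off. Each insertion contributes a factor $e^{-\mu}$ through the a priori bound. The choice of separating surface is dictated by $\distH{x}{y}$: whenever some coordinate $x_{i}$ lies further than the interaction range $m$ from the support of $\mathbf{y}$, the corresponding particle effectively decouples from the interaction $\mathcal{U}$ and its contribution can be estimated by the single-particle FMM (Theorem~\ref{1p.FMM}) directly. Running the same argument after interchanging $\mathbf{x}$ and $\mathbf{y}$ produces the symmetric $\max$ inherent in the Hausdorff pseudo-distance.

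The main obstacle is precisely that decay in the Euclidean distance $\|\mathbf{x}-\mathbf{y}\|$ on $\mathbb{Z}^{dn}$ is false: because $\mathcal{U}$ is finite-ranged, an isolated particle can detach from a tight cluster and drift arbitrarily far at essentially no cost, so $\langle\delta_{\mathbf{x}},e^{-itH^{(n)}}\delta_{\mathbf{y}}\rangle$ need not decay in $\|\mathbf{x}-\mathbf{y}\|$. The Hausdorff pseudo-distance $\distH{x}{y}$ is the coarsening that rescues exponential decay, and the technical price is that the FMM induction cannot be organized particle by particle but must be carried out cluster by cluster: each step either strictly reduces $\distH{x}{y}$ by displacing a point of $\mathbf{x}$ towards the support of $\mathbf{y}$, or peels off a sub-cluster of $\mathbf{x}$ that is sufficiently far from the rest so that its evolution is effectively one-body and covered by Theorem~\ref{1p.satisfaction.of.FMM}. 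Openness of $\mathcal{L}_{n}^{(m)}$ then follows from the continuity of the smallness estimates in $(\lambda,\boldsymbol{\alpha})$.
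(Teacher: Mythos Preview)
This theorem is not proved in the paper: it is quoted from Aizenman--Warzel~\cite{AizenmanWarzel2009}, and the paper only summarizes the strategy (fractional moment criterion plus its verification) before \emph{adapting} that proof to the variant with hopping interactions (the ``Dynamical Localization'' theorem in Section~\ref{Proof.of.Theorem.2}). So the relevant comparison is between your outline and the Aizenman--Warzel architecture as the paper reproduces it.

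Your high-level plan---Wegner-type a~priori bound, resolvent identity, and reduction to lower particle number---is indeed the skeleton of the argument. Two structural ingredients are missing from your sketch, however, and without them the iteration you describe does not close. First, the induction is on the particle number $n$, not ``cluster by cluster'': one assumes the fractional moment criterion~\eqref{np.aizenman.criterium} for all $k\leq n-1$ and proves it for $n$. The bridge is Theorem~\ref{theorem.5.3}, which shows that the $n$-body Green's function decays exponentially in $\min\{\distH{x}{y},\,\max(l(\mathbf{x}),l(\mathbf{y}))\}$, where $l(\mathbf{x})$ is the maximal gap in a bipartition of $\{x_1,\dots,x_n\}$; this is what lets you invoke the $(n-1)$-particle hypothesis on each piece of a well-separated configuration. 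Your ``separating hypersurface in $\mathbb{Z}^{dn}$'' does not capture this: the decoupling is in physical space $\mathbb{Z}^d$, via the partition $J\dot\cup K$, not across a hypersurface in configuration space. Second, iterating the resolvent identity and claiming ``each insertion contributes a factor $e^{-\mu}$'' is not how exponential decay is obtained. The actual mechanism is a subadditive recursion for the box quantity $B^{(n)}_s(L_k)$ (Theorem~\ref{theorem.6.1}) of the shape $B(L_{k+1})\le (a/\lambda^s)B(L_k)^2 + A L_{k+1}^{2p}e^{-2\nu L_k}$, and then Lemma~\ref{Lemma.6.2} turns this quadratic recursion into exponential decay in $L_k$. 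Without this multiscale step your scheme would at best give decay that is polynomial in the a~priori smallness, not exponential in $\distH{x}{y}$.

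A minor point: your explanation of why Euclidean decay fails (``an isolated particle can detach \dots\ at essentially no cost'') is not the mechanism. A single particle wandering off is in fact suppressed by one-body localization. The obstruction is rather that $\distH{\cdot}{\cdot}$ is a pseudo-distance on \emph{sets}: configurations with the same support but different occupation can have small Hausdorff distance yet large Euclidean distance, and the proof only controls the former (this is the ``hop from one tight cloud to another'' caveat the paper mentions after Theorem~\ref{Aizenman2009theorem}).
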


As in the 1-particle case (see Theorems~\ref{1p.FMM} and~\ref{1p.satisfaction.of.FMM}), Aizenman and Warzel prove Theorem~\ref{Aizenman2009theorem} by deriving first the many-body analog of the sufficient 1-particle criterium~\ref{1p.FMM} and by showing the validity of the validity of this criterium afterwards. Thus, the first step in the proof of Theorem~\ref{Aizenman2009theorem} is the generalization of the sufficient criterium described in Theorem~\ref{1p.FMM} to interacting many-body systems:

\begin{theorem}[Fractional Moment Criterium, see Ref.~\onlinecite{AizenmanWarzel2009}]\label{np.FMM}
	Let $H^{(n)}$ be the Hamiltonian~\eqref{Aizenman.Warzel:Hamiltonian} acting on the many-body Hilbert space $l^2(\mathbb{Z}^{nd})$, $I \subseteq \AR$, and let $H^{(n)}_{\Omega}$ ($\Omega \subset \mathbb{Z}^d$) be the finite-volume operator that is obtained from $H^{(n)}$ by keeping all matrix elements that map $C^{(n)}(\Omega)$ to itself unchanged and setting all other matrix elements to zero. Assume that the values $V(x)$, $x \in \mathbb{Z}^d$, building up the background potential are iid random variables with probability measure given in terms of a compactly supported and bounded density $\rho(v)$. Then the following is a sufficient criterium for dynamical localization (with respect to the Hausdorff pseudo-distance) for energies within the interval $I$: There exist $A, \xi < \infty$ and $s \in (0,1)$ such that
	\beq\label{np.aizenman.criterium}
		\sup_{\substack{ I \subset \AR \\ | I | \geq 1  }} \sup_{\Omega \in \mathbb{Z}^d} \frac{1}{| I |} \int_{I}  \EW{  |  G_{\Omega}^{(n)}(\mathbf{x},\mathbf{y};E)  |^s }  \, dE  \leq  A e^{-\frac{\distH{x}{y}}{\xi}}   
	\eeq
	The existence of exponential bounds for fractional moments of finite-volume Green's functions (cf.~\eqref{np.aizenman.criterium}) is equivalent to the existence of exponential bounds for finite-volume eigenfunction corrolators, i.e.,
	\beq\label{np.aizenman.criterium.B}
		\sup_{\substack{ I \subset \AR   }} \sup_{\Omega \in \mathbb{Z}^d} \EW{  Q_{\Omega}^{(n)}(\mathbf{x}, \mathbf{y}; I)  }    \leq  A e^{-\frac{\distH{x}{y}}{\xi}}.
	\eeq
\end{theorem}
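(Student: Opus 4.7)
The plan is to route both conclusions through the finite-volume eigenfunction correlator $Q^{(n)}_\Omega(\mathbf{x},\mathbf{y};I) := \sum_{E_j \in I}|\psi_j(\mathbf{x})||\psi_j(\mathbf{y})|$, where the $\psi_j$ are the orthonormal eigenfunctions of $H^{(n)}_\Omega$. Dynamical localization from the correlator bound (\ref{np.aizenman.criterium.B}) is essentially immediate: by the finite-volume spectral decomposition together with the triangle inequality, $\sup_{t \in \mathbb{R}} |\langle \mathbf{y}|e^{-itH^{(n)}_\Omega}P_I(H^{(n)}_\Omega)|\mathbf{x}\rangle| \leq Q^{(n)}_\Omega(\mathbf{x},\mathbf{y};I)$. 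Taking expectations and invoking (\ref{np.aizenman.criterium.B}) delivers the dynamical bound at finite volume uniformly in $\Omega$, and passage to infinite volume is then standard, using strong resolvent convergence $H^{(n)}_\Omega \to H^{(n)}$ as $\Omega \uparrow \mathbb{Z}^d$ together with Fatou's lemma applied to the $t$-supremum (which is a supremum of measurable functions).

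For the direction (\ref{np.aizenman.criterium.B}) $\Rightarrow$ (\ref{np.aizenman.criterium}), I would start from the spectral representation $G^{(n)}_\Omega(\mathbf{x},\mathbf{y};E+i0) = \sum_j \psi_j(\mathbf{x})\overline{\psi_j(\mathbf{y})}/(E_j - E)$, apply the subadditivity $|a+b|^s \leq |a|^s + |b|^s$ (valid for $s \in (0,1)$), and integrate over $E \in I$. Because $\int_I |E_j - E|^{-s}\,dE \leq 2\,|I|^{1-s}/(1-s)$ uniformly in $E_j$, and normalization of the $\psi_j$ forces $|\psi_j(\mathbf{x})|^s|\psi_j(\mathbf{y})|^s \leq |\psi_j(\mathbf{x})||\psi_j(\mathbf{y})|$, the integrated fractional moment of $G^{(n)}_\Omega$ is controlled by the correlator on a slightly enlarged spectral window. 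Averaging and normalizing by $|I|$ then yields (\ref{np.aizenman.criterium}) with the same exponential rate, up to multiplicative constants.

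The reverse implication (\ref{np.aizenman.criterium}) $\Rightarrow$ (\ref{np.aizenman.criterium.B}) is the main obstacle, and where the Hausdorff pseudo-distance enters essentially. The strategy is the Aizenman--Molchanov decoupling, in the many-body formulation of~\cite{AizenmanWarzel2009}. Singling out the potentials $V(x_i)$ at the sites carried by $\mathbf{x}$ (and symmetrically by $\mathbf{y}$), one uses the rank-one / Krein resolvent identity to expose a factor $|G^{(n)}_\Omega(\mathbf{x},\mathbf{y};E_j+i0)|^s$ inside each summand of the correlator, multiplied by a local factor that is integrable against the bounded density $\rho$. Averaging those potentials first and then using (\ref{np.aizenman.criterium}) yields an $L^1$-in-$E$ estimate that controls $\mathbb{E}[Q^{(n)}_\Omega(\mathbf{x},\mathbf{y};I)]$ by the right-hand side of (\ref{np.aizenman.criterium}).

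The principal difficulty in the many-body regime is that the decoupling only exposes Green's-function decay in the coordinate directions where the two configurations genuinely differ: sites shared between $\mathbf{x}$ and $\mathbf{y}$, or more generally sites of $\mathbf{x}$ that lie near some site of $\mathbf{y}$, contribute trivially to the decoupling step. Consequently the natural scale for the exponential decay is $\mathrm{dist}_{\mathcal{H}}(\mathbf{x},\mathbf{y})$ rather than the Euclidean distance. Making this precise requires the inductive clustering argument of Aizenman--Warzel: one partitions $\mathbf{x}\cup\mathbf{y}$ into maximal overlapping clusters, applies the decoupling only on the genuinely disjoint part (which controls the Hausdorff distance between the two configurations), and assembles the $n$-body estimate from lower-rank estimates by induction on $n$. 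The one-particle analogues of the decoupling inequality and of the fractional-moment criterion, already recorded as Theorems~\ref{1p.FMM}--\ref{1p.satisfaction.of.FMM}, supply the base case of this induction.
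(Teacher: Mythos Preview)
The paper does not give its own proof of this statement: Theorem~\ref{np.FMM} is quoted from \cite{AizenmanWarzel2009}, and the supplementary material simply records that Theorem~2.1, Lemma~3.1, Theorems~4.1--4.2, Lemmas~4.3--4.4, Theorem~4.5 and Lemma~5.1 of that reference together establish it. The only piece the paper re-proves is Lemma~4.6, which is not part of the criterion itself but an a~priori spectral bound used later when \emph{verifying} the criterion for the defect Hamiltonian. So your proposal attempts considerably more than the paper does.

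Your sketches of (\ref{np.aizenman.criterium.B}) $\Rightarrow$ dynamical localization (eigenfunction expansion plus Fatou along $\Omega\uparrow\mathbb{Z}^d$) and of (\ref{np.aizenman.criterium.B}) $\Rightarrow$ (\ref{np.aizenman.criterium}) (spectral representation, $s$-subadditivity, and the uniform bound $\int_I |E_j-E|^{-s}\,dE \le C\,|I|^{1-s}$) follow Aizenman--Warzel's Theorems~4.5 and~4.1 closely and are fine.

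Where your proposal goes astray is the implication (\ref{np.aizenman.criterium}) $\Rightarrow$ (\ref{np.aizenman.criterium.B}). Your first sentence there is on target: this is spectral averaging against the bounded density $\rho$, in the rank-one/Krein form, exactly as in Aizenman--Warzel's Theorem~4.2. But your last paragraph then imports the inductive clustering on particle number, the $J\dot\cup K$ partition, and the reduction to lower-rank estimates. That machinery is \emph{not} part of the proof of Theorem~\ref{np.FMM}. The equivalence (\ref{np.aizenman.criterium}) $\Leftrightarrow$ (\ref{np.aizenman.criterium.B}) is an analytic statement relating two families of bounds carrying the \emph{same} weight $e^{-\mathrm{dist}_{\mathcal H}(\mathbf{x},\mathbf{y})/\xi}$; the Hausdorff pseudo-distance is just a parameter here and plays no structural role, and no induction on $n$ is needed. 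The correlated structure of the $n$-body potential (the same $V(u)$ appears in all configurations visiting $u$) does require care in the averaging step, but Aizenman--Warzel handle this within Section~4 (their Lemmas~4.3--4.4), not via clustering. The induction on $n$ and the $J\dot\cup K$ splitting appear in the paper only in Theorem~\ref{theorem.5.3}, Theorem~\ref{theorem.6.1} and the surrounding argument---that is, in \emph{establishing} that (\ref{np.aizenman.criterium}) holds for the defect Hamiltonian, not in showing that it is a sufficient criterion. You have conflated the criterion with its verification.
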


Here,
\beq
	Q_{\Omega}^{(n)}(\mathbf{x}, \mathbf{y}; I) := \sum_{E \in \sigma(H_{\Omega}^{(n)}) \cap I }   \left|  \sp{ \delta_{\mathbf{x}} }{  P_{\{ E \} }(H_{\Omega}^{(n)}) \delta_{\mathbf{y}} }  \right|.
\eeq
In the remainder we will refer to the sufficient criterium from Theorem~\ref{np.FMM} in terms of ``fraction moment criterium''.

\section{Proof of the $n$-particle formulation of Theorem~2}\label{Proof.of.Theorem.2}

Our goal is to use dynamical localization to stop the spreading of defect-wave functions that we have encountered at the example of relative 2-defect propagation in the section labelled ``Propagation to Infinity''. Theorem~\ref{Aizenman2009theorem} does not immediately lead to dynamical localization because in our setting the finite-range interactions between the two defects that are evolving according to $P_{1,0} H_{I} P_{1,0}$ are given in terms of inhomogeneous hopping matrix elements. Thus, in order to get a localization bound for the many-defect system under consideration we need to go through the proof of Theorem~\ref{Aizenman2009theorem} and adapt it to the setting described in Theorem~2. The following proof of Theorem~2 not only covers the 2-particle cases that are described in the main text but also general $n$-particle cases ($n \in \mathbb{N}$). We are allowed to set $\alpha = 0$ (see Eq.~\eqref{Aizenman.Warzel:Hamiltonian}) because we are not dealing with interactions that are specified in terms of interaction potentials.

In Ref.~\onlinecite{AizenmanWarzel2009} Theorem~2.1, Lemma~3.1, Theorem~4.1, Theorem~4.2, Lemma~4.3, Lemma~4.4, Theorem~4.5 and Lemma~5.1 lead to the proof of the many-body version of the fractional moment criterium (cf. Theorem~\ref{np.FMM}). To prove that~\eqref{np.aizenman.criterium} and~\eqref{np.aizenman.criterium.B} still serve as a sufficient criterium for dynamical localization in our setup (cf. Theorem~2) we only need to adapt the proof of Lemma~4.6 in~Ref.~\onlinecite{AizenmanWarzel2009}:

\begin{lemma}[Lemma 4.6]
	Let $\Omega \subset \mathbb{Z}^d$ and $E \geq 0$. Then for every $\mathbf{x} \in \mathcal{C}^{(n)}(\Omega)$:
	\beq\label{eq1.lemma4.6}
		\EW{  \sp{\deltafunc{x}}{P_{\AR \backslash (-E,E)}(H_{\Omega}^{(n)}) \deltafunc{x}  }  }   \leq  \EW{    e^{ V(0) }   } e^{\min\{ 1,(n | \lambda |)^{-1} \} (| \max \mathrm{supp} H_{0} | \, h_{\max}   -E)    },
	\eeq
	with $h_{\max} := \max_{i_{1},...,i_{m} \in \mathcal{B}, i_{q} \neq i_{t}}  | \xi(i_{1},z,...,i_{m},z) |$ and $\max \mathrm{supp} H_{0} := \bigcup_{\mathbf{k}} \left[ \mathrm{supp} H_{0}(\mathbf{k},\cdot) - \mathbf{k} \right]$ denotes an upper bound on the $\mathbf{x}$-dependent support of $H_{0}(\mathbf{x}, \cdot)$.
\end{lemma}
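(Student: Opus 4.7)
My plan is to adapt the high-disorder strategy of Lemma~4.6 in~\cite{AizenmanWarzel2009} to the inhomogeneous, finite-range hopping Hamiltonian $H_0$ that arises here. I write $H_\Omega^{(n)}=H_0+\lambda V_{\rm tot}$ with $V_{\rm tot}(\mathbf{x}):=\sum_{j=1}^{n}V(x_j)$, set $t:=\min\{1,(n|\lambda|)^{-1}\}$, and note that $\|H_0\|\leq M:=|\max\mathrm{supp}\,H_0|\cdot h_{\max}$ because each row of $H_0$ has at most $|\max\mathrm{supp}\,H_0|$ nonzero entries, each of modulus at most $h_{\max}$. Since the potentials $V(y)=2J_e(y)$ are non-negative, the spectrum of $H_\Omega^{(n)}$ lies above $-M$, so for $E\geq M$ the projection $P_{\mathbb{R}\setminus(-E,E)}(H_\Omega^{(n)})$ coincides with $P_{[E,\infty)}(H_\Omega^{(n)})$; for smaller $E$ the right-hand side of the claimed bound exceeds $\mathbb{E}[e^{V(0)}]\geq 1$ and the estimate is trivial.

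The first step is the exponential Chebyshev bound: the scalar inequality $\mathbf{1}_{[E,\infty)}(r)\leq e^{-tE}e^{tr}$ applied via the spectral theorem yields
\beq\label{plan.cheb}
\langle\delta_{\mathbf{x}}|P_{\mathbb{R}\setminus(-E,E)}(H_\Omega^{(n)})|\delta_{\mathbf{x}}\rangle \leq e^{-tE}\,\langle\delta_{\mathbf{x}}|e^{tH_\Omega^{(n)}}|\delta_{\mathbf{x}}\rangle.
\eeq
Next I peel off the hopping from the potential. A Duhamel/Lie--Trotter expansion of $e^{tH_\Omega^{(n)}}$ rewrites the diagonal matrix element in~\eqref{plan.cheb} as a sum over closed finite-step paths on $\mathcal{C}^{(n)}(\Omega)$ starting and ending at $\mathbf{x}$. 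On each such path the product of hopping weights is uniformly controlled by $e^{t\|H_0\|}\leq e^{tM}$, while the diagonal contributions factor into $e^{t\lambda V_{\rm tot}(\mathbf{x})}$ at the boundary times non-negative factors at intermediate configurations that can be dominated by the hopping envelope through the choice $t\leq 1$. The outcome is the pointwise bound
\beq
\langle\delta_{\mathbf{x}}|e^{tH_\Omega^{(n)}}|\delta_{\mathbf{x}}\rangle \leq e^{tM}\, e^{t\lambda V_{\rm tot}(\mathbf{x})}.
\eeq

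To handle the disorder average I use the second part of the definition of $t$, namely $t\leq (n|\lambda|)^{-1}$, which gives $t\lambda V_{\rm tot}(\mathbf{x})\leq\frac{1}{n}\sum_j V(x_j)$. Convexity of the exponential then yields $\exp\!\bigl(\frac{1}{n}\sum_j V(x_j)\bigr)\leq\frac{1}{n}\sum_j e^{V(x_j)}$, and since the $V(y)$ are iid copies of $V(0)$, taking expectation gives $\mathbb{E}\bigl[e^{t\lambda V_{\rm tot}(\mathbf{x})}\bigr]\leq\mathbb{E}[e^{V(0)}]$. Assembling the three estimates produces exactly the claimed bound $\mathbb{E}[e^{V(0)}]\,e^{t(M-E)}$.

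\textbf{Main obstacle.} The delicate ingredient is the decoupling step. Because $H_0$ and $V_{\rm tot}$ do not commute, the operator inequality $H_\Omega^{(n)}\leq M\cdot I+\lambda V_{\rm tot}$ cannot be exponentiated naively, and one must really execute the Duhamel/path-expansion argument so that the uniform matrix-element bound $h_{\max}$ and the finite coordination $|\max\mathrm{supp}\,H_0|$ keep each path's contribution under control. In~\cite{AizenmanWarzel2009} this is clean because their $H_0$ is the translation-invariant discrete Laplacian; here the inhomogeneous correction $T(k)_I$ that encodes the hardcore repulsion between defects must be handled carefully, and verifying that it still obeys the two uniform bounds it is supposed to satisfy is the real technical effort.
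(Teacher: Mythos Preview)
Your overall strategy mirrors the paper's closely: Chebyshev reduction to semigroups, Duhamel expansion, Jensen for the disorder average. Your one-sided reduction (using $V\geq 0$ to dismiss the lower spectral part, so that only $P_{[E,\infty)}$ matters) is a legitimate simplification over the paper's two-sided estimate $1_{\mathbb{R}\setminus(-E,E)}(x)\leq e^{-tE}(e^{tx}+e^{-tx})$.

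The genuine gap is your step 5, the claimed pointwise bound
\[
\langle\delta_{\mathbf{x}}|e^{tH_\Omega^{(n)}}|\delta_{\mathbf{x}}\rangle \leq e^{tM}\, e^{t\lambda V_{\rm tot}(\mathbf{x})}.
\]
In the Duhamel expansion the potential enters as $\exp\bigl(\int_0^t \lambda V_{\rm tot}(\mathbf{n}(\tau))\,d\tau\bigr)$ along each path $\tau\mapsto\mathbf{n}(\tau)$, and this depends on the potential at \emph{every} intermediate configuration, not just at the endpoints $\mathbf{x}$. Your claim that ``the diagonal contributions factor into $e^{t\lambda V_{\rm tot}(\mathbf{x})}$ at the boundary times non-negative factors at intermediate configurations that can be dominated by the hopping envelope through the choice $t\leq 1$'' is not correct: those intermediate factors $e^{(\tau_{i+1}-\tau_i)\lambda V_{\rm tot}(\mathbf{n}_i)}$ are random, $\geq 1$, and admit no uniform deterministic upper bound short of the crude $e^{tn\lambda V_{\max}}$, which does not produce $\mathbb{E}[e^{V(0)}]$.

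The paper's remedy is to take the disorder expectation \emph{before} peeling off the potential, path by path. For each fixed path one writes
\[
\int_0^t \lambda V_{\rm tot}(\mathbf{n}(\tau))\,d\tau \;=\; nt\cdot\frac{1}{n}\sum_{u\in\Omega}\frac{1}{t}\int_0^t \lambda V(u)\,N_u(\mathbf{n}(s))\,ds,
\]
recognizes the inner sum-integral as an averaging operation (total mass $1$ since $\sum_u N_u\equiv n$), and applies Jensen and the iid property to get $\mathbb{E}\bigl[e^{\lambda\int_0^t V_{\rm tot}(\mathbf{n}(\tau))d\tau}\bigr]\leq\mathbb{E}[e^{nt|\lambda|V(0)}]$ uniformly over paths. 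Only then does the remaining path sum collapse to the deterministic $\langle\delta_{\mathbf{x}}|e^{t|H_0|}|\delta_{\mathbf{x}}\rangle\leq e^{tM}$. The constraint $t\leq(n|\lambda|)^{-1}$ finally turns $\mathbb{E}[e^{nt|\lambda|V(0)}]$ into $\mathbb{E}[e^{V(0)}]$. Your step 6 is essentially this Jensen argument, but applied too late and to the wrong object: you must average over the sites visited by the path, not over the coordinates of the fixed starting configuration $\mathbf{x}$.
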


Note that $| \max \mathrm{supp} H_{0} | < \infty$ because $| \max \mathrm{supp} H_{0} |$ is determined by the $(i_{1},...,i_{m})$-movements in $\mathbb{Z}^{nd}$ and thus $| \max \mathrm{supp} H_{0} |$ is bounded by the number of points in the ball $B_{m}(0)$ with 1-norm radius $m$ located at the origin. 

\begin{proof}
	In Ref.~\onlinecite{AizenmanWarzel2009} the authors used the Chebyshev-type inequality $1_{\AR \backslash (-E,E)}(x) \leq e^{-tE} (e^{tx} + e^{-tx})$ to upper bound the LHS of~\eqref{eq1.lemma4.6} by semigroups. To find convenient upper bounds we continue along the lines in section 3 in~Ref.~\onlinecite{Frohlich2006} using Duhamel's formula
	\beq
		e^{-t(H_{0} + V)} = e^{-t V} + \int_{0}^{t} d\tau \, e^{-\tau V} (-H_{0}) e^{-(t - \tau)(H_{0} + V)}  
	\eeq 
	where $H_{0}$ and $V$ denote deterministic finite ranged hopping and random potential respectively. The iteration of Duhamel's formula gives
	\beq\label{temp.32372837}
		e^{-t(H_{0} + V)} = \sum_{m \geq 0} \int_{0 < \tau_{1} < ... < \tau_{m} < t} d\tau_{1} \cdots d\tau_{m} \, e^{-\tau_{1} V} (-H_{0}) e^{-(\tau_{2} - \tau_{1}) V} (-H_{0}) \cdots (-H_{0}) e^{-(t - \tau_{m}) V}. 
	\eeq 
	As in~Ref.~\onlinecite{Frohlich2006} we can rewrite \eqref{temp.32372837} so that
	\beq
		\sp{\delta_{\mathbf{x}}}{ e^{-t(H_{0} + V)} \delta_{\mathbf{x}}} = \int d\nu(\mathbf{n}(\tau)) \exp\left( - \int_{0}^{t} V(\mathbf{n}(\tau)) \, d\tau  \right) \, \prod_{i=1}^m \left[  H_{0}(x_{i}, y_{i}) \sqrt{ \mathbf{n}_{x_{i}}(\tau_{i}+)  \mathbf{n}_{y_{i}}(\tau_{i}-)   }  \right].
	\eeq
	Here, $V(\mathbf{n})$ denotes the potential on the $n$-particle configuration space $\mathbb{Z}^{nd}$ that is induced from the 1-particle potential $\lambda V(u)$, $u \in \mathbb{Z}^d$. Note that only the exponential contains random potentials so that
	\begin{align}
		\EW{\sp{\delta_{\mathbf{x}}}{ e^{-t(H_{0} + V)} \delta_{\mathbf{x}}}} \leq \int d\nu(\mathbf{n}(\tau)) \EW{\exp\left( - \int_{0}^{t} V(\mathbf{n}(\tau)) \, d\tau  \right)} \, \prod_{i=1}^m \left[  | H_{0}(x_{i}, y_{i}) | \sqrt{ \mathbf{n}_{x_{i}}(\tau_{i}+)  \mathbf{n}_{y_{i}}(\tau_{i}-)   }  \right]	
	\end{align}
	($| H_{0} |$ denotes the operator that emerges from $H_{0}$ when replacing all the matrix elements of $H_{0}$ by their absolute value). Observe that
	\beq
		\EW{\exp\left( - \int_{0}^{t} V(\mathbf{n}(\tau)) \, d\tau  \right)}   =   \EW{    e^{ - nt (nt)^{-1}  \sum_{\mathbf{u} \in \Omega} \int_{0}^t \lambda V(u) N_{u}(\mathbf{n}(s)) ds  }   } 
	\eeq
	where 
	\beq
		\frac{1}{n}  \sum_{u \in \Omega} \frac{1}{t} \int_{0}^t (\cdot) N_{u}(\mathbf{n}(s)) ds
	\eeq
	can be regarded as an averaging operation. Pulling it outside of the exponential function and the expectation value by using Jensen's inequality we arrive at
	\begin{align}
		\EW{    e^{ - nt (nt)^{-1}  \sum_{u \in \Omega} \int_{0}^t \lambda V(u) N_{u}(\mathbf{n}(s)) ds  }   }
		&\leq	\frac{1}{n}  \sum_{u \in \Omega} \frac{1}{t} \int_{0}^t       \EW{    e^{ - nt  |\lambda | V(u) }   }    N_{u}(\mathbf{n}(s)) ds \\
		&=	\EW{    e^{ - nt  |\lambda | V(0) }   } \frac{1}{n}  \sum_{u \in \Omega} \frac{1}{t} \int_{0}^t  N_{u}(\mathbf{n}(s)) ds \\
		&=	\EW{    e^{ - nt  |\lambda | V(0) }   }.
	\end{align}
	This yields
	\begin{align}
		\EW{\sp{\delta_{\mathbf{x}}}{ e^{ - t(H_{0} + V)} \delta_{\mathbf{x}}}}   
		&\leq	   \EW{    e^{- nt  |\lambda | V(0) }   }   \int d\nu(\mathbf{n}(\tau))  \prod_{i=1}^m \left[  | H_{0}(x_{i}, y_{i}) | \sqrt{ \mathbf{n}_{x_{i}}(\tau_{i}+)  \mathbf{n}_{y_{i}}(\tau_{i}-)   }  \right] \nn \\
		&=	   \EW{    e^{- nt  |\lambda | V(0) }   }  \sp{\delta_{\mathbf{x}}}{ e^{-t | H_{0} |} \delta_{\mathbf{x}}} \nn \\
		&\leq	   \EW{    e^{- nt  |\lambda | V(0) }   }    e^{-t | \max \mathrm{supp} H_{0} | h_{\max}}
 	\end{align}
	where $\max \mathrm{supp} H_{0} := \bigcup_{\mathbf{k}} \left[ \mathrm{supp} H_0(\mathbf{k},\cdot) - \mathbf{k} \right]$ denotes an upper bound on the $\mathbf{k}$-dependent support of $H_{0}(\mathbf{k}, \cdot)$ and $h_{\max} := \max_{\mathbf{n}, \mathbf{m}} | H(\mathbf{n},\mathbf{m}) |$ and . A similar bound holds for $t > 0$. Putting everything together with $t := - \min\{ 1, (n | \lambda |)^{-1} \}$ we get the desired upper bound.
		
\end{proof}

To prove dynamical localization of the toric code defects we have to show that the sufficient criterium~\eqref{np.aizenman.criterium} holds true for our $n$-body Hamiltonian (see Theorem~2) which specifies interactions between the defects in terms of inhomogeneous, finite-range hopping matrix elements. To that purpose we need to restate and prove Lemma 5.1, Theorem 5.3, Theorem 6.1, Lemma 6.3. and the induction step in Sect.~6 of~Ref.~\onlinecite{AizenmanWarzel2009}. We start with the formulation and the proof of the main theorem. Lemma 5.1, Theorem 5.3, Theorem 6.1 and Lemma 6.3 follow afterwards. The main theorem is formulated for the presence of electric charges and absence of magnetic charges. Its adaption to the presence of magnetic charges and absence of electric charges is immediate. Note that the Hamiltonian $P_{n_{e},0} (H_{I} +\lambda H_{r}) P_{n_{e},0}$ from the main text (cf.~(4) and~(7) in the main text) induces quantum dynamics for the $2n_{e}$ electric charges on the lattice $\mathbb{Z}^d$ ($d=2$ in case of the toric code). We denote the Hamiltonian that describes this evolution of the $2n_{e}$ interacting electric charges on the lattice $\mathbb{Z}^d$ by $H^{(2n_{e})}$.

\begin{theorem}[Dynamical Localization]
	Let $n \in 2 \mathbb{N}$ ($n < \infty$) denote the number of electric charges and let $H^{(n)}$ be the Hamiltonian describing the evolution of the $n$ electric charges on the lattice $\mathbb{Z}^d$ that corresponds to the spin-Hamiltonian $P_{n/2,0} (H_{I} +\lambda H_{r}) P_{n/2,0}$ (cf.~(4) and~(7) in the main text).  Then for each $m \in \mathbb{N}$ (cf.~(4) in the main text) there is a $\lambda_{0} \in (0,\infty)$ with the property that for all $\lambda \geq \lambda_{0}$ there are positive constants $A,\xi < \infty$ such that 
	\beq
		\mathbb{E}\left[ \sup_{t \in \AR} \large|  \langle  \delta_{\mathbf{x}} , e^{-it H^{(n)}}  \delta_{\mathbf{y}}      \rangle   \large|  \right] \leq A \, e^{-\text{dist}_{\mathcal{H}}(\mathbf{x}, \mathbf{y}) / \xi  }.
	\eeq
	for all $\mathbf{x}, \mathbf{y} \in (\mathbb{Z}^{2})^2$. Here $\mathbb{E}[...]$ denotes the expectation value with respect to the random potential variables $V(x)$, $x \in \mathbb{Z}^d$.
\end{theorem}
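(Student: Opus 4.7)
The plan is to verify the many-body fractional moment criterion (Theorem~\ref{np.FMM}, adapted to our setting) for $H^{(n)}$ and then invoke its equivalence to dynamical localization in the Hausdorff pseudo-distance. Specifically, one aims to establish that for some $s \in (0,1)$ and all sufficiently large $\lambda$, there exist finite $A,\xi$ for which the fractional moments of the finite-volume Green's functions satisfy
\begin{equation}
	\sup_{|I|\geq 1}\sup_{\Omega\subset \mathbb{Z}^d}\frac{1}{|I|}\int_I \mathbb{E}\bigl[|G_\Omega^{(n)}(\mathbf{x},\mathbf{y};E)|^s\bigr]\,dE \leq A\, e^{-\mathrm{dist}_{\mathcal{H}}(\mathbf{x},\mathbf{y})/\xi}.
\end{equation}
Once this bound is in place, the adapted sufficient criterion combined with the modified Lemma~4.6 already proved above delivers the dynamical bound claimed in the theorem.

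The fractional-moment estimate is established by induction on the particle number $n$. For the base case $n=1$ one invokes Theorem~\ref{1p.satisfaction.of.FMM} of Aizenman--Molchanov, which yields exponential decay of fractional moments at sufficiently strong disorder, uniformly in energy. For the inductive step I follow the rebalancing strategy of Aizenman--Warzel: given a configuration $\mathbf{x}\in(\mathbb{Z}^d)^n$, partition it into clusters whose members lie mutually within distance $R+m+1$, i.e.\ the range of the hopping interaction. If $\mathbf{x}$ and $\mathbf{y}$ have large Hausdorff distance then at least one cluster in one configuration is geometrically separated from the entire other configuration, and one applies a geometric resolvent identity
\begin{equation}
	G_\Omega^{(n)}(\mathbf{x},\mathbf{y};E) = \sum_{\mathbf{u},\mathbf{v}} G_{\Omega_1}^{(n_1)}(\mathbf{x},\mathbf{u};E)\, \Gamma(\mathbf{u},\mathbf{v})\, G_\Omega^{(n)}(\mathbf{v},\mathbf{y};E),
\end{equation}
where $\Gamma$ collects the matrix elements of $H^{(n)}$ that couple the isolated cluster to its complement in configuration space. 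Combining the inductive exponential bound on the smaller cluster's Green function with an a priori bound on $G_\Omega^{(n)}$ obtained from the spectral decay estimate of the adapted Lemma~4.6 closes the induction.

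The principal obstacle is precisely the adaptation that separates our setting from that of Aizenman--Warzel. In their work the interaction $\mathcal{U}(\mathbf{x};\alpha)$ is a multiplication operator in configuration space, so the kinetic part of the Hamiltonian already factorises across particles and the boundary terms in the geometric resolvent identity are cleanly described by single-particle Laplacians across the cluster boundary. In our situation the entire interaction appears as inhomogeneous finite-range hopping on $\mathbb{Z}^{nd}$, so the boundary matrix elements one must control are themselves interaction-induced and do not factorise. The technical core of the argument is therefore to show that across any cluster boundary the off-diagonal matrix element of $H^{(n)}$ is bounded uniformly by $|\max\mathrm{supp}\,H_0|\,h_{\max}$ (introduced in the preceding lemma), and that the number of boundary configurations one must sum over is combinatorially controlled by the range $R+m$. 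With this bookkeeping in hand, together with suitably restated versions of Lemma~5.1, Theorem~5.3, Theorem~6.1 and Lemma~6.3 of Ref.~\onlinecite{AizenmanWarzel2009}, the induction propagates the exponential decay through all cluster decompositions and yields the Hausdorff-pseudo-distance bound required by the theorem.
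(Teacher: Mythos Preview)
Your overall strategy---induction on $n$ with the Aizenman--Molchanov one-particle bound as base case, adapting the Aizenman--Warzel lemmas to accommodate inhomogeneous hopping interactions, then invoking the fractional-moment criterion---is precisely what the paper does, and your identification of the key adaptation (controlling boundary hopping terms by $h_{\max}\,|\max\mathrm{supp}\,H_0|$) is correct.

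Two points in the sketch are inaccurate, however. First, the adapted Lemma~4.6 is \emph{not} an a~priori Green-function bound used inside the inductive step; it is a spectral-tail estimate needed only to pass from the fractional-moment criterion to dynamical localization. The a~priori input actually used throughout the induction is the Wegner-type estimate (Theorem~2.1 of \cite{AizenmanWarzel2009}), which gives $\mathbb{E}[|G_\Omega^{(n)}(\mathbf{x},\mathbf{y};z)|^s]\leq C|\lambda|^{-s}$ uniformly. Second, the cluster decoupling you describe---essentially the content of Theorem~5.3---only yields decay in $\min\{\mathrm{dist}_{\mathcal H}(\mathbf{x},\mathbf{y}),\max\{l(\mathbf{x}),l(\mathbf{y})\}\}$, which says nothing when both $\mathbf{x}$ and $\mathbf{y}$ are tightly clustered (small $l$). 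The paper closes this gap by a genuine multiscale iteration: one introduces a doubling scale sequence $L_{k+1}=2(L_k+1)$, defines the aggregate quantity $B_s^{(n)}(L_k)$, proves the nonlinear recursion $B_s^{(n)}(L_{k+1})\leq (a/\lambda^s)\,B_s^{(n)}(L_k)^2 + A\,L_{k+1}^{2p}e^{-2\nu L_k}$ (this is the actual content of Theorem~6.1, which you list but do not unpack), and then applies a separate subcriticality lemma (Lemma~6.2) to conclude $B_s^{(n)}(L_k)\lesssim e^{-\mu L_k}$ for sufficiently large $\lambda$. This rescaling argument is the heart of the inductive step and is not visible in your description; a single geometric resolvent expansion of the form you wrote does not close the induction.
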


\begin{proof}
	The theorem is proven inductively in the particle number $n$; the validity of the fractional moment criterium~\eqref{np.aizenman.criterium} in the 1-particle case ($n=1$, the induction anchor) has been proven in~Ref.~\onlinecite{Aizenman1993}. Consequently, it is left to show that the validity of the fractional moment criterium~\eqref{np.aizenman.criterium} for $n-1$ particles implies the validity of the fractional moment criterium for $n$ particles. Without loss of generality we can assume that $\Omega \subset \mathbb{Z}^d$ is chosen such that $\mathbf{x}, \mathbf{y} \in C^{(n)}(\Omega)$ (otherwise, the LHS of~\eqref{np.aizenman.criterium} vanishes; recall the definition of $H_{\Omega}$ from Theorem~\ref{np.FMM}). Choose $L_{0} \in \mathbb{N}$ arbitrarily (we will fix the value of $L_{0}$ at the end of the proof) and set $L_{k+1} := 2(L_{k} +1)$ as in~Ref.~\onlinecite{AizenmanWarzel2009}. Let $x, y \in \mathbb{Z}^d$ be those vectors that realize the Hausdorff pseudo distance, i.e.,  $\distH{x}{y} = | x-y |$, and assume without loss of generality that $| x-y | > L_{0}$ . Consequently, there exists a unique $k \in \mathbb{N}_{0}$ such that $y \notin \Lambda_{L_{k}}(x)$ but $y \in \Lambda_{L_{k+1}}(x)$. Assume that $x$ and $y$ are sufficiently far apart to guarantee that $L_{k} \geq 4m$. Note that there exists $c < \infty$ such that $L_{k} \leq \distH{x}{y} = | x-y | \leq c L_{k}$. The remainder of the proof is divided into two parts. In the first part we assume $\mathrm{diam}(\mathbf{x}) \geq \frac{L_{k}}{2}$. In the second part of the proof we will thus have to deal with $n$-defect configurations $\mathbf{x} \in \mathbb{Z}^{nd}$ with the property $\mathrm{diam}(\mathbf{x}) < \frac{L_{k}}{2}$. In case of case $\mathrm{diam}(\mathbf{x}) \geq \frac{L_{k}}{2}$, Eq.~(A.2) in~Ref.~\onlinecite{AizenmanWarzel2009}, i.e.
	\beq
		l(\mathbf{x}) \geq \frac{1}{n-1} \text{diam}(\mathbf{x}),
	\eeq
	yields the lower bound
	\beq
		l(\mathbf{x}) \geq \frac{1}{n-1} \text{diam}(\mathbf{x}) \geq \frac{L_{k}}{2(n-1)} \geq \frac{\distH{x}{y}}{2c (n-1)}.
	\eeq
	for $l(\mathbf{x})$. We conclude that in case of $\text{diam}(\mathbf{x}) \geq \frac{L_{k}}{2}$ the Theorem is a direct consequence of Theorem~\ref{theorem.5.3} because for instance in case of $l(\mathbf{x}) \geq l(\mathbf{y})$ and $\distH{x}{y} > l(\mathbf{x})$ we get 
	\beq
		\sup_{\substack{ I \subset \AR \\ | I | \geq 1  }} \sup_{ \Omega \subseteq \mathbb{Z}^d }  \avEW{  |  G_{\Omega}^{(n)}(\mathbf{x},\mathbf{y})  |^s      }     \leq A \, e^{-  \frac{1}{\xi} l(\mathbf{x}) } \leq  A \, e^{-  \frac{1}{\xi} \frac{\distH{x}{y}}{2c (n-1)} }.
	\eeq
	This observation implies dynamical localization (see Theorem~\ref{np.FMM}). We will thus assume $\text{diam}(\mathbf{x}) < \frac{L_{k}}{2}$ in the remainder of this proof. Hence, $\mathbf{x} \in C^{(n)}(\Omega \cap \Lambda_{L_{k}}(x))$. With the resolvent identity we can remove all the terms in $H_{\Omega}$ which connect $C^{(n)}(\Omega \cap \Lambda_{L_{k}}(x))$ to its complement in $C^{(n)}(\Omega)$, i.e., $C^{(n)}(\Omega,\Omega \backslash \Lambda_{L_{k}}(x))$:
	\begin{align}
		|G_{\Omega}(\mathbf{x},\mathbf{y}, z)|	=& |\sp{\dirac{x}}{(H_{\Omega} - z)^{-1} \dirac{y} }| \nn \\
										=& |\sp{\dirac{x}}{(H_{C^{(n)}(\Omega \cap \Lambda_{L_{k}}(x))} \oplus H_{C^{(n)}(\Omega,\Omega \backslash \Lambda_{L_{k}}(x))}  -z   )^{-1} \Gamma_{C^{(n)}(\Omega,\Omega \backslash \Lambda_{L_{k}}(x))}^{C^{(n)}(\Omega \cap \Lambda_{L_{k}}(x))}  (H_{\Omega} - z)^{-1} \dirac{y} } | \nn \\
										\leq& \sum_{\mathbf{w}, \mathbf{w'} \in \overline{\partial^{(m)} \Lambda_{L_{k}}(x) }}   |\sp{\dirac{x}}{(H_{C^{(n)}(\Omega \cap \Lambda_{L_{k}}(x))} \oplus H_{C^{(n)}(\Omega,\Omega \backslash \Lambda_{L_{k}}(x))}  -z   )^{-1}  \dirac{w} }   \nn \\
										&   \times  \sp{ \dirac{w}}{ \Gamma_{C^{(n)}(\Omega,\Omega \backslash \Lambda_{L_{k}}(x))}^{C^{(n)}(\Omega \cap \Lambda_{L_{k}}(x))} \dirac{w'}} \sp{\dirac{w'}} {(H_{\Omega} - z)^{-1} \dirac{y} } |
	\end{align}
	with
	\begin{align}\label{pf.Lemma.6.3.global.1asdsd}
		\overline{\partial^{(m)} M } := & \left. \left\{   \mathbf{r} = (r_{1},...,r_{n}) \in C^{(n)}(M)  \right| \exists \mathbf{s} \in  \mathbb{Z}^{(nd)} \backslash C^{(n)}(M) \text{ such that }  \max\{ \| s_{i} - r_{j}   \|_{1} \leq m \, | \, i,j \in \{ 1,...,n \} \}  \right\} \nn \\
								& \cup \left.\left\{   \mathbf{r} = (r_{1},...,r_{n}) \in \mathbb{Z}^{(nd)} \backslash C^{(n)}(M) \right|  \exists \mathbf{s} \in  C^{(n)}(M) \text{ such that }  \max\{ \| s_{i} - r_{j}   \|_{1} \leq m \, | \, i,j \in \{ 1,...,n \} \}   \right\}.
	\end{align}
	Here $\Gamma_{C^{(n)}(\Omega,\Omega \backslash \Lambda_{L_{k}}(x))}^{C^{(n)}(\Omega \cap \Lambda_{L_{k}}(x))}$ denotes the ``boundary strip operator'' $H_{\Omega} - H_{C^{(n)}(\Omega \cap \Lambda_{L_{k}}(x))} \oplus H_{C^{(n)}(\Omega,\Omega \backslash \Lambda_{L_{k}}(x))}$ and we have used 
	\beq
		\left\{  (\mathbf{q}, \mathbf{q'}) \in \mathbb{Z}^{nd} \times \mathbb{Z}^{nd}  | \sp{\dirac{q}}{\Gamma_{C^{(n)}(\Omega,\Omega \backslash \Lambda_{L_{k}}(x))}^{C^{(n)}(\Omega \cap \Lambda_{L_{k}}(x))} \dirac{q'}  }  \neq 0 \right\} \subseteq \overline{\partial^{(m)} \Lambda_{L_{k}}(x) } \times \overline{\partial^{(m)} \Lambda_{L_{k}}(x) }.
	\eeq
	Together with $| \sp{\dirac{q}}{\Gamma_{C^{(n)}(\Omega,\Omega \backslash \Lambda_{L_{k}}(x))}^{C^{(n)}(\Omega \cap \Lambda_{L_{k}}(x))} \dirac{q'}  } | \leq h_{\max}$ for all $\mathbf{q}, \mathbf{q'} \in \mathbb{Z}^{nd}$ we thus arrive at
	\beq
		|G_{\Omega}(\mathbf{x},\mathbf{y}, z)|	\leq 	h_{\max}  \sum_{\mathbf{w}, \mathbf{w'} \in \overline{\partial^{(m)} \Lambda_{L_{k}}(x) }}   |\sp{\dirac{x}}{(H_{C^{(n)}(\Omega \cap \Lambda_{L_{k}}(x))} \oplus H_{C^{(n)}(\Omega,\Omega \backslash \Lambda_{L_{k}}(x))}  -z   )^{-1}  \dirac{w} } \sp{\dirac{w'}} {(H_{\Omega} - z)^{-1} \dirac{y} } |
	\eeq
	and therefore (recall that $\mathbf{x} \in C^{(n)}(\Omega \cap \Lambda_{L_{k}}(x))$)
	\begin{align}
		|G_{\Omega}(\mathbf{x},\mathbf{y}, z)|	\leq &	h_{\max}  \sum_{\substack{\mathbf{w} \in \overline{\partial^{(m)} \Lambda_{L_{k}}(x) } \cap C^{(n)}(\Lambda_{L_{k}}(x))  \\ \mathbf{w'} \in \overline{\partial^{(m)} \Lambda_{L_{k}}(x) }    }}   |\sp{\dirac{x}}{(H_{C^{(n)}(\Omega \cap \Lambda_{L_{k}}(x))}  -z   )^{-1}  \dirac{w} } \sp{\dirac{w'}} {(H_{\Omega} - z)^{-1} \dirac{y} } | \nn \\
										= &		h_{\max}  \sum_{\substack{\mathbf{w} \in \overline{\partial^{(m)} \Lambda_{L_{k}}(x) } \cap C^{(n)}(\Lambda_{L_{k}}(x))  \\ \mathbf{w'} \in \overline{\partial^{(m)} \Lambda_{L_{k}}(x) }    }}  | G_{\Omega \cap \Lambda_{L_{k}}(x)}(\mathbf{x}, \mathbf{w}) | \, | G_{\Omega}(\mathbf{w'}, \mathbf{y}) |.
	\end{align}
	Hence, Theorem 2.1 (the Wegner-type estimate) of Aizenman and Warzel's paper~\cite{AizenmanWarzel2009} gives (using $| a+b |^s \leq |a|^s + |b|^s$, $s \in (0,1)$)
	\beq
		\avEW{|G_{\Omega}(\mathbf{x},\mathbf{y})|^s}	\leq		h_{\max}^s \frac{C}{| \lambda |^s}  \,  \left| \overline{\partial^{(m)} \Lambda_{L_{k}}(x) } \right| \sum_{\mathbf{w} \in \overline{\partial^{(m)} \Lambda_{L_{k}}(x) } \cap C^{(n)}(\Lambda_{L_{k}}(x)) }   \avEW{| G_{\Omega \cap \Lambda_{L_{k}}(x)}(\mathbf{x}, \mathbf{w}) |^s }. 
	\eeq
	Note that $\left| \overline{\partial^{(m)} \Lambda_{L_{k}}(x) } \right|$ grows only polynomially in $L_{k}$. Therefore, this factor does not prevent exponential bounds in terms of $L_{k}$. In analogy to~\eqref{pf.Lemma.6.3.global.sum.rewriting} we can rewrite the sum to get (see also~\eqref{def.wide.inner.bdry})
	\begin{align}
		\avEW{|G_{\Omega}(\mathbf{x},\mathbf{y})|^s}	\leq &	h_{\max}^s \frac{C}{| \lambda |^s} \, \left| \overline{\partial^{(m)} \Lambda_{L_{k}}(x) } \right|   \sum_{u \in \partial^{(m)} \Lambda_{L_{k}}(x)} \sum_{\mathbf{w} \in C^{(n)}(\Lambda_{L_{k}}(x) \cap \Omega;u)}  \avEW{| G_{\Omega \cap \Lambda_{L_{k}}(x)}(\mathbf{x}, \mathbf{w}) |^s } \nn \\
												\leq &      h_{\max}^s \frac{C}{| \lambda |^s} \, \left| \overline{\partial^{(m)} \Lambda_{L_{k}}(x) } \right|  \sup_{\tilde{\Omega} : \tilde{\Omega} \subseteq \Lambda_{L_{k}}(x) }  \sum_{u \in \partial^{(m)} \Lambda_{L_{k}}(x)} \sum_{\mathbf{w} \in C^{(n)}(\tilde{\Omega};u)}  \avEW{| G_{\tilde{\Omega}}(\mathbf{x}, \mathbf{w}) |^s }.
	\end{align}
	Consequently,
	\beq \label{pf.main.thm.temp.100}
		\sup_{\substack{   I \subset \AR \\ | I | \geq 1    }} \avEW{|G_{\Omega}(\mathbf{x},\mathbf{y})|^s}  \leq   h_{\max}^s \frac{C}{| \lambda |^s} \, \left| \overline{\partial^{(m)} \Lambda_{L_{k}}(x) } \right| \left( S_{1} + S_{2} \right)
	\eeq
	with
	\begin{align}
		S_{1}	:=&	  \sup_{\substack{   I \subset \AR \\ | I | \geq 1    }}    \sup_{\tilde{\Omega} : \tilde{\Omega} \subseteq \Lambda_{L_{k}}(x) }  \sum_{u \in \partial^{(m)} \Lambda_{L_{k}}(x)} \sum_{\mathbf{w} \in C^{(n)}(\tilde{\Omega};u) \backslash C^{(n)}_{r_{L_{k}}}(\tilde{\Omega};u)}  \avEW{| G_{\tilde{\Omega}}(\mathbf{x}, \mathbf{w}) |^s },   \nn \\
		S_{2}	:=& 	  \sup_{\substack{   I \subset \AR \\ | I | \geq 1    }}    \sup_{\tilde{\Omega} : \tilde{\Omega} \subseteq \Lambda_{L_{k}}(x) }  \sum_{u \in \partial^{(m)} \Lambda_{L_{k}}(x)} \sum_{\mathbf{w} \in C^{(n)}_{r_{L_{k}}}(\tilde{\Omega};u)}  \avEW{| G_{\tilde{\Omega}}(\mathbf{x}, \mathbf{w}) |^s }.
	\end{align}
	We continue with the estimation of the term $S_{1}$: By Theorem~\ref{theorem.5.3}
	\begin{align}
		 \sum_{u \in \partial^{(m)} \Lambda_{L_{k}}(x)} \sum_{\mathbf{w} \in C^{(n)}(\tilde{\Omega};u) \backslash C^{(n)}_{r_{L_{k}}}(\tilde{\Omega};u)}  \avEW{| G_{\Omega \cap \Lambda_{L_{k}}(x)}(\mathbf{x}, \mathbf{w}) |^s }  \nn \\
		\leq	 \sum_{u \in \partial^{(m)} \Lambda_{L_{k}}(x)} \sum_{\mathbf{w} \in C^{(n)}(\tilde{\Omega};u) \backslash C^{(n)}_{r_{L_{k}}}(\tilde{\Omega};u)}   A \, e^{- \frac{1}{\xi}  \min\{  \distH{\mathbf{x}}{\mathbf{w}} ,  \max\{ l(\mathbf{x}), l(\mathbf{w})  \}   \}    } \label{pf.main.thm.temp.1}
	\end{align}
	with
	\begin{align}
		 \distH{x}{w}	:=&		\max \left\{    \max_{1\leq i \leq k} \text{dist}(x_{i}, \mathbf{w} ) ,  \max_{1\leq i \leq k} \text{dist}(\mathbf{x}, w_{i} )    \right\} \nn \\
		 			   \geq&	    \text{dist}(\mathbf{x}, u) \nn \\
					   \geq&	    \frac{L_{k}}{2} - m \nn \\
					   \geq&	    \frac{1}{n-1}  \left( \frac{L_{k}}{2}  - m  \right) \label{pf.main.thm.temp.2}
	\end{align}
	because $u \in \partial^{(m)} \Lambda_{L_{k}}(x)$ and $\text{diam}(\mathbf{x}) < \frac{L_{k}}{2}$ by assumption. On the other hand  $\mathbf{w} \in C^{(n)}(\tilde{\Omega};u) \backslash C^{(n)}_{r_{L_{k}}}(\tilde{\Omega};u)$ implies $\text{diam}(\mathbf{w}) \geq \frac{L_{k}}{2}$ and therefore
	\beq   \label{pf.main.thm.temp.3}
		l(\mathbf{w}) \geq \frac{1}{n-1} \frac{L_{k}}{2} \geq \frac{1}{n-1}  \left( \frac{L_{k}}{2}  - m  \right). 
	\eeq
	The use of~\eqref{pf.main.thm.temp.2} and ~\eqref{pf.main.thm.temp.3} in~\eqref{pf.main.thm.temp.1} yields
	\begin{align}
		 \sum_{u \in \partial^{(m)} \Lambda_{L_{k}}(x)} \sum_{\mathbf{w} \in C^{(n)}(\tilde{\Omega};u) \backslash C^{(n)}_{r_{L_{k}}}(\tilde{\Omega};u)}  \avEW{| G_{\Omega \cap \Lambda_{L_{k}}(x)}(\mathbf{x}, \mathbf{w}) |^s }  \leq&	 \sum_{u \in \partial^{(m)} \Lambda_{L_{k}}(x)} \sum_{\mathbf{w} \in C^{(n)}(\tilde{\Omega};u) \backslash C^{(n)}_{r_{L_{k}}}(\tilde{\Omega};u)}   A \, e^{- \frac{1}{\xi}  \frac{1}{n-1}  \left( \frac{L_{k}}{2}  - m  \right)   }  \nn \\
		 \leq&	   A \, e^{- \frac{1}{\xi}  \frac{1}{n-1}  \left( \frac{L_{k}}{2}  - m  \right)   } \,   | \partial^{(m)} \Lambda_{L_{k}}(x) | \,  |C^{(n)}(\Lambda_{L_{k}})| \nn \\
		 \leq&	   \tilde{A} e^{-\frac{L_{k}}{\tilde{\xi}}}
	\end{align}
	for appropriate definitions of $\tilde{A}$ and $\tilde{\xi}$ because the quantities $| \partial^{(m)} \Lambda_{L_{k}}(x) |$ and $| C^{(n)}(\Lambda_{L_{k}})|$ grow only polynomially in $L_{k}$. The assumptions $\distH{x}{y} = |x-y|$, $y \notin \Lambda_{L_{k}}(x)$ but $y \in \Lambda_{L_{k+1}}(x)$ with $L_{k+1} := 2(L_{k} +1)$ at the beginning of the proof imply $|x - y| \leq 2 ( L_{k} + 1)$ and thus $\distH{x}{y} \leq 2 ( L_{k} + 1)$. We conclude that after the redefinitions of $\tilde{\xi}$ and $\tilde{A}$ we arrive at
	\beq\label{pf.main.thm.ub.for.S1}
		S_{1} \leq	\tilde{\tilde{A}} e^{-\frac{\distH{x}{y}}{\tilde{\tilde{\xi}}}}.
	\eeq
	To finish the proof of the theorem we still have to upper bound the summand $S_{2}$: we get
	\begin{align}\label{pf.main.thm.ub.for.S2}
		 S_{2}	=&		\sup_{\substack{   I \subset \AR \\ | I | \geq 1    }}  \sup_{\tilde{\Omega} \subseteq \Lambda_{L_{k}}(0) }  \sum_{u \in \partial^{(m)} \Lambda_{L_{k}}(0)} \sum_{\mathbf{w} \in C^{(n)}_{r_{L_{k}}}(\tilde{\Omega};u)}  \avEW{| G_{\tilde{\Omega}}(\mathbf{x}, \mathbf{w}) |^s }  \nn \\
		 \leq&		\sup_{\substack{I \subseteq \AR \\ | I | \geq 1}} \sup_{\tilde{\Omega} \subseteq \Lambda_{L_{k}}(0)}  | \partial^{(m)} \Lambda_{L_{k}} |   \sum_{u \in \partial^{(m)} \Lambda_{L_{k}}(0)} \sum_{\substack{ \mathbf{x} \in C^{(n)}_{r_{L_{k}-4m}}(\tilde{\Omega};0)  \\  \mathbf{y} \in C^{(n)}_{r_{L_{k}}}(\tilde{\Omega};u) }}    \avEW{  |  G^{(n)}_{\tilde{\Omega}}(\mathbf{x},\mathbf{y})  |^s} \nn \\
		 =&  B_{s}^{(n)}(L_{k})
	\end{align}
	(see~\eqref{def.B}) using the translation invariance of the expectation values for the first equality. The second inequality is simply a consequence of adding a factor $\geq 1$ and new summands to the sum. Therefore (use \eqref{pf.main.thm.ub.for.S1} and \eqref{pf.main.thm.ub.for.S2} in \eqref{pf.main.thm.temp.100}),
	\beq\label{temp.sjfijfiejf}
		\avEW{|G_{\Omega}(\mathbf{x},\mathbf{y})|^s}	\leq 	h_{\max}^s \frac{C}{| \lambda |^s} \left( \tilde{\tilde{A}} e^{-\frac{\distH{x}{y}}{\tilde{\tilde{\xi}}}} + B_{s}^{(n)}(L_{k}) \right).
	\eeq
	It is thus left to show that $B_{s}^{(n)}(L_{k})$ decays exponentially in the distance $\distH{x}{y}$. Because of the assumed uniform $(n-1)$ particle localization Theorem~\ref{theorem.6.1} implies that there exist $a, A, p < \infty$ and $\nu > 0$ (these quantities depend on the localization properties of the $n-1$ particle system) such that
	\beq
		B^{(n)}_{s}(L_{k+1}) \leq \frac{a}{\lambda^s} B^{(n)}_{s}(L_{k})^2 + A L_{k+1}^{2p} e^{-2 \nu L_{k}}.
	\eeq 
	To prove the exponential decay of $B^{(n)}_{s}(L_{k})$ we need the Lemma 6.2 of Aizenman and Warzel's paper~\cite{AizenmanWarzel2009}:
	
	\begin{lemma}[Lemma 6.2]\label{Lemma.6.2}
		Assume that $S: \, \AR \rightarrow \AR$, $q, b, p, \eta \in [ 0, \infty )$ and $\nu, \schweif{L}_{0} \in (0, \infty)$ satisfy
		\beqa
			S(2^k \schweif{L}_{0} )	
			&\leq&	q S( 2^{k-1} \schweif{L}_{0}  )^2 + b ( 2^{k-1} \schweif{L}_{0}  )^{2p} e^{-2\nu (2^{k-1} \schweif{L}_{0} )}, \label{Lemma.6.2.Condition.1} \\
			\eta^2
			&\geq&	q b + \eta \frac{2^p}{\schweif{L}_{0}^p}, \label{Lemma.6.2.Condition.2} \\
			1
			&>&		q S(\schweif{L}_{0} ) + \eta \schweif{L}_{0}^p e^{- \eta \schweif{L}_{0} } =: e^{-\mu \schweif{L}_{0} }  \label{Lemma.6.2.Condition.3}
		\eeqa
		for all $k \in \mathbb{N}_{0}$. Then,
		\beq
			S(2^k \schweif{L}_{0} ) \leq \frac{1}{q} e^{-\mu 2^k \schweif{L}_{0}} 
		\eeq
		for all $k \in \mathbb{N}_{0}$.
	\end{lemma}
	
	\begin{proof}
		Define 
		\[
			R(L) := q S(L) + \eta L^p e^{- \nu L}
		\]
		and observe that
		\begin{align}
			R(2^k \schweif{L}_{0})
			&=		q S(2^k \schweif{L}_{0}) + \eta (2^k \schweif{L}_{0})^p e^{- \nu 2^k \schweif{L}_{0}} \nn \\
			&\leq	q \left(   q S(2^{k-1} \schweif{L}_{0})^2 + b (2^{k-1} \schweif{L}_{0})^{2p} e^{-2\nu (2^{k-1} \schweif{L}_{0})}   \right)   +  \eta 2^p (2^{k-1} \schweif{L}_{0})^p e^{- \nu 2^k \schweif{L}_{0}} \nn \\
			&=		\left( q S(2^{k-1} \schweif{L}_{0}) \right)^2 + \left(  qb  +  \eta \frac{2^p}{(2^{k-1} \schweif{L}_{0})^p}  \right) (2^{k-1} \schweif{L}_{0})^{2p} \left(  e^{-\nu 2^{k-1} \schweif{L}_{0}}  \right)^2 \nn \\
			&\leq	\left( q S(2^{k-1} \schweif{L}_{0}) \right)^2 + \eta^2 (2^{k-1} \schweif{L}_{0})^{2p} \left(  e^{-\nu 2^{k-1} \schweif{L}_{0}}  \right)^2 \nn \\
			&\leq	\left( q S(2^{k-1} \schweif{L}_{0}) \right)^2 + \left(    \eta (2^{k-1} \schweif{L}_{0})^{p}   e^{-\nu 2^{k-1} \schweif{L}_{0}}       \right)^2 \nn   \\
			&\leq	\left( q S(2^{k-1} \schweif{L}_{0})   +   \eta (2^{k-1} \schweif{L}_{0})^{p}   e^{-\nu 2^{k-1} \schweif{L}_{0}}     \right)^2 \nn \\
			&=		R(2^{k-1} \schweif{L}_{0})^2.
		\end{align}
		Consequently,
		\begin{align}
			S(2^k \schweif{L}_{0})	
			&\leq	\frac{1}{q}  R(2^k \schweif{L}_{0})  \leq  \frac{1}{q}  R(2^{k-1} \schweif{L}_{0})^2 \leq ... \nn \\
			&\leq	\frac{1}{q}  R( \schweif{L}_{0})^{2^k}  \nn \\
			&=		\frac{1}{q} e^{- \mu 2^k \schweif{L}_{0}}
		\end{align}
		for $R(\schweif{L}_{0}) = e^{-\mu \schweif{L}_{0}}$.
	\end{proof}
	
	We proceed by using Lemma~\ref{Lemma.6.2} to find an exponential upper bound for $B^{(n)}_{s}(L_{k})$ ($S(\tilde{L}_{k}) := B^{(n)}_{s}(L_{k}), \tilde{L}_{k} := 2^k(L_{0} + 2)$ as in~Ref.~\onlinecite{AizenmanWarzel2009}) with respect to $L_{k}$. If we define $\schweif{L}_{0} := L_{0} + 2$ we get
	\beq\label{temp1234522}
		B^{(n)}_{s}(L_{k}) = S(\tilde{L}_{k}) = S(2^k (L_{0} + 2)) = S(2^k \schweif{L}_{0}).
	\eeq
	Recall that we have not fixed the value of $L_{0}$ so far. Hence, our next goal is to fix $L_{0}$ such that $\schweif{L}_{0} = L_{0} + 2$ and $S(\cdot)$ satisfy the conditions~\eqref{Lemma.6.2.Condition.1}, \eqref{Lemma.6.2.Condition.2} and~\eqref{Lemma.6.2.Condition.3} in Lemma~\ref{Lemma.6.2}. We start with the condition~\eqref{Lemma.6.2.Condition.1}:
	\begin{align}
		S(2^k \schweif{L}_{0})	
		&=	S(2 \cdot 2^{k-1}(L_{0} + 2)) = S(2 \tilde{L}_{k-1}) \nn \\
		&= S(\tilde{L}_{k}) = B^{(n)}_{s}(L_{k}) \nn \\
		&\leq \frac{a}{\lambda^s} B^{(n)}_{s}(L_{k-1})^2 + A L_{k}^{2p} e^{-2 \nu L_{k-1}} \label{ineq.that.uses.thm.6.1} \\
		&= \frac{a}{\lambda^s} S(\tilde{L}_{k-1})^2 + \tilde{A} \tilde{L}_{k-1}^{2p} e^{-2 \nu \tilde{L}_{k-1}} \nn \\
		&=	\frac{a}{\lambda^s} S(2^{k-1} (L_{0} + 2))^2  + \tilde{A}  ( 2^{k-1} (L_{0} + 2) )^{2p}  e^{-2 \nu 2^{k-1} (L_{0} + 2)} \nn \\
		&=	\frac{a}{\lambda^s} S(2^{k-1}  \schweif{L}_{0})^2  + \tilde{A}  ( 2^{k-1}  \schweif{L}_{0} )^{2p}  e^{-2 \nu 2^{k-1}  \schweif{L}_{0}}
	\end{align}
	(note that $L_{k} = \tilde{L}_{k} - 2$ for all $k$). In inequality~\eqref{ineq.that.uses.thm.6.1} we have used Theorem~\ref{theorem.6.1}. Consequently, condition~\eqref{Lemma.6.2.Condition.1} is fulfilled independently of the choice of $L_{0}$. To satisfy the conditions \eqref{Lemma.6.2.Condition.2} and~\eqref{Lemma.6.2.Condition.3} we choose $\schweif{L}_{0}$, such that 
	\beq\label{Def.strong.disorder.a}
		2^{p+1} e^{-\nu \schweif{L}_{0}} < \frac{1}{2}
	\eeq
	(note that a choice of $\schweif{L}_{0}$ also fixes $L_{0}$ because $\schweif{L}_{0} = L_{0} + 2$ by definition) and $\eta$ such that
	\beq\label{eta.regime}
		\frac{2^{p+1}}{\schweif{L}_{0}^p} < \eta < \frac{1}{2 \schweif{L}_{0}^p} e^{\nu \schweif{L}_{0}}
 	\eeq
	(the existence of this regime is guaranteed by \eqref{Def.strong.disorder.a}). The left inequality leads to the observation that
	\beq\label{temp12345fjiejfi}
		\frac{a}{\lambda^s} \tilde{A}    \leq    \frac{\eta^2}{2} 
	\eeq
	is sufficient for the assumption~\eqref{Lemma.6.2.Condition.2} in Lemma~\ref{Lemma.6.2} (using the correspondences $\frac{a}{\lambda^s} \leftrightarrow q$ and $\tilde{A} \leftrightarrow b$) because~\eqref{temp12345fjiejfi} plus $\frac{\eta}{2}$ times the left inequality of~\eqref{eta.regime} gives
	\beq
		\frac{a}{\lambda^s} \tilde{A} + \frac{\eta}{2} \frac{2^{p+1}}{\schweif{L}_{0}^p}    \leq    \frac{\eta^2}{2}  +   \frac{\eta^2}{2}.    
	\eeq
	We note that the condition \eqref{temp12345fjiejfi} can be satisfied by choosing $\lambda$ large enough. The assumption~\eqref{Lemma.6.2.Condition.3} of Lemma~\ref{Lemma.6.2} reads
	\beq
		\frac{a}{\lambda^s} S(\schweif{L}_{0}) + \eta \schweif{L}_{0}^p e^{-\nu \schweif{L}_{0}} < 1.
	\eeq
	According to the right inequality in \eqref{eta.regime} we can rewrite the LHS of this condition,
	\begin{align}
		\frac{a}{\lambda^s} S(\schweif{L}_{0}) + \eta \schweif{L}_{0}^p e^{-\nu \schweif{L}_{0}}	&= \frac{a}{\lambda^s} B^{(n)}_{s}(\schweif{L}_{0} - 2) + \eta \schweif{L}_{0}^p e^{-\nu \schweif{L}_{0}} \\
			&< \frac{a}{\lambda^s} B^{(n)}_{s}(\schweif{L}_{0} - 2) + \frac{1}{2},
	\end{align}
	and arrive at the sufficient condition
	\beq\label{condition.2352}
		\frac{a}{\lambda^s} B^{(n)}_{s}(\schweif{L}_{0} - 2)    < \frac{1}{2} .
	\eeq
	This demand can again be satisfied by choosing $\lambda$ large enough because the quantity $B^{(n)}_{s}(\schweif{L}_{0} - 2)$ can be bounded by
	\beq
		B^{(n)}_{s}(\schweif{L}_{0} - 2) \leq \frac{C n^2}{\lambda^s} (\schweif{L}_{0} - 2)^{2d(n-1)}
	\eeq
	(this is a consequence of the Wegner estimate; cf. (2.2) in~Ref.~\onlinecite{AizenmanWarzel2009}) where $C$ is $\lambda$-independent. This concludes the verification of the assumptions of Lemma~\ref{Lemma.6.2}. Its application  yields
	\beq
		S(2^k \schweif{L}_{0}) \leq \frac{\lambda^s}{a} e^{-\mu 2^k \schweif{L}_{0}}.
	\eeq
	Therefore,
	\begin{align}
		B^{(n)}_{s}(L_{k}) 
		&=		S(\tilde{L}_{k}) = S(2^k (L_{0} + 2)) = S(2^k \schweif{L}_{0}) \nn \\
		&\leq 	\frac{\lambda^s}{a} e^{-\mu 2^k \schweif{L}_{0}}  =      \frac{\lambda^s}{a} e^{-\mu \tilde{L}_{k} } = \frac{\lambda^s}{a} e^{-\mu ( L_{k} + 2)}.
	\end{align}
	Recall that 
	\beq
		\frac{\distH{x}{y}}{c} \leq L_{k}
	\eeq
	because of our observation that $L_{k} \leq \distH{x}{y} = | x-y | \leq c L_{k}$ ($c < \infty$) at the beginning of the proof. We thus arrive at the desired exponential decay
	\beq
		B^{(n)}_{s}(L_{k}) \leq \left( \frac{\lambda^s}{a} e^{- 2 \mu} \right) e^{-\frac{\mu}{c} \distH{x}{y}}.
	\eeq
	We have thus shown the validity of the fractional moment criterium~\eqref{np.aizenman.criterium} for $n$ defects (cf. \eqref{temp.sjfijfiejf}) based on the validity of the fractional moment criterium for $n-1$ particles (this was the induction assumption).  This concludes the inductive proof of the theorem.
\end{proof}

\begin{theorem}[Theorem 5.3]\label{theorem.5.3}
	Assume a system with $n-1$ particles ($n \geq 2$) satisfies the fractional moment criterium~\eqref{np.aizenman.criterium} for all $\lambda \in (\lambda_{0}, \infty)$. Then there exist some $s \in (0,1)$, $A, \xi < \infty$ such that 
	\beq
		\sup_{\substack{ I \subset \AR \\ | I | \geq 1  }} \sup_{ \Omega \subseteq \mathbb{Z}^d }  \avEW{  |  G_{\Omega}^{(n)}(\mathbf{x},\mathbf{y})  |^s      }   \leq    A \, e^{- \frac{1}{\xi}  \min\{  \distH{\mathbf{x}}{\mathbf{y}} ,  \max\{ l(\mathbf{x}), l(\mathbf{y})  \}   \}    }
	\eeq
	for all $\lambda \in (\lambda_{0}, \infty)$ and all $\mathbf{x}, \mathbf{y} \in \mathbb{Z}^{dn}$.
\end{theorem}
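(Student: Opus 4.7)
The proof strategy follows the inductive scheme of Aizenman-Warzel's Theorem~5.3, adapted to the inhomogeneous finite-range hopping Hamiltonian $H^{(n)}$ at hand. The key geometric observation is that $r := \max\{l(\mathbf{x}), l(\mathbf{y})\}$ large means that at least one of $\mathbf{x}, \mathbf{y}$ admits a nontrivial partition into two clusters separated by at least $r$. Exploiting this separation via a resolvent decomposition reduces the $n$-particle Green's function to a convolution of two Green's functions involving strictly fewer particles, which are controlled by the induction hypothesis.

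Without loss of generality I assume $l(\mathbf{x}) \geq l(\mathbf{y})$ and set $r := l(\mathbf{x})$. By definition of $l$, there exists a partition $\{1,\ldots,n\} = I \sqcup J$ with $1 \leq |I|, |J| \leq n-1$ such that $\mathrm{dist}(\mathbf{x}_I, \mathbf{x}_J) \geq r$. I introduce a corridor $C \subset \mathbb{Z}^d$ of width exceeding the hopping range $R+m+1$ that separates $\mathbf{x}_I$ from $\mathbf{x}_J$, with $\Omega_1, \Omega_2$ denoting the resulting components of $\Omega \setminus C$. Define a depleted Hamiltonian $\tilde H$ by zeroing every matrix element of $H^{(n)}_\Omega$ that touches $C$ or connects configurations with different particle counts in $\Omega_1$. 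Because the hopping range is finite and the corridor is wider, $\tilde H$ restricted to the sector with $|I|$ particles in $\Omega_1$ and $|J|$ in $\Omega_2$ equals $H^{(|I|)}_{\Omega_1} \oplus H^{(|J|)}_{\Omega_2}$, and the corresponding depleted Green's function is given by the energy-convolution
$$\tilde G(\mathbf{x}, \mathbf{y}; z) = \frac{1}{2 \pi i} \oint_{\gamma} G^{(|I|)}_{\Omega_1}(\mathbf{x}_I, \mathbf{y}_I; \zeta)\, G^{(|J|)}_{\Omega_2}(\mathbf{x}_J, \mathbf{y}_J; z-\zeta)\, d\zeta$$
whenever $\mathbf{y}$ admits the compatible partition $\mathbf{y} = \mathbf{y}_I \sqcup \mathbf{y}_J$ with $\mathbf{y}_I \subset \Omega_1$, $\mathbf{y}_J \subset \Omega_2$, and vanishes otherwise.

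Next, I apply the geometric resolvent identity $G^{(n)}_\Omega = \tilde G + \tilde G\, W\, G^{(n)}_\Omega$, where $W := H^{(n)}_\Omega - \tilde H$ is supported on configurations with at least one particle hopping across $C$. Taking fractional moments via $|a+b|^s \leq |a|^s + |b|^s$ for $s \in (0,1)$, iterating once, and using the pointwise bound $|W(\mathbf{n},\mathbf{m})| \leq h_{\max}$ reduces the estimate to summing $\avEW{|\tilde G|^s}$ over boundary configurations. Applying the induction hypothesis (which delivers exponentially decaying fractional-moment bounds for the $|I|$- and $|J|$-particle systems, both with at most $n-1$ particles) to each factor inside the convolution, combined with the Wegner-type estimate (Theorem~2.1 of Aizenman-Warzel, which carries over in our setting) to control the energy integral, yields an exponential factor $e^{-r/\xi'}$ from the forced splitting on the $\mathbf{x}$-side. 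When additionally $\distH{x}{y} \leq r$, the partition of $\mathbf{y}$ inherits a matching Hausdorff separation, so one of the subsystem Green's functions has endpoints at Hausdorff distance at least $\distH{x}{y}$, and the induction delivers the sharper rate $e^{-\distH{x}{y}/\xi''}$. The worse of the two bounds supplies the claimed exponent $\min\{\distH{x}{y}, r\}/\xi$.

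The main obstacle is controlling the polynomial prefactors arising from (i) the sum over corridor configurations, (ii) the sum over admissible partitions $I \sqcup J$, and (iii) the resummation of the iterated resolvent series. Each of these grows only polynomially in $r$ and $n$, but they would spoil the exponential decay unless the Wegner-type bound contributes a factor $C/\lambda^s$ with $\lambda$ large enough to render the geometric series convergent; this forces the threshold $\lambda_0$ to be chosen suitably large. A secondary subtlety specific to our model is that the hopping $T(k)_I$ is inhomogeneous, so one must verify that the uniform bound $|H(\mathbf{n},\mathbf{m})| \leq h_{\max}$ and the finite range $R+m+1$ survive the depletion and the induction step uniformly in $\Omega$ and in the choice of partition; this is routine but must be checked carefully.
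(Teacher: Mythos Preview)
Your inductive strategy is correct in spirit, but your decomposition differs from the paper's, and the sketch has a genuine gap in the handling of the remainder term.

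The paper does \emph{not} introduce a geometric corridor in $\mathbb{Z}^d$. Instead it splits by particle labels: with $J \dot{\cup} K = \{1,\dots,n\}$ realizing $l(\mathbf{x})$, it sets $H_\Omega^{(J,K)} = H_\Omega^{(J)} \otimes 1 + 1 \otimes H_\Omega^{(K)}$, where \emph{both} subsystems continue to live on all of $\Omega$. The non-interacting Green's function $G^{(J,K)}_\Omega(\mathbf{x},\mathbf{y})$ is therefore nonzero for every pair $(\mathbf{x},\mathbf{y})$, and Aizenman--Warzel's Theorem~5.2 (the ingredient your contour integral is meant to replace) directly yields $\avEW{|G^{(J,K)}_\Omega(\mathbf{x},\mathbf{y})|^s} \leq A\, e^{-\distHJK{x}{y}/\xi} \leq A\, e^{-\distH{x}{y}/\xi}$ from the $(n-1)$-particle criterion. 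Your corridor construction forces $\tilde G(\mathbf{x},\mathbf{y})=0$ whenever $\mathbf{y}$ fails to partition compatibly with $\Omega_1,\Omega_2$, pushing all the weight onto the remainder, and your claim that ``the partition of $\mathbf{y}$ inherits a matching Hausdorff separation'' is not justified for general $\mathbf{y}$.

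The actual gap is the remainder term. After one resolvent identity you still carry the full $n$-particle resolvent $G^{(n)}_\Omega$, which shares all the randomness with $\tilde G$; you cannot simply ``reduce to summing $\avEW{|\tilde G|^s}$ over boundary configurations''. Your mention of an ``iterated resolvent series'' and the need for $\lambda$ large enough to make it converge points in the wrong direction: no such series is used, and it would not converge here. The paper instead applies a H\"older inequality with tuned exponents, then Cauchy--Schwarz, to factor $\avEW{|G^{(J,K)}(\mathbf{x},\mathbf{u})\, G_\Omega(\mathbf{w},\mathbf{y})|^{s}}$; the $G_\Omega$ factor is killed by the a~priori Wegner estimate (their Theorem~2.1), contributing a single harmless $|\lambda|^{-s}$. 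All the exponential decay $e^{-\beta\, l(\mathbf{x})/\xi}$ then comes from $G^{(J,K)}(\mathbf{x},\mathbf{u})$, because the coupling $H_\Omega - H_\Omega^{(J,K)}$ is supported only where $J$- and $K$-particles come within hopping range, forcing $\distHJK{x}{u} \geq l(\mathbf{x}) - m$. Summing the two pieces gives $A\,e^{-\distH{x}{y}/\xi} + B\,e^{-l(\mathbf{x})/\xi'}$, hence the claimed $\min$-bound.
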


\begin{proof}
	As in Ref.~\onlinecite{AizenmanWarzel2009} we assume without loss of generality that $l(\mathbf{x}) \geq l(\mathbf{y})$ and set $J, K \subset \{  1,...,n \}$, $J \dot{\cup} K = \{ 1,...,n \}$ such that
	\beq
		l(\mathbf{x}) = \min_{j \in J, k \in K} | x_{j} - x_{k} |.
	\eeq
	Corresponding to the partition $\{ 1,...,n \} = J \dot{\cup} K$ there is a division of the total $n$-particle system into two subsystems. We set $H_{\Omega}^{(J,K)} = H_{\Omega}^{(J)} \otimes 1 + 1 \otimes H_{\Omega}^{(K)}$ where $H_{\Omega}^{(J)}$ and $H_{\Omega}^{(K)}$ are chosen such that the dynamics of $H_{\Omega}^{(J,K)}$ and $H_{\Omega}$ agree within the subsystems. Thus $H_{\Omega}^{(J,K)}$ emerges from $H_{\Omega}$ by removing all the interactions between the subsystems associated to $J$ and $K$. With the use of the triangle inequality $|a+b|^s \leq |a|^s + |b|^s$ Aizenman and Warzel split up the expression we want to bound:
	\beq \label{temp1.in.pf.of.thm5.3}
		 \avEW{  |  G_{\Omega}(\mathbf{x},\mathbf{y})  |^s      }   \leq  \avEW{  |  G_{\Omega}^{(J,K)}(\mathbf{x},\mathbf{y})  |^s      }  + \avEW{  |  G_{\Omega}^{(J,K)}(\mathbf{x},\mathbf{y})  -  G_{\Omega}(\mathbf{x},\mathbf{y})  |^s      } .
	\eeq
	With Theorem 5.2 in~Ref.~\onlinecite{AizenmanWarzel2009} we get the desired bound for $\avEW{  |  G_{\Omega}^{(J,K)}(\mathbf{x},\mathbf{y})  |^s      }$ using $\distHJK{\mathbf{x}}{\mathbf{y}} \geq \distH{\mathbf{x}}{\mathbf{y}}$:
	\beq \label{temp2.in.pf.of.thm5.3}
		\avEW{  |  G_{\Omega}^{(J,K)}(\mathbf{x},\mathbf{y})  |^s      }  \leq  A e^{- \distH{\mathbf{x}}{\mathbf{y}} / \xi}
	\eeq
	Thus, we are left with the task to bound $\avEW{  |  \Delta  |^s      }$ for
	\beq
		\Delta := G_{\Omega}^{(J,K)}(\mathbf{x},\mathbf{y},z)  -  G_{\Omega}(\mathbf{x},\mathbf{y},z).
	\eeq
	The next step in the original paper is the application of the H\"older inequality:
	\begin{align}
		\avEW{  |  \Delta  |^s }	&\leq	\left(    \avEW{  |  \Delta  |^{\frac{3s(1+s)}{2(1+2s)} } }      \right)^{\frac{1+2s}{2+s}}  \left(    \avEW{  |  \Delta  |^{\frac{s}{2} } }      \right)^{\frac{1-s}{2+s}} \\
							&\leq	c \, | \lambda |^{ - \frac{3s(1+s)}{2(2+s)}}  \left(   \avEW{  |  \Delta  |^{s/2} }   \right)^{2\beta}
	\end{align}
	where Aizenman and Warzel used their Theorem 2.1 in the second inequality together with the definition $\beta := \frac{1-s}{2(2+s)}$. The application of the resolvent identity yields
	\beq
		\Delta = \sum_{\mathbf{u}, \mathbf{w} \in C^{(n)}(\Omega)}   G_{\Omega}^{(J,K)}(\mathbf{x},\mathbf{u}, z) \sp{\dirac{u}}{(H_{\Omega} - H_{\Omega}^{(J,K)}) \dirac{w}} G_{\Omega}(\mathbf{w},\mathbf{y}, z).
	\eeq
	We thus arrive at
	\begin{align}
		\avEW{  |  \Delta  |^s } 	&\leq  	c \, | \lambda |^{ - \frac{3s(1+s)}{2(2+s)}}   \avEW{  \left|    \sum_{\mathbf{u}, \mathbf{w} \in C^{(n)}(\Omega)}   G_{\Omega}^{(J,K)}(\mathbf{x},\mathbf{u}, z) \sp{\dirac{u}}{(H_{\Omega} - H_{\Omega}^{(J,K)}) \dirac{w}} G_{\Omega}(\mathbf{w},\mathbf{y}, z)    \right|^{s/2} }^{2\beta} \\
							&\leq	c \, | \lambda |^{ - \frac{3s(1+s)}{2(2+s)}} \sum_{\mathbf{w}}   \avEW{  \left|    \sum_{\mathbf{u}}   G_{\Omega}^{(J,K)}(\mathbf{x},\mathbf{u}, z) \sp{\dirac{u}}{(H_{\Omega} - H_{\Omega}^{(J,K)}) \dirac{w}}   \right|^{s/2}     \left|  G_{\Omega}(\mathbf{w},\mathbf{y}, z)    \right|^{s/2} }^{2\beta} \\
							&\leq	c \, | \lambda |^{ - \frac{3s(1+s)}{2(2+s)}} \sum_{\mathbf{w}}   \avEW{ \left|  \sum_{\mathbf{u}}   G_{\Omega}^{(J,K)}(\mathbf{x},\mathbf{u}, z) \sp{\dirac{u}}{(H_{\Omega} - H_{\Omega}^{(J,K)}) \dirac{w}}   \right|^{s}  }^{\beta}     \avEW{ \left|  G_{\Omega}(\mathbf{w},\mathbf{y}, z)    \right|^{s}  }^{\beta}.
	\end{align}
	The last inequality follows from the Cauchy-Schwarz inequality. Theorem 2.1 from the original paper~\cite{AizenmanWarzel2009} gives
	\beq
		\avEW{ \left|  G_{\Omega}(\mathbf{w},\mathbf{y}, z)    \right|^{s}  }^{\beta} \leq \left( C_{s} \frac{K E_{0}}{ | \lambda |^s E_{0}^s }    \right)^{\beta}
	\eeq
	(with constants $C_{s}, K, E_{0}$) so that  
	\beq
		\avEW{  |  \Delta  |^s } \leq  \frac{\tilde{C}}{\lambda^s}   \sum_{\mathbf{w}}   \avEW{  \sum_{\mathbf{u}}  \left| G_{\Omega}^{(J,K)}(\mathbf{x},\mathbf{u}, z) \right|^{s} \, \left|  \sp{\dirac{u}}{(H_{\Omega} - H_{\Omega}^{(J,K)}) \dirac{w}}   \right|^{s}  }^{\beta}
	\eeq
	(the constant $\tilde{C}$ collects constant factors). Next we note that $\sp{\dirac{u}}{(H_{\Omega} - H_{\Omega}^{(J,K)}) \dirac{w}}$ is not a random variable. Therefore,
	\begin{align}
		\avEW{  |  \Delta  |^s } 	& \leq  \frac{\tilde{C}}{| \lambda |^s}  \sum_{\mathbf{w}}    \left|     \sum_{\mathbf{u}}  \left| \sp{\dirac{u}}{(H_{\Omega} - H_{\Omega}^{(J,K)}) \dirac{w}} \right|^s   \avEW{  \left| G_{\Omega}^{(J,K)}(\mathbf{x},\mathbf{u}, z)   \right|^{s}  }  \right|^{\beta}  \\
							& \leq	\frac{\tilde{C}}{| \lambda |^s}  \sum_{\mathbf{w}}    \left|     \sum_{\mathbf{u}}  \left| \sp{\dirac{u}}{(H_{\Omega} - H_{\Omega}^{(J,K)}) \dirac{w}} \right|^s   \tilde{A} e^{-\distHJK{x}{u} / \xi}  \right|^{\beta} \\
							& \leq	\frac{\tilde{C} \tilde{A}}{| \lambda |^s}  \sum_{\mathbf{w}}     \sum_{\mathbf{u}}  \left| \sp{\dirac{u}}{(H_{\Omega} - H_{\Omega}^{(J,K)}) \dirac{w}} \right|^{s \beta}  e^{- \frac{\beta}{\xi} \distHJK{x}{u} }  \\
							& \leq	\frac{\tilde{C} \tilde{A}}{| \lambda |^s}  \sum_{\mathbf{u} \in \supp{H_{\Omega} - H_{\Omega}^{(J,K)} }}     \sum_{\mathbf{w}} \left| \sp{\dirac{u}}{(H_{\Omega} - H_{\Omega}^{(J,K)}) \dirac{w}} \right|^{s \beta}   e^{- \frac{\beta}{\xi} \distHJK{x}{u} }
	\end{align}
	where we have used Theorem 5.2 in~Ref.~\onlinecite{AizenmanWarzel2009} and $| a + b |^{\beta} \leq | a |^{\beta} + | b |^{\beta}$, $\beta \in (0,1)$. The hopping we consider is finite-ranged. Set $m \in \mathbb{N}$ equal to the maximal hopping range and let $B_{m}(\mathbf{u})$ denote the ball around $\mathbf{u}$ with 1-norm radius $m$. Hence, for a fixed $\mathbf{u} \in C^{(n)}(\Omega)$ there are at most $| B_{m}(\mathbf{u}) | $ nonzero matrix elements of the operator $H_{\Omega}$ and thus of $H_{\Omega} - H_{\Omega}^{(J,K)}$. Therefore, we observe that the bound $| B_{m}(\mathbf{u}) | < (2m)^{nd}$ leads to
	\beq
		\avEW{  |  \Delta  |^s }	\leq	\frac{\tilde{C} \tilde{A}}{| \lambda |^s} (h_{\max})^{s\beta} (2m)^{nd}  \sum_{\mathbf{u} \in \supp{H_{\Omega} - H_{\Omega}^{(J,K)} }}      e^{- \frac{\beta}{\xi} \distHJK{x}{u} }
	\eeq
	with $h_{\max} := \max_{\mathbf{n}, \mathbf{m}} | H(\mathbf{n},\mathbf{m}) |$. From Eq.~(5.19) in the original paper~\cite{AizenmanWarzel2009} we know that 
	\beq
		\min_{\mathbf{u} \in   \supp{H_{\Omega} - H_{\Omega}^{(J,K)} }}   \distHJK{x}{u}  -   \left(  l(\mathbf{x}) - m    \right) \geq 0 .
	\eeq
	It follows that
	\begin{align}
		\sum_{\mathbf{u} \in \supp{H_{\Omega} - H_{\Omega}^{(J,K)} }}      e^{- \frac{\beta}{\xi} \distHJK{x}{u} }	
		 & = e^{- \frac{\beta}{\xi} ( l(\mathbf{x}) - m ) }  \sum_{\mathbf{u} \in \supp{H_{\Omega} - H_{\Omega}^{(J,K)} }}      e^{- \frac{\beta}{\xi} \left( \distHJK{x}{u} - (l(\mathbf{x}) - m)  \right) } \nn \\
		 & \leq e^{- \frac{\beta}{\xi} ( l(\mathbf{x}) - m ) }  \sum_{\mathbf{u} \in \supp{H_{\Omega} - H_{\Omega}^{(J,K)} }}      e^{- \frac{\beta}{\xi} \left( \distHJK{x}{u} - \min_{\mathbf{u} \in   \supp{H_{\Omega} - H_{\Omega}^{(J,K)} }}   \distHJK{x}{u}  \right) }  \nn \\
		& \leq	e^{- \frac{\beta}{\xi} ( l(\mathbf{x}) - m ) }  \sum_{\mathbf{u} \in C^{(n)}(\Omega) }      e^{- \frac{\beta}{\xi} \distHJK{a}{u} }  ,
	\end{align}
	where $\mathbf{a} \in \mathbb{Z}^{nd}$ realizes the minimum $\min_{\mathbf{u} \in   \supp{H_{\Omega} - H_{\Omega}^{(J,K)} }}   \distHJK{x}{u}$. The use of the relation $\distHJK{\mathbf{x}}{\mathbf{y}} \geq \distH{\mathbf{x}}{\mathbf{y}}$ yields
	\begin{align}	
	\sum_{\mathbf{u} \in \supp{H_{\Omega} - H_{\Omega}^{(J,K)} }}      e^{- \frac{\beta}{\xi} \distHJK{x}{u} }
		& \leq	e^{- \frac{\beta}{\xi} ( l(\mathbf{x}) - m ) }  \sum_{\mathbf{u} \in C^{(n)}(\Omega) }      e^{- \frac{\beta}{\xi} \distH{a}{u} }  \nn \\
		& \leq	e^{- \frac{\beta}{\xi} ( l(\mathbf{x}) - m ) }  C(n,d) \left( \frac{\xi}{\beta} \right)^{nd}.
	\end{align}	
	The last inequality is a consequence of Lemma A.3 in~Ref.~\onlinecite{AizenmanWarzel2009}. After this little detour we conclude
	\beq\label{temp3.in.pf.of.thm5.3}
		\avEW{  |  \Delta  |^s }	\leq	\frac{B}{| \lambda |^s}   e^{- \frac{\beta}{\xi} l(\mathbf{x}) }
	\eeq
	if we set
	\beq
		B :=  \tilde{C} \tilde{A} (h_{\max})^{s\beta} (2m)^{nd}    C(n,d) \left( \frac{\xi}{\beta} \right)^{nd}  e^{ \frac{\beta}{\xi} m  } < \infty
	\eeq
	Putting together the Eq.s~\eqref{temp1.in.pf.of.thm5.3}, \eqref{temp2.in.pf.of.thm5.3} and \eqref{temp3.in.pf.of.thm5.3} we arrive at
	\beq
		\avEW{  |  G_{\Omega}(\mathbf{x},\mathbf{y})  |^s} \leq A e^{-\frac{1}{\xi} \min\{  \distH{\mathbf{x}}{\mathbf{y}}, l(\mathbf{x})  \}  }
	\eeq
	This proves the theorem because we assumed without loss of generality $l(\mathbf{x}) \geq l(\mathbf{y})$ at the beginning of the proof.
\end{proof}

Set
	\beq\label{def.B}
		B_{s}^{(n)}(L_{k}) := | \partial^{(m)} \Lambda_{L_{k}} | \sup_{\substack{I \subseteq \AR \\ | I | \geq 1}} \sup_{\tilde{\Omega} \subseteq \Lambda_{L_{k}}}  \sum_{y \in \partial^{(m)} \Lambda_{L_{k}}} \sum_{\substack{ \mathbf{x} \in C^{(n)}_{r_{L_{k} - 4m}}(\tilde{\Omega};0)  \\  \mathbf{y} \in C^{(n)}_{r_{L_{k}}}(\tilde{\Omega};y) }}    \avEW{  |  G^{(n)}_{\tilde{\Omega}}(\mathbf{x},\mathbf{y})  |^s}
	\eeq
with
	\beq\label{def.wide.inner.bdry}
		\partial^{(m)} M := \{   w \in M  |  \min_{q \in \partial^{(-)}M } \| q - w  \|_{1} \leq m-1    \}
	\eeq
where $\partial^{(-)}M$ denotes the inner boundary of $M \subset \mathbb{Z}^{d}$. Further we define the sequence $\{ L_{k} \}_{k\geq 0}$ recursively: $L_{k+1} := 2(L_{k} + 1)$. The starting point $L_{0}$ will be specified later.

\begin{theorem}[Theorem 6.1]\label{theorem.6.1}
	Assume that the $(n-1)$-defect system satisfies the fractional moment criterium~\eqref{np.aizenman.criterium} for all $\lambda \in (\lambda_{0}, \infty)$, $\lambda_{0} < \infty$. Then there exist some $s \in (0,1)$, $a,A,p < \infty$, and $\nu > 0$ such that
	\beq
		B_{s}^{(n)}(L_{k+1})  \leq  \frac{a}{| \lambda |^s} B_{s}^{(n)}(L_{k})^2  +  A L_{k+1}^{2p} e^{-2 \nu L_{k}},
	\eeq
	for all $\lambda \in (\lambda_{0}, \infty)$ and all $k \in \mathbb{N}_{0}$.
\end{theorem}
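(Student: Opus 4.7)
The plan is to adapt the proof of Theorem~6.1 of~\cite{AizenmanWarzel2009}, modifying each step to accommodate the fact that in our setting interactions between defects are encoded in inhomogeneous finite-range hopping matrix elements rather than in an additive interaction potential. The geometric setup mirrors the one used in the proof of the main theorem above: for each pair $\mathbf{x} \in C^{(n)}_{r_{L_k-4m}}(\tilde\Omega;0)$ and $\mathbf{y} \in C^{(n)}_{r_{L_{k+1}}}(\tilde\Omega;y)$ appearing in the definition~\eqref{def.B} of $B^{(n)}_s(L_{k+1})$, any path in the hopping graph from $\mathbf{x}$ to $\mathbf{y}$ must cross the concentric interface of the smaller box $\Lambda_{L_k}$. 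First I would split $H^{(n)}_{\tilde\Omega}$ as $H^{(n)}_{\tilde\Omega \cap \Lambda_{L_k}} \oplus H^{(n)}_{\tilde\Omega \setminus \Lambda_{L_k}} + \Gamma$, where $\Gamma$ is a boundary strip operator supported on $\overline{\partial^{(m)}\Lambda_{L_k}} \times \overline{\partial^{(m)}\Lambda_{L_k}}$ with entries bounded in magnitude by $h_{\max}$.

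Next I would apply the resolvent identity exactly as in the derivation leading to~\eqref{pf.main.thm.temp.100}: combining $|a+b|^s \leq |a|^s + |b|^s$ ($s \in (0,1)$) with the finite support of $\Gamma$ produces
\begin{equation}
|G^{(n)}_{\tilde\Omega}(\mathbf{x},\mathbf{y})|^s \leq h_{\max}^s \sum_{\mathbf{u},\mathbf{w} \in \overline{\partial^{(m)}\Lambda_{L_k}}} |G^{(n)}_{\tilde\Omega \cap \Lambda_{L_k}}(\mathbf{x},\mathbf{u})|^s \, |G^{(n)}_{\tilde\Omega \setminus \Lambda_{L_k}}(\mathbf{w},\mathbf{y})|^s.
\end{equation}
The two factors depend on disjoint subsets of the random potential $V$, so after taking the energy average the fractional moments decouple by the H\"older/Cauchy--Schwarz argument already used in the proof of Theorem~\ref{theorem.5.3}, and an additional factor $1/|\lambda|^s$ is generated by the Wegner-type bound (Theorem~2.1 of~\cite{AizenmanWarzel2009}). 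If both intermediate configurations $\mathbf{u}$ and $\mathbf{w}$ lie in the tight class $C^{(n)}_{r_{L_k}}$ around the interface point, the resulting double sum is precisely of the form that defines $B^{(n)}_s(L_k)$ on each factor, producing the quadratic contribution $a|\lambda|^{-s} B^{(n)}_s(L_k)^2$ up to polynomial prefactors absorbed into $a$.

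The remaining ``non-tight'' case, in which at least one of $\mathbf{u}$ or $\mathbf{w}$ splits into subclusters separated by a distance comparable to $L_k$, is controlled by the inductive $(n-1)$-particle hypothesis through Theorem~\ref{theorem.5.3}. That bound gives $\avEW{|G^{(n)}_{\tilde\Omega \cap \Lambda_{L_k}}(\mathbf{x},\mathbf{u})|^s} \leq A \exp(-\min\{\mathrm{dist}_{\mathcal{H}}(\mathbf{x},\mathbf{u}),\max\{l(\mathbf{x}),l(\mathbf{u})\}\}/\xi)$, and on this non-tight branch $l(\mathbf{u}) \gtrsim L_k/(n-1)$, producing the exponential factor $e^{-2\nu L_k}$ after a suitable choice of $\nu$. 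Summing over the polynomially many such $\mathbf{u},\mathbf{w}$, over $y \in \partial^{(m)}\Lambda_{L_{k+1}}$, and over $\mathbf{y}$ yields a tail of the form $A L_{k+1}^{2p} e^{-2\nu L_k}$ where the exponent $p$ is controlled by $|\overline{\partial^{(m)}\Lambda_{L_k}}|$ and $|C^{(n)}(\Lambda_{L_k})|$, both growing polynomially in $L_{k+1}$.

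The main obstacle will be a careful book-keeping of the case distinction so that every term of the double sum is attributed either to the ``both tight'' quadratic branch or to the ``some separation'' exponentially decaying branch, uniformly in $\tilde\Omega$ and in the energy interval $I$ with $|I|\geq 1$. Because $\Gamma$ has only bounded matrix elements but no scalar interaction-potential structure, one must trace the decoupling through inhomogeneous hopping rather than through an interaction term; however, since $\Gamma$ couples only configurations within lattice distance $m$ of the interface, the combinatorics remain of the same type as in~\cite{AizenmanWarzel2009}, and the Wegner estimate together with the adapted Lemma~4.6 above supplies all the analytic inputs needed to make the recursion close.
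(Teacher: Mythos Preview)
There is a genuine gap in your resolvent step. A single decoupling of $H^{(n)}_{\tilde\Omega}$ across the interface of $\Lambda_{L_k}$ does \emph{not} produce the factorization you wrote: one application of the resolvent identity gives
\[
G^{(n)}_{\tilde\Omega}(\mathbf{x},\mathbf{y})
= -\sum_{\mathbf{u},\mathbf{w}} G^{(n)}_{\tilde\Omega \cap \Lambda_{L_k}}(\mathbf{x},\mathbf{u})\,\Gamma(\mathbf{u},\mathbf{w})\,G^{(n)}_{\tilde\Omega}(\mathbf{w},\mathbf{y}),
\]
with the \emph{full} Green's function $G^{(n)}_{\tilde\Omega}$ in the second slot, not $G^{(n)}_{\tilde\Omega\setminus\Lambda_{L_k}}$. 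Hence the two factors are not supported on disjoint randomness and do not decouple. Even if they did, the second factor would live on the annulus $\tilde\Omega\setminus\Lambda_{L_k}$ (outer radius $L_{k+1}$), not on a box of scale $L_k$, so there is no way to recognize in it a copy of $B^{(n)}_s(L_k)$ and close the quadratic recursion.

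The paper instead performs a \emph{two--box} decoupling (Lemma~\ref{Lemma.6.3}): one introduces two disjoint boxes $V=\Lambda_{L_k}(0)\cap\tilde\Omega$ and $W=\Lambda_{L_k}(y)\cap\tilde\Omega$ and applies the resolvent identity twice, obtaining three factors $G_V(\mathbf{x},\mathbf{w})\,G_{\tilde\Omega}(\mathbf{w}',\mathbf{v})\,G_W(\mathbf{v}',\mathbf{y})$. The middle factor is removed by the Wegner estimate (producing the $|\lambda|^{-s}$), and the outer two, living on genuinely disjoint scale--$L_k$ boxes, are each bounded by quantities of the form $B^{(n)}_s(L_k)+k_s(\ldots)$; the correction $k_s$ is exactly the non-tight intermediate configuration contribution you describe, controlled via Corollary~5.4 (i.e.\ Theorem~\ref{theorem.5.3}). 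For this to apply one first needs $\mathbf{x}$ and $\mathbf{y}$ themselves to be tight at scale $L_k$, which is why the paper begins by splitting $B^{(n)}_s(L_{k+1})=\tilde B^{(n)}_s(L_{k+1})+[B^{(n)}_s(L_{k+1})-\tilde B^{(n)}_s(L_{k+1})]$ and handles the difference directly with Theorem~\ref{theorem.5.3}. Your outline collapses these two distinct mechanisms (the $\tilde B$ splitting for non-tight endpoints, and Lemma~\ref{Lemma.6.3} for the quadratic structure) into one, and in doing so loses the algebraic identity that actually produces the square.
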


\begin{proof}
	Set
	\beq
		\tilde{B}_{s}^{(n)}(L_{k+1}) := | \partial^{(m)} \Lambda_{L_{k+1}} | \sup_{\substack{I \subseteq \AR \\ | I | \geq 1}} \sup_{\tilde{\Omega} \subseteq \Lambda_{L_{k+1}}}  \sum_{y \in \partial^{(m)} \Lambda_{L_{k+1}}} \sum_{\substack{ \mathbf{x} \in C^{(n)}_{r_{L_{k} - 4m}}(\tilde{\Omega};0)  \\  \mathbf{y} \in C^{(n)}_{r_{L_{k}}}(\tilde{\Omega};y) }}    \avEW{  |  G^{(n)}_{\tilde{\Omega}}(\mathbf{x},\mathbf{y})  |^s}.
	\eeq
	We follow the strategy of Aizenman and Warzel and write $B_{s}^{(n)}(L_{k+1})$ in the form
	\beq\label{splitting.B(n)s.up}
		B_{s}^{(n)}(L_{k+1}) = \tilde{B}_{s}^{(n)}(L_{k+1}) + B_{s}^{(n)}(L_{k+1}) - \tilde{B}_{s}^{(n)}(L_{k+1}) 
	\eeq
	and estimate $\tilde{B}_{s}^{(n)}(L_{k+1})$ and the difference appearing on the RHS separately. We start with the estimation of $B_{s}^{(n)}(L_{k+1}) - \tilde{B}_{s}^{(n)}(L_{k+1})$:
	\beq\label{B(n)s.Difference}
		B_{s}^{(n)}(L_{k+1}) - \tilde{B}_{s}^{(n)}(L_{k+1}) 
		=	| \partial^{(m)} \Lambda_{L_{k+1}} | \sup_{\substack{I \subseteq \AR \\ | I | \geq 1}} \sup_{\tilde{\Omega} \subseteq \Lambda_{L_{k+1}}}   \sum_{y \in \partial^{(m)} \Lambda_{L_{k+1}}}  \Delta
	\eeq
	with
	\begin{align}
		\Delta 
		&:=	\sum_{\substack{ \mathbf{x} \in C^{(n)}_{r_{L_{k+1} - 4m}}(\tilde{\Omega};0)  \\  \mathbf{y} \in C^{(n)}_{r_{L_{k+1}}}(\tilde{\Omega};y) }}    \avEW{  |  G^{(n)}_{\tilde{\Omega}}(\mathbf{x},\mathbf{y})  |^s}     -    \sum_{\substack{ \mathbf{x} \in C^{(n)}_{r_{L_{k} - 4m}}(\tilde{\Omega};0)  \\  \mathbf{y} \in C^{(n)}_{r_{L_{k}}}(\tilde{\Omega};y) }}    \avEW{  |  G^{(n)}_{\tilde{\Omega}}(\mathbf{x},\mathbf{y})  |^s} \nn \\
		&=	\sum_{\mathbf{x} \in C^{(n)}_{r_{L_{k} - 4m}}(\tilde{\Omega};0) }  \left(    \sum_{\mathbf{y} \in C^{(n)}_{r_{L_{k+1}}}(\tilde{\Omega};y) }  \avEW{ ...}     -      \sum_{\mathbf{y} \in C^{(n)}_{r_{L_{k}}}(\tilde{\Omega};y) }  \avEW{ ...}    \right)        +        \sum_{\mathbf{x} \in C^{(n)}_{r_{L_{k+1}-4m}}(\tilde{\Omega};0) \backslash C^{(n)}_{r_{L_{k}-4m}}(\tilde{\Omega};0)}   \sum_{\mathbf{y} \in C^{(n)}_{r_{L_{k+1}}}(\tilde{\Omega};y)}  \avEW{...} .
	\end{align}
	The quantity $\Delta$ can be bounded as follows:
	\begin{align}
		\Delta 
		&=	\sum_{\substack{ \mathbf{x} \in C^{(n)}_{r_{L_{k} - 4m}}(\tilde{\Omega};0)  \\  \mathbf{y} \in C^{(n)}_{r_{L_{k+1}}}(\tilde{\Omega};y)  \backslash  C^{(n)}_{r_{L_{k}}}(\tilde{\Omega};y)     }}      \avEW{...}
				+  \sum_{\substack{  \mathbf{x} \in C^{(n)}_{r_{L_{k+1}-4m}}(\tilde{\Omega};0)  \backslash  C^{(n)}_{r_{L_{k}-4m}}(\tilde{\Omega};0)    \\    \mathbf{y} \in C^{(n)}_{r_{L_{k+1}}}(\tilde{\Omega};y)    }}      \avEW{...}  \nn \\
		 &\leq	\Xi_{1}  + \Xi_{2}
	\end{align}
	with
	\begin{align}
		\Xi_{1} 
		&:=  \sup_{| x-y | \geq 2L_{k+1} - m}  \sum_{\substack{ \mathbf{y} \in C^{(n)}_{r_{L_{k} - 4m}}(\tilde{\Omega};y)  \\  \mathbf{x} \in C^{(n)}(\tilde{\Omega};x)  \backslash  C^{(n)}_{r_{L_{k}}}(\tilde{\Omega};x)     }}      \avEW{...}  \\
		\Xi_{2}
		&:=   \sup_{| x-y | \geq 2L_{k+1} - m}  \sum_{\substack{ \mathbf{y} \in C^{(n)}_{r_{L_{k+1}}}(\tilde{\Omega};y)  \\  \mathbf{x} \in C^{(n)}(\tilde{\Omega};x)  \backslash  C^{(n)}_{r_{L_{k}-4m}}(\tilde{\Omega};x)     }}      \avEW{...} .
	\end{align}
	The last inequality a consequence of $y \in \partial^{(m)} \Lambda_{L_{k+1}}$. We proceed by the treatment of $\Xi_{1}$. According to~(A.8) in~Ref.~\onlinecite{AizenmanWarzel2009}
	\beq
		\distH{x}{y} \geq | x-y | - (L_{k} - 4m) \geq \frac{L_{k}}{n-1}
	\eeq
	because
	\beq
		| x-y | \geq 2L_{k+1} - m = 2^2 (L_{k} + 1) - m.
	\eeq
	At the same time, $L_{k} / (n-1)$ serves as a lower bound for $l(\mathbf{x})$, so that $\distH{x}{y}$ and $\max\{ l(\mathbf{x}), l(\mathbf{y}) \}$ are both lower bounded by $L_{k} / (n-1)$. Theorem~\ref{theorem.5.3} thus yields
	\beq
		\avEW{  |  G^{(n)}_{\tilde{\Omega}}(\mathbf{x},\mathbf{y})  |^s} \leq A e^{-\frac{L_{k}}{\xi(n-1)}}
	\eeq
	for the summands appearing in the definition of $\Xi_{1}$. The number of summands can be bounded with the help of the observation that
	\beq
		| C^{(n)}_{r_{L_{k}-4m}}(\tilde{\Omega},y)  | \leq n (4 r_{L_{k}-4m})^{d(n-1)}, \ | C^{(n)}(\tilde{\Omega},x) | \leq n | \tilde{\Omega} |^{n-1}.
	\eeq
	The combination of all these estimation gives
	\beq
		\Xi_{1} \leq \tilde{A}_{1} (r_{L_{k}-4m})^{d(n-1)}  | \tilde{\Omega} |^{n-1}   e^{-\frac{L_{k}}{\xi(n-1)}}.
	\eeq
	Similarly, we get
	\beq
		\Xi_{2} \leq \tilde{A}_{2} (r_{L_{k+1}})^{d(n-1)}  | \tilde{\Omega} |^{n-1}   e^{-\frac{L_{k}}{\xi(n-1)}}.
	\eeq
	Note that for example
	\beq
		(r_{L_{k+1}})^{d(n-1)} \left( \frac{L_{k+1}}{2} \right)^{d(n-1)} <  \left( (2 L_{k+1})^d \right)^{(n-1)} = | \Lambda_{L_{k+1}} |^{n-1}.
	\eeq
	We conclude that
	\beq
		\Delta \leq \tilde{\tilde{A}} \left( L_{k+1}^{dn-1} \right)^{2}  e^{-\frac{L_{k}}{\xi(n-1)}}
	\eeq
	after convenient definitions for $\tilde{\tilde{A}}$ and $\xi$. Going back to~\eqref{B(n)s.Difference}, we find
	\beq\label{pf.of.thm.6.1.temp.1}
		B_{s}^{(n)}(L_{k+1}) - \tilde{B}_{s}^{(n)}(L_{k+1}) 
		\leq	| \partial^{(m)} \Lambda_{L_{k+1}} |^2 \, A \left( L_{k+1}^{dn-1} \right)^{2}  e^{-\frac{L_{k}}{\xi(n-1)}}
	\eeq
	after the convenient redefinition of $A$. One of the factors $| \partial^{(m)} \Lambda_{L_{k+1}} |$ emerges from the replacement of the sum $\sum_{y \in \partial^{(m)} \Lambda_{L_{k+1}}}$ in~\eqref{B(n)s.Difference}. To conclude the proof of the theorem we need to estimate $\tilde{B}_{s}^{(n)}(L_{k+1})$ (cf.~\eqref{splitting.B(n)s.up}). The application of Lemma~\ref{Lemma.6.3} yields the upper bound
	\begin{align}
		\tilde{B}_{s}^{(n)}(L_{k+1})						\leq &	\frac{C}{| \lambda |^s} \, h_{\max}^{2s} \, | \max \mathrm{supp} H_0 |^2             | \partial^{(m)} \Lambda_{L_{k+1}} | \sup_{\substack{I \subseteq \AR \\ | I | \geq 1}} \sup_{\tilde{\Omega} \subseteq \Lambda_{L_{k+1}}}       | \partial^{(m)} \Lambda_{L_{k+1}} |  \\ & \times  \sup_{y \in \partial^{(m)} \Lambda_{L_{k+1}}}     
		\left(    \sum_{\mathbf{x} \in C^{(n)}_{r_{L_{k}}}(\tilde{\Omega};0)} \sum_{u \in \partial^{(m)}(\Lambda_{L_{k}} \cap \tilde{\Omega})} \sum_{\mathbf{w} \in C^{(n)}(\Lambda_{L_{k}} \cap \tilde{\Omega};u)}   \avEW{  |  G^{(n)}_{V}(\mathbf{x},\mathbf{w})  |^s}  \right)    \\       
		&  \times \left(    \sum_{\mathbf{y} \in C^{(n)}_{r_{L_{k}}}(\tilde{\Omega};y)} \sum_{\bar{u} \in \partial^{(m)}(\Lambda_{L_{k}} \cap \tilde{\Omega})} \sum_{\mathbf{v}' \in C^{(n)}(\Lambda_{L_{k}}(y) \cap \tilde{\Omega};\bar{u})}   \avEW{  |  G^{(n)}_{W}(\mathbf{v}',\mathbf{y})  |^s}  \right)  
	\end{align}
	with $V = \Lambda_{L_{k}} \cap \tilde{\Omega}$ and $W = \Lambda_{L_{k}}(y) \cap \tilde{\Omega}$. We continue with
	\begin{align}
		\tilde{B}_{s}^{(n)}(L_{k+1})
		\leq &	\frac{C}{| \lambda |^s} \, h_{\max}^{2s} \, | \max \mathrm{supp} H_0 |^2  | \partial^{(m)} \Lambda_{L_{k+1}} |^2 \sup_{\substack{I \subseteq \AR \\ | I | \geq 1}} \Bigg\{ \\
		& \times \sup_{\tilde{\Omega} \subseteq \Lambda_{L_{k}}}         \bigg\{   \sum_{u \in \partial^{(m)}(\Lambda_{L_{k}} \cap \tilde{\Omega})}   \sum_{\substack{\mathbf{x} \in C^{(n)}_{r_{L_{k}}}(\tilde{\Omega};0)   \\   \mathbf{w} \in C^{(n)}(\Lambda_{L_{k}} \cap \tilde{\Omega};u)   }}    \avEW{  |  G^{(n)}_{\tilde{\Omega}}(\mathbf{x},\mathbf{w})  |^s}     \bigg\} \\
		& \times	\sup_{y \in \partial^{(m)} \Lambda_{L_{k+1}}} \sup_{\tilde{\Omega} \subseteq \Lambda_{L_{k}}}         \bigg\{   \sum_{\bar{u} \in \partial^{(m)}(\Lambda_{L_{k}} \cap \tilde{\Omega})}   \sum_{\substack{\mathbf{y} \in C^{(n)}_{r_{L_{k}}}(\tilde{\Omega};y)   \\   \mathbf{v}' \in C^{(n)}(\Lambda_{L_{k}}(y) \cap \tilde{\Omega};\bar{u})   }}    \avEW{  |  G^{(n)}_{\tilde{\Omega}}(\mathbf{x},\mathbf{w})  |^s}     \bigg\} \Bigg\}.
	\end{align}
	The translation invariance (simultaneous translation of all particles) allows to shift $y \in \mathbb{Z}^d$ to the origin. Therefore,
	\begin{align}
		\tilde{B}_{s}^{(n)}(L_{k+1})
		\leq &	\frac{C}{| \lambda |^s} \, h_{\max}^{2s} \, | \max \mathrm{supp} H_0 |^2  \eta^2 | \partial^{(m)} \Lambda_{L_{k}} |^2 \\
		& \times \Bigg(      \sup_{\substack{I \subseteq \AR \\ | I | \geq 1}} \sup_{\tilde{\Omega} \subseteq \Lambda_{L_{k}}}       \sum_{u \in \partial^{(m)}(\Lambda_{L_{k}} \cap \tilde{\Omega})}   \sum_{\substack{\mathbf{x} \in C^{(n)}_{r_{L_{k}}}(\tilde{\Omega};0)   \\   \mathbf{w} \in C^{(n)}(\Lambda_{L_{k}} \cap \tilde{\Omega};u)   }}    \avEW{  |  G^{(n)}_{\Omega}(\mathbf{x},\mathbf{w})  |^s}     \Bigg)^2
	\end{align}
	where $\eta < \infty$ is defined such that
	\beq
		| \partial^{(m)} \Lambda_{L_{k+1}} |  \leq  \eta | \partial^{(m)} \Lambda_{L_{k}} |
	\eeq
	for all $k$. We can rewrite this bound in terms of the function $k_{s}(\Lambda_{L_{k}}, r_{k}, r_{k})$ (its definition can be found in Corollary 5.4 of the original paper):
	\begin{align}
		\tilde{B}_{s}^{(n)}(L_{k+1})
		\leq &	\eta^2 \frac{C}{| \lambda |^s} \, h_{\max}^{2s} \, | \max \mathrm{supp} H_0 |^2   \left(   B_{s}^{(n)}(L_{k})   + k_{s}(\Lambda_{L_{k}}, r_{k}, r_{k})  | \partial^{(m)}( \Lambda_{L_{k}} \cap \Omega ) |^2  \right)^2 \\
		 \leq &	  \eta^2 \frac{C}{| \lambda |^s} \, h_{\max}^{2s} \, | \max \mathrm{supp} H_0 |^2   \left(   B_{s}^{(n)}(L_{k})   + k_{s}(\Lambda_{L_{k}}, r_{k}, r_{k})  ( m | \partial \Lambda_{L_{k}} | )^2  \right)^2
	\end{align}
	where $m$ denotes the hopping range. The application of the Corollary 5.4 thus gives
	\beq\label{pf.of.thm.6.1.temp.2}
		\tilde{B}_{s}^{(n)}(L_{k+1}) \leq   \eta^2 \frac{C}{| \lambda |^s} \, h_{\max}^{2s} \, | \max \mathrm{supp} H_0 |^2   \left(   B_{s}^{(n)}(L_{k}) + A^2 L_{k}^{2dn-2} e^{-\frac{L_{k}}{(n-1) \xi}}   \right)^2.
	\eeq
	After the appropriate redefinitions of the constants the combination of the bounds~\eqref{pf.of.thm.6.1.temp.1} and~\eqref{pf.of.thm.6.1.temp.2} yields the upper bound of the theorem.
\end{proof}

\begin{lemma}[Lemma 6.3]\label{Lemma.6.3}
	Let $\Omega \subset \mathbb{Z}^d$ and $V,W \in \Omega$ with $\text{dist}(V,W) \geq 2$. Then for all $\mathbf{x} \in \mathcal{C}^{(n)}(V)$ and $\mathbf{y} \in \mathcal{C}^{(n)}(W)$
	\beqa
		\EW{  |  G_{\Omega}^{(n)}(\mathbf{x},\mathbf{y};z)  |^s  }   &\leq&   \frac{C}{| \lambda |^s} \, h_{\max}^{2s} \, | \max \mathrm{supp} H_0 |^2  \sum_{u \in \partial^{(m)} V}  \sum_{\mathbf{w} \in \mathcal{C}^{(n)}(V;u)  }   \EW{  |  G_{V}^{(n)}(\mathbf{x},\mathbf{w};z)  |^s  } \nn \\
												           & &       \times   \sum_{v \in \partial^{(m)} W}  \sum_{\mathbf{v} \in \mathcal{C}^{(n)}(W;v)  }   \EW{  |  G_{W}^{(n)}(\mathbf{v},\mathbf{y};z)  |^s  } 
	\eeqa
	where the constant $C = C(s,d) < \infty$ is independent of $(\lambda, \mathbf{\alpha}) \in \mathbb{R}^p$ (cf.~\eqref{def.wide.inner.bdry} for the definition of $\partial^{(m)}(\cdots)$).
\end{lemma}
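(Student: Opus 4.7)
The plan is to apply the geometric resolvent identity twice — first to ``peel off'' the region $V$ containing $\mathbf{x}$ and then the region $W$ containing $\mathbf{y}$ — and then to decouple the resulting expectation by exploiting the fact that $\mathrm{dist}(V,W) \geq 2$ forces $V \cap W = \emptyset$, so the random potentials inside $V$ are independent of those inside $W$.

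First I would write $H_{\Omega} = H_{V} \oplus H_{\Omega \setminus V} + \Gamma_{V}$, where $\Gamma_{V}$ collects the hopping matrix elements of $H_\Omega$ connecting $\mathcal{C}^{(n)}(V)$ to its complement in $\mathcal{C}^{(n)}(\Omega)$. Since $\mathbf{x} \in \mathcal{C}^{(n)}(V)$ and $\mathbf{y} \notin \mathcal{C}^{(n)}(V)$, the direct-sum Green's function does not connect them and the resolvent identity gives
\[
G_{\Omega}(\mathbf{x}, \mathbf{y};z) = -\sum_{\mathbf{w}_{1}, \mathbf{w}_{2}} G_{V}(\mathbf{x}, \mathbf{w}_{1};z)\, \langle \mathbf{w}_{1} | \Gamma_{V} | \mathbf{w}_{2} \rangle \, G_{\Omega}(\mathbf{w}_{2}, \mathbf{y};z),
\]
where $\mathbf{w}_{1}$ lies in the inner boundary strip $\partial^{(m)} V$ of $V$ and $\mathbf{w}_{2}$ in the corresponding outer strip. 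Applying the same manoeuvre with $W$ to the remaining $G_{\Omega}(\mathbf{w}_{2}, \mathbf{y};z)$ produces an analogous decomposition
\[
G_{\Omega}(\mathbf{w}_{2}, \mathbf{y};z) = -\sum_{\mathbf{v}_{1}, \mathbf{v}_{2}} G_{\Omega}(\mathbf{w}_{2}, \mathbf{v}_{1};z)\, \langle \mathbf{v}_{1} | \Gamma_{W} | \mathbf{v}_{2} \rangle \, G_{W}(\mathbf{v}_{2}, \mathbf{y};z),
\]
yielding a four-factor expansion weighted by two boundary hopping matrix elements, each bounded in modulus by $h_{\max}$.

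Next I would take the $s$-th power with $s \in (0,1)$, use the subadditivity $|a+b|^{s} \leq |a|^{s} + |b|^{s}$ to push it inside both sums, and replace the two boundary matrix elements by the deterministic bound $h_{\max}^{2s}$. Because the hopping has finite range, the sum over $\mathbf{w}_{2}$ for fixed $\mathbf{w}_{1}$ runs over at most $|\max \mathrm{supp} H_{0}|$ configurations and likewise for $\mathbf{v}_{1}$ given $\mathbf{v}_{2}$; carrying out these sums against the delta-like supports of $\Gamma_{V}, \Gamma_{W}$ supplies the geometric factor $|\max \mathrm{supp} H_{0}|^{2}$. The surviving outer sums in $\mathbf{w}_{1}$ and $\mathbf{v}_{2}$ are reindexed by the site $u \in \partial^{(m)} V$ at which the particle touches the boundary and by the remaining $n$-body configuration $\mathbf{w} \in \mathcal{C}^{(n)}(V;u)$, and analogously for $W$, which is exactly the form appearing on the right-hand side of the lemma.

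The decisive step is the decoupling of the expectation. Since $V$ and $W$ are disjoint, $G_{V}(\mathbf{x},\mathbf{w}_{1})$ depends only on $\{V(x) : x \in V\}$ and $G_{W}(\mathbf{v}_{2},\mathbf{y})$ only on $\{V(x) : x \in W\}$, and these are independent families of iid variables. Conditioning on the potentials in $V \cup W$ leaves only the middle factor $|G_{\Omega}(\mathbf{w}_{2}, \mathbf{v}_{1};z)|^{s}$, which I would bound uniformly in the conditioning by the Wegner-type fractional-moment estimate (Theorem~2.1 of Aizenman--Warzel); this produces exactly the factor $C(s,d)/|\lambda|^{s}$. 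Pulling the two remaining expectations apart by independence and reassembling the sums yields the claimed product bound. The main obstacle is confirming that the Wegner-type estimate indeed holds uniformly in the conditioning on the $V \cup W$ potentials; this should follow from the usual rank-one integration argument carried out over any iid potential in $\Omega \setminus (V \cup W)$ adjacent to $\mathbf{w}_{2}$ or $\mathbf{v}_{1}$, whose existence is guaranteed by $\mathrm{dist}(V,W) \geq 2$.
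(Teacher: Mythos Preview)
Your proposal is correct and follows essentially the same route as the paper: two applications of the geometric resolvent identity to extract $G_V$ and $G_W$, then subadditivity of $|\cdot|^s$, the deterministic bound $h_{\max}$ on the boundary matrix elements together with the counting factor $|\max\mathrm{supp}\,H_0|$, and finally the Wegner-type estimate (Theorem~2.1 of Aizenman--Warzel) to kill the middle $G_\Omega$ and factor the expectation. The paper invokes Theorem~2.1 as a black box for the decoupling step rather than spelling out the conditioning argument you sketch, but the content is the same; your reindexing of the boundary sums via $u\in\partial^{(m)}V$, $\mathbf{w}\in\mathcal{C}^{(n)}(V;u)$ is exactly what the paper does in its final step.
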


\begin{proof}
	To remove all the terms in $H_{\Omega}$ that connect sites in $C^{(n)}(V)$ with sites in the complement of $C^{(n)}(V)$ (i.e., $C^{(n)}(\Omega; \Omega \backslash V)$)  and the analog for $V \leftrightarrow W$ we define the boundary-strip operators
	\begin{align}
		\Gamma_{\Omega \backslash V}^{V} &:=	H_{\Omega} - H_{V} \oplus H_{\Omega \backslash V} \\ 
		\Gamma_{\Omega \backslash W}^{W} &:=	H_{\Omega} - H_{W} \oplus H_{\Omega \backslash W}.
	\end{align}
	By assumption $\mathbf{x} \in C^{(n)}(V), \mathbf{y} \in C^{(n)}(W)$ and $V \cap W = \emptyset$. Note that for $\mathbf{x} \in \mathcal{C}^{(n)}(V)$ and $\mathbf{y} \in \mathcal{C}^{(n)}(W) \subseteq C^{(n)}(\Omega; \Omega \backslash V)$ we have for example
	\beq
		\sp{\dirac{x}}{ (H_{V} \oplus H_{\Omega \backslash V} - z)^{-1} \dirac{y} } = 0.
	\eeq
	Therefore, the resolvent identity yields
	\begin{align}
		G_{\Omega}(\mathbf{x}, \mathbf{y};z)	=&	\sp{\dirac{x}}{(H_{\Omega}-z)^{-1} \dirac{y}} \nn \\
										=&	\sp{\dirac{x}}{ -(H_{V} \oplus H_{\Omega \backslash V} - z)^{-1} \Gamma_{\Omega \backslash V}^{V}  (H_{\Omega}-z)^{-1} \dirac{y}} \nn \\
										=&	-\sp{\dirac{x}}{(H_{V} \oplus H_{\Omega \backslash V} - z)^{-1} \Gamma_{\Omega \backslash V}^{V}  (H_{\Omega}-z)^{-1} \Gamma_{\Omega \backslash W}^{W}  (H_{W} \oplus H_{\Omega \backslash W} - z)^{-1}   \dirac{y}} \nn \\
										=&	-\sum_{\substack{   \mathbf{w} \in C^{(n)}(V) \\ \mathbf{w}' \in C^{(n)}(\Omega, \Omega \backslash V)   }}  \sum_{\substack{   \mathbf{v} \in C^{(n)}(W) \\ \mathbf{v}' \in C^{(n)}(\Omega, \Omega \backslash W)   }}
											\sp{\dirac{x}}{(H_{V} \oplus H_{\Omega \backslash V} - z)^{-1} \dirac{w}}   \sp{\dirac{w}}{\Gamma_{\Omega \backslash V}^{V} \dirac{w'}}   \\ & \times  \sp{\dirac{w'}}{(H_{\Omega}-z)^{-1} \dirac{v}}  \sp{\dirac{v}}{\Gamma_{\Omega \backslash W}^{W} \dirac{v'}}  \sp{\dirac{v'}}{(H_{W} \oplus H_{\Omega \backslash W} - z)^{-1}  \dirac{y}} \nn \\
										=&	-\sum_{\substack{   \mathbf{w} \in C^{(n)}(V) \\ \mathbf{w}' \in C^{(n)}(\Omega, \Omega \backslash V)   }}  \sum_{\substack{   \mathbf{v} \in C^{(n)}(W) \\ \mathbf{v}' \in C^{(n)}(\Omega, \Omega \backslash W)   }}
											G_{V}^{(n)}(\mathbf{x},\mathbf{w};z)  \Gamma_{\Omega \backslash V}^{V}(\mathbf{w},\mathbf{w'})  G_{\Omega}^{(n)}(\mathbf{w'},\mathbf{v};z)  \Gamma_{\Omega \backslash W}^{W}(\mathbf{v},\mathbf{v'})  G_{W}^{(n)}(\mathbf{v'},\mathbf{y};z)
 	\end{align}	
	Using Theorem 2.1 of the original paper~\cite{AizenmanWarzel2009} we thus get
	\begin{align}
		\EW{|G_{\Omega}(\mathbf{x}, \mathbf{y};z) |^s }	\leq&		\sum_{\substack{   \mathbf{w} \in C^{(n)}(V) \\ \mathbf{w}' \in C^{(n)}(\Omega, \Omega \backslash V)   }}  \sum_{\substack{   \mathbf{v} \in C^{(n)}(W) \\ \mathbf{v}' \in C^{(n)}(\Omega, \Omega \backslash W)   }}
														| \Gamma_{\Omega \backslash V}^{V}(\mathbf{w},\mathbf{w'}) |^s | \Gamma_{\Omega \backslash W}^{W}(\mathbf{v},\mathbf{v'}) |^s \nn \\ &  \times \EW{ \left| G_{V}^{(n)}(\mathbf{x},\mathbf{w};z)    G_{\Omega}^{(n)}(\mathbf{w'},\mathbf{v};z)    G_{W}^{(n)}(\mathbf{v'},\mathbf{y};z) \right|^s  } \nn \\									
 												\leq&		\frac{C}{| \lambda |^s} \sum_{\substack{   \mathbf{w} \in C^{(n)}(V) \\ \mathbf{w}' \in C^{(n)}(\Omega, \Omega \backslash V)   }}  \sum_{\substack{   \mathbf{v} \in C^{(n)}(W) \\ \mathbf{v}' \in C^{(n)}(\Omega, \Omega \backslash W)   }}
														| \Gamma_{\Omega \backslash V}^{V}(\mathbf{w},\mathbf{w'}) |^s | \Gamma_{\Omega \backslash W}^{W}(\mathbf{v},\mathbf{v'}) |^s \nn \\ &  \times \EW{ \left|  G_{V}^{(n)}(\mathbf{x},\mathbf{w};z)  \right|^s }  \EW{ \left|  G_{W}^{(n)}(\mathbf{v'},\mathbf{y};z)  \right|^s }.
	\end{align}	
	We observe that
	\beq \label{pf.Lemma.6.3.global.1}
		\{  (\mathbf{q}, \mathbf{q'}) \in \mathbb{Z}^{nd} \times \mathbb{Z}^{nd}  | \sp{\dirac{q}}{\Gamma_{\Omega \backslash M}^M \dirac{q'}  }  \neq 0 \} \subseteq \overline{\partial^{(m)} M } \times \overline{\partial^{(m)} M }
	\eeq
	(see~\eqref{pf.Lemma.6.3.global.1asdsd}). Therefore, $\Gamma_{\Omega \backslash W}^{W}(\mathbf{w},\mathbf{w'}) \neq 0$ for $\mathbf{w} \in \overline{\partial^{(m)} V } \cap C^{(n)}(V)$ implies $\mathbf{w'} \in \overline{\partial^{(m)} V } \backslash C^{(n)}(V)$, and $\Gamma_{\Omega \backslash W}^{W}(\mathbf{v},\mathbf{v'}) \neq 0$ for $\mathbf{v'} \in \overline{\partial^{(m)} W } \cap C^{(n)}(W)$ implies $\mathbf{v} \in \overline{\partial^{(m)} W } \backslash C^{(n)}(W)$. We conclude that
	\begin{align}
		\EW{|G_{\Omega}(\mathbf{x}, \mathbf{y};z) |^s }	\leq&		\frac{C}{| \lambda |^s} \sum_{\substack{   \mathbf{w} \in  \overline{\partial^{(m)} V} \cap C^{(n)}(V) \\ \mathbf{w}' \in \overline{\partial^{(m)} V} \backslash C^{(n)}(V)   }}  \sum_{\substack{   \mathbf{v} \in \overline{\partial^{(m)} W} \backslash C^{(n)}(W) \\ \mathbf{v}' \in \overline{\partial^{(m)} W} \cap C^{(n)}(W)   }}
														| \Gamma_{\Omega \backslash V}^{V}(\mathbf{w},\mathbf{w'}) |^s | \Gamma_{\Omega \backslash W}^{W}(\mathbf{v},\mathbf{v'}) |^s \nn \\ &  \times \EW{ \left| G_{V}^{(n)}(\mathbf{x},\mathbf{w};z)  \right|^s}  \EW{ \left| G_{W}^{(n)}(\mathbf{v'},\mathbf{y};z)  \right|^s }.
	\end{align}	
	Note that the product of the two expectation values is independent of $\mathbf{w'}$ and $\mathbf{v}$. Thus, for each $\mathbf{w}$ and $\mathbf{v'}$ we have at most $| \max \mathrm{supp} H_0 |$ non-vanishing terms that are upper bounded by $h_{\max}$ so that
	\begin{align}
		\EW{|G_{\Omega}(\mathbf{x}, \mathbf{y};z) |^s }	\leq&		\frac{C}{| \lambda |^s} h_{\max}^{2s} | \max \mathrm{supp} H_0 |^2   \sum_{   \mathbf{w} \in  \overline{\partial^{(m)} V} \cap C^{(n)}(V) }  \sum_{  \mathbf{v}' \in \overline{\partial^{(m)} W} \cap C^{(n)}(W)   } \EW{ \left| G_{V}^{(n)}(\mathbf{x},\mathbf{w};z)  \right|^s}  \EW{ \left| G_{W}^{(n)}(\mathbf{v'},\mathbf{y};z)  \right|^s }.
	\end{align}	
	Next, we rewrite the sums in the form
	\begin{align}\label{pf.Lemma.6.3.global.sum.rewriting}
		\sum_{   \mathbf{w} \in  \overline{\partial^{(m)} V} \cap C^{(n)}(V) }  |.....| \			&\leq 	\sum_{u \in \partial^{(m)} V} \sum_{\mathbf{w} \in C^{(n)}(V;u)} |.....|		 \nn \\
		\sum_{  \mathbf{v'} \in \overline{\partial^{(m)} W} \cap C^{(n)}(W)   }  |.....| \        	&\leq		\sum_{\bar{u} \in \partial^{(m)} W} \sum_{\mathbf{v'} \in C^{(n)}(W;\bar{u})} |.....|	
	\end{align}	
	to conclude the proof of the Lemma.
\end{proof}

\section{Notations}\label{Notations}

\begin{align*}
	\{ \mathbf{x} \} 
	&=	\left. \left\{ x_{j} \in \mathbb{Z}^d  \right|  j \in \{ 1, ..., n \} \right\} \\
	r_{L}
	&= \frac{L}{2} \\
	\Lambda_{L}
	&=	[-L,L] \cap \mathbb{Z}^d  \\
	C^{(n)}(\Omega)
	&= \{ \mathbf{x} = (x_{1}, ..., x_{n})  \, | \, x_{j} \in \Omega, \forall j \} \\
	C^{(n)}(\Omega, u)
	&= \{ \mathbf{x} \in C^{(n)}(\Omega)  \, | \, x_{j} = u \text{ for some j} \} \\
	C^{(n)}(\Omega, S)
	&= \{ \mathbf{x} \in C^{(n)}(\Omega)  \, | \, x_{j} \in S \text{ for some j} \} \\
	\partial^{(-)} M 
	&= \{   w \in M  |  \min_{q \in M^c } | q - w  | = 1    \}  \\
	\partial^{(m)} M 
	&= \{   w \in M  |  \min_{q \in \partial^{(-)}M } | q - w  | \leq m-1    \}  \\
	\overline{\partial^{(m)} M } 
	&= \left. \left\{   \mathbf{r} = (r_{1},...,r_{n}) \in C^{(n)}(M)  \right| \exists \mathbf{s} \in  \mathbb{Z}^{(nd)} \backslash C^{(n)}(M) \text{ such that }  \max\{ \| s_{i} - r_{j}   \|_{1} \leq m \, | \, i,j \in \{ 1,...,n \} \}  \right\}     \\
								& \ \ \ \ \cup \left.\left\{   \mathbf{r} = (r_{1},...,r_{n}) \in \mathbb{Z}^{(nd)} \backslash C^{(n)}(M) \right|  \exists \mathbf{s} \in  C^{(n)}(M) \text{ such that }  \max\{ \| s_{i} - r_{j}   \|_{1} \leq m \, | \, i,j \in \{ 1,...,n \} \}   \right\}   \\
	\mathrm{diam}(\mathbf{x})
	&=	\max_{j,k \in \{ 1, ..., n \}} | x_{j} - x_{k} | \\
	C^{(n)}_{r}(\Omega)
	&= \{ \mathbf{x} \in C^{(n)}(\Omega) \, | \, \mathrm{diam}(\mathbf{x}) \leq r \} \\
	l(\mathbf{x})
	&=	\max_{\substack{  J,K \\ J \dot{\cup} K = \{ 1, ..., n \}  }}  \min_{j \in J, \, k \in K} | x_{j} - x_{k} | \\
	\mathrm{dist}_{\mathcal{H}}(\mathbf{x}, \mathbf{y}) 
	&=	\max \left\{    \max_{1\leq i \leq k} \text{dist}(x_{i}, \{ \mathbf{y} \} ) ,  \max_{1\leq i \leq k} \text{dist}(\{ \mathbf{x} \}, y_{i} )    \right\} \\
	\distHJK{\mathbf{x}}{\mathbf{y}} 
	&=	\max \left\{    \mathrm{dist}_{\mathcal{H}}(\mathbf{x}_{J}, \mathbf{y}_{J}) ,  \mathrm{dist}_{\mathcal{H}}(\mathbf{x}_{K}, \mathbf{y}_{K})   \right\}  \\
	\max \mathrm{supp} H_{0} 
	&=	\bigcup_{\mathbf{k}} \left[ \mathrm{supp} H_0(\mathbf{k},\cdot) - \mathbf{k} \right] \\
	G_{\Omega}(\mathbf{x}, \mathbf{y};z)
	&=	\sp{\dirac{x}}{ (H_{\Omega} - z)^{-1} \dirac{y} } \\
	\avEW{|G_{\Omega}(\mathbf{x},\mathbf{y})|^s}
	&=	\frac{1}{| I |} \int_{I}  \EW{  |  G_{\Omega}^{(n)}(\mathbf{x},\mathbf{y};E)  |^s } dE
\end{align*}

\section{Explanation of the simulation procedure}\label{sec:numerics}

In this section, we describe the numerics behind the transition from the superfluid phase to the Bose glass phase for non-vanishing defect density. The procedure is a standard finite size scaling study in combination with quantum Monte Carlo simulations.

We consider the system with Hamiltonian (in standard lattice notation)
\begin{equation}
H = -t \sum_{\langle i,j \rangle} b_i^{\dagger} b_j - \sum_i (\mu - \epsilon_i) n_i
\end{equation}
for hard-core bosons with tunneling amplitude $t$ and chemical potential $\mu$. 
The $\epsilon_i$ are iid distributed according to a uniform distribution $[-\Delta, \Delta]$. The unit is the hopping $t=1$.

For densities $n=0$ and $n=1$ the system is a band insulator (empty or full). In the absence of disorder, the system is superfluid for any density at zero temperature.
Our goal is to determine the critical disorder $\Delta_c$ as a function of the defect density for which the system goes over from the superfluid into an insulating Bose glass phase.
The Bose glass phase has the counterintuitive properties that it is a compressible, gapless though gapless phase~\cite{Fisher89}.
Interest is especially towards low densities. The system has a "particle-hole" symmetry, so we only need to look at densities $n < 0.5$.

We will use path integral Monte Carlo simulations that use the worm algorithm~\cite{Prokofev98}, here in the implementation of Ref.~\onlinecite{Pollet07}.
The transition from the superfluid phase to the Bose glass phase is described by the vanishing of the superfluid density $\rho_s$ which is related to the winding number through the formula~\cite{Pollock87}
\begin{equation}
\rho_s = \frac{L^{2-d}\langle W^2 \rangle }{d \beta}.
\end{equation}
This reduces to $\rho_s = \langle W^2 \rangle / (2\beta)$ in 2d.  Note that we know from a general analytical argument
that the Bose glass always intervenes between the superfluid phase and the insulating vacuum or band insulator~\cite{Pollet09}.
The phase diagram of the above Hamiltonian has recently been mapped out for density $n=1$ in 3d for soft-core bosons~\cite{Gurarie09}. In 2d,
the phase diagram has not been published yet.

The quantum phase transition can be determined numerically from a proper finite size scaling analysis by letting the system size $L \to \infty$ and temperature $T \to 0$.
From Ref.~\onlinecite{Fisher89} we know that the dynamical critical exponent $z=d=2$, based on the finite value of the compressibility.
We can then study the quantum phase transition by taking different system sizes $L$ and scaling $\beta \sim L^2$.
We also know the following scaling behavior of the superfluid density in the vicinity of the critical point: 
\begin{equation}
\rho_s = \xi^{-1}f_s( {\xi / L} ) = L^{-1} \tilde{f}_s ( \delta L^{1/\nu}),
\end{equation}
because $\rho_s \sim \vert \delta \vert ^{\nu}$. Here, $\vert \delta \vert$ denotes the dimensionless detuning from the critical point, $\xi$ denotes the correlation length, $f_s$ is a universal scaling function,  and $\nu$ is the critical exponent. The second equality follows from the fact that the correlation length is cut off by the system size if it exceeds the system size.
If we scale $\beta \sim L^2$ then the curves $\rho_s L^2$ for different system sizes should intersect in a single point, provided we are in the scaling regime and that the irrelevant terms are sufficiently weak.
It is computationally advantageous to scale $\beta \sim L$. Then, the curves $\langle W^2 \rangle \sim \rho_s \beta \sim \rho_s L$ will not intersect in a single point. Next, we determine the intersection points between curves for consecutive system sizes, and extrapolate those intersection points to infinity. Both methods should of course lead to the same critical point. 

\begin{figure}
\centerline{\includegraphics[angle=-90, width=0.5\columnwidth]{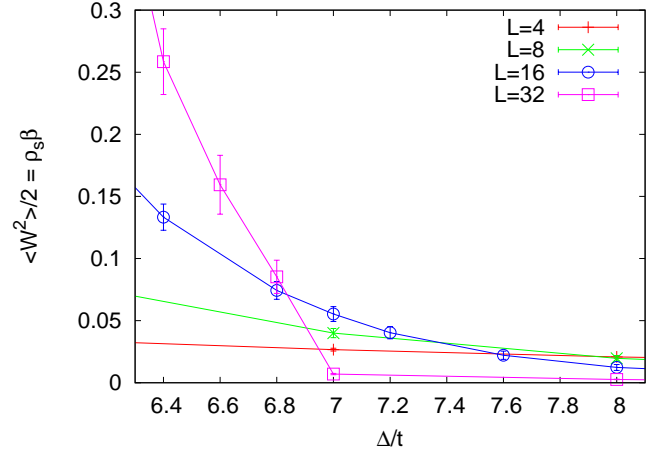}}
\caption{ (Color online). Curve for the square of winding number as a function of the disorder bound for $n=0.125$. Inverse temperature is scaled as $\beta = L^2 / 16$.  }
\label{fig:N0.125}
\end{figure}

\begin{figure}
\centerline{\includegraphics[angle=-90, width=0.5\columnwidth]{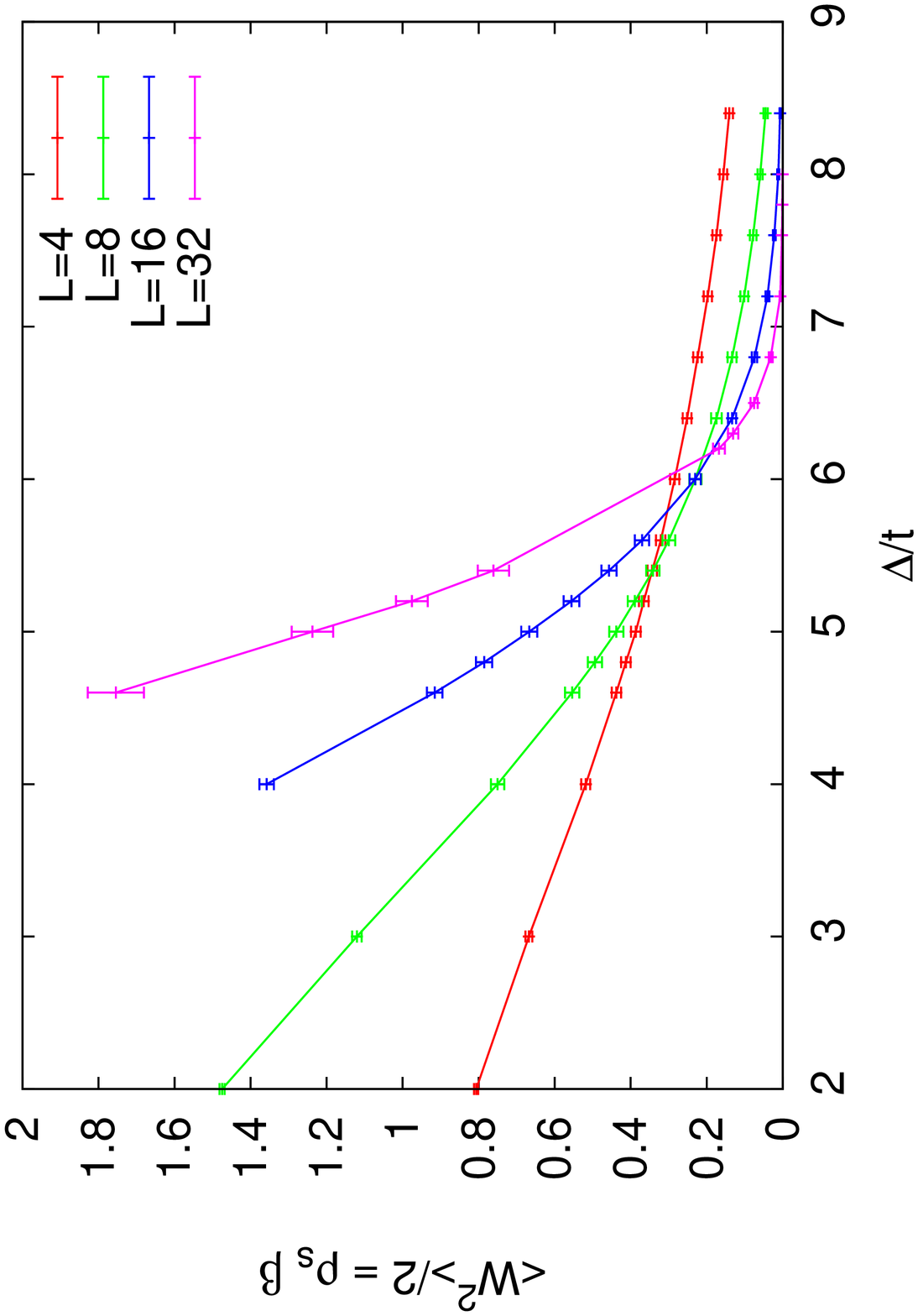}}
\caption{ (Color online). Curve for the square of the winding number as a function of the disorder bound for $n=0.125$. Inverse temperature is scaled as $\beta = L$.  }
\label{fig:N0.125_z1}
\end{figure}

\begin{figure}
\centerline{\includegraphics[angle=-90, width=0.5\columnwidth]{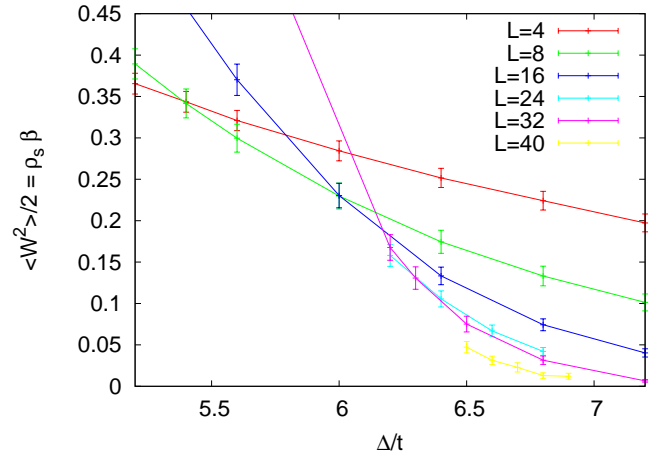}}
\caption{ (Color online). Same as in Fig.~\ref{fig:N0.125_z1}, but zooming in on the relevant intersection points.  }
\label{fig:N0.125_z1_bis}
\end{figure}

\begin{figure}
\centerline{\includegraphics[angle=-90, width=0.5\columnwidth]{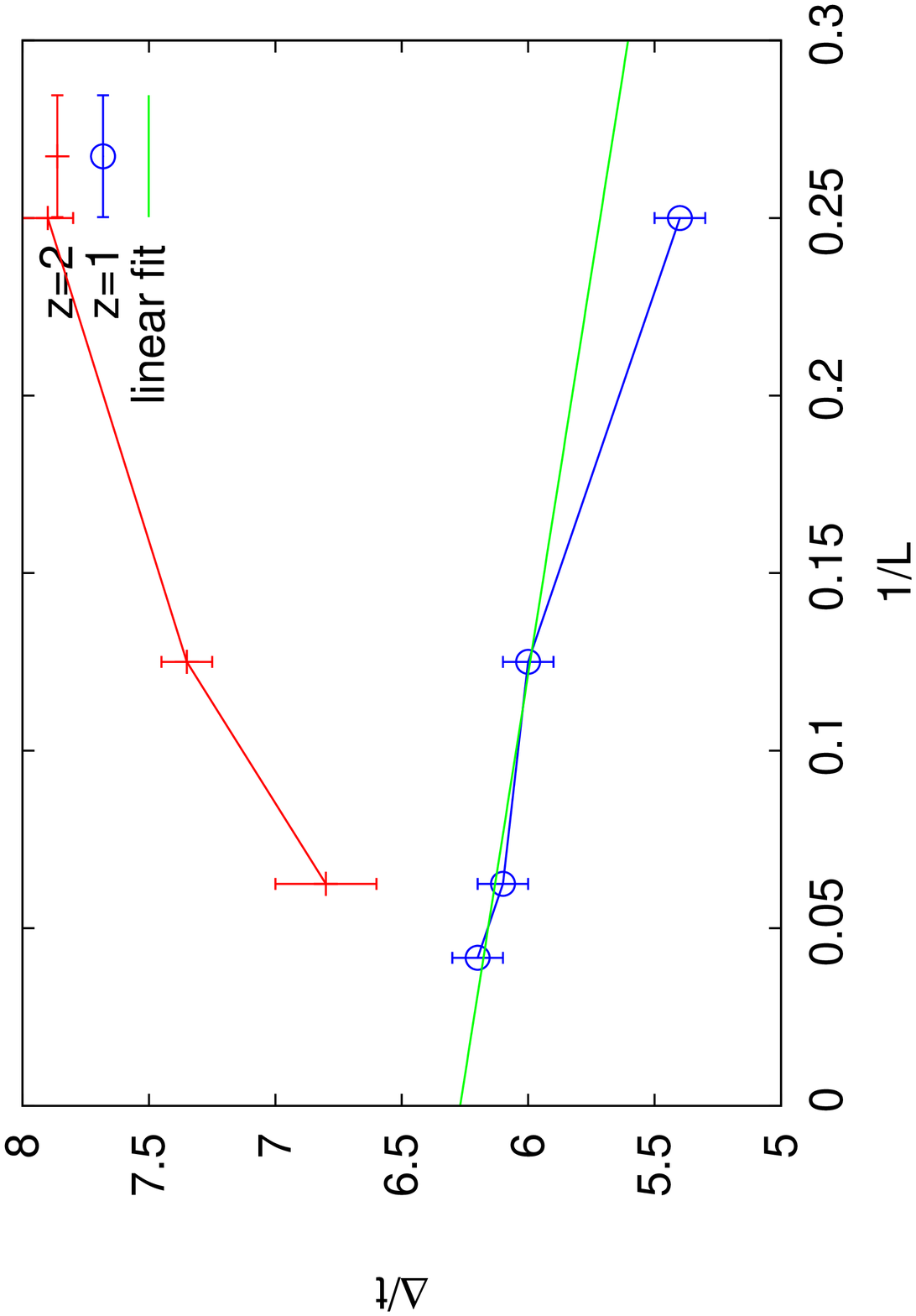}}
\caption{ (Color online).  Intersection points for $n=0.125$.  From these curves, the thermodynamic critical point is found to be $\Delta_c/t = 6.3(2)$. The data for $z=1$ seem compatible with linear extrapolation, except for the smallest system size which is not in the scaling regime. The data for $z=2$ are compatible with this extrapolation.}
\label{fig:cross_N0.125}
\end{figure}

We have studied the quantum phase transition at fixed density as follows. For every disorder realization we determined the chemical potential such that the average number of particles corresponds to our target density. We then computed the superfluid density as a function of the bound $\Delta$ for different system sizes $L=4,8, 16, 24, 32$. We typically averaged over 100 disorder realizations for the largest system sizes. In the first set of simulations, we scaled the inverse temperature as $\beta = L^2/16$, in the second set of simulations we scaled $\beta = L$, such that the simulations are identical for $L=16$. The second set of simulations was needed because we did not find a nice single intersection point in the first set of simulations in general, hinting that our temperatures are too high or that there may be strong finite size effects, which were difficult to overcome with $\beta \sim L^2$. In these simulations it is computationally better to use the scaling $\beta = L$, as well as to have a check on the data and the extrapolations.  We then look for the intersection points between consecutive curves. Representative data are shown in Fig.~\ref{fig:N0.125} for $n=0.125$ for quadratic scaling $(z=2)$ and in Fig.~\ref{fig:N0.125_z1} for linear scaling $(z=1)$. The intersection points are then shown in Fig.~\ref{fig:cross_N0.125} from which the quantum phase transition point can be determined quite accurately as $\Delta_c/t = 6.3(2)$. By repeating this procedure for the densities $n=1/100, n=1/16, n=1/4, n=3/8$ and $n=0.5$ we arrive at the phase diagram shown in Fig.~\ref{fig:phasediagram}.

\begin{figure}
\centerline{\includegraphics[angle=-90, width=0.5\columnwidth]{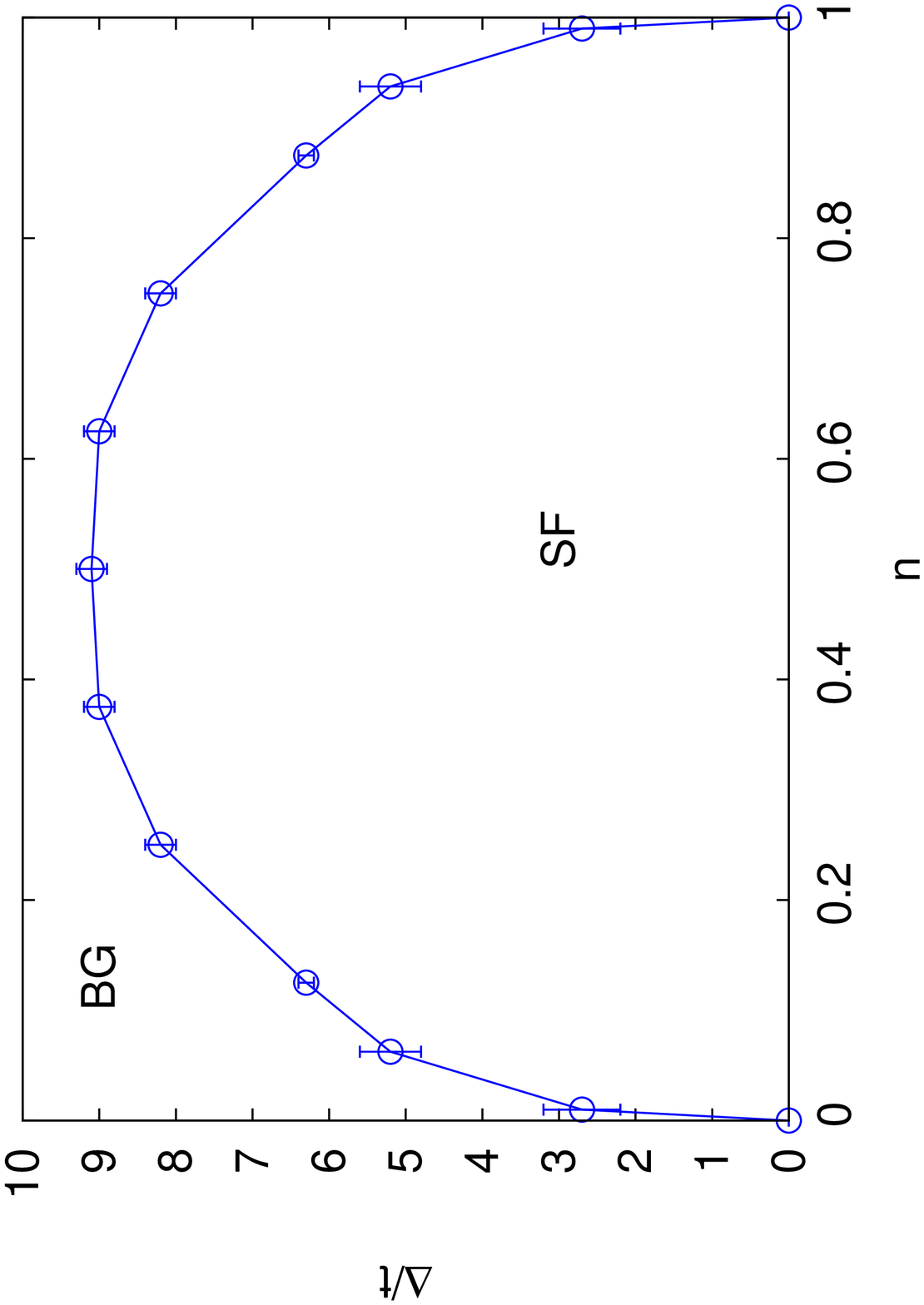}}
\caption{ (Color online). Shown is the transition from the superfluid (SF) to the Bose glass (BG) as a function of the density $n$ for a system of hard-core bosons on a square lattice and no further interactions between them at $T=0$ in the thermodynamic limit. }
\label{fig:phasediagram}
\end{figure}

\end{widetext}



\end{document}